\theoremstyle{plain}
\newtheorem{theorem}{Theorem}[section]
\newtheorem{proposition}[theorem]{Proposition}
\newtheorem{lemma}[theorem]{Lemma}
\newtheorem{corollary}[theorem]{Corollary}
\newtheorem{definition}[theorem]{Definition}
\numberwithin{equation}{section}
\let\oldmarginpar\marginpar
\renewcommand\marginpar[1]{\- \oldmarginpar[\raggedleft\footnotesize #1]%
{\raggedright\footnotesize #1}}
\newcommand \loc {\text{loc}}
\definecolor{myred}{rgb}{0.858, 0.0, 0.0}
\newcommand \bse {\begin{subequations}}
\newcommand \bsel {\begin{subequations} \label}
\newcommand \ese {\end{subequations}}
\newcommand \gbf {\mathbf g}
\newcommand \bei {\begin{itemize}}
\newcommand \eei {\end{itemize}}
\newcommand \fb {\overline f}
\newcommand \be         {\begin{equation}}
\newcommand \bel {\be\label}
\newcommand \eps \epsilon
\newcommand \coeff \kappa
\newcommand \Ocal {\mathcal O}
\newcommand \hb {\overline h}
\newcommand \gb {\overline g}
\newcommand \wb {\overline w}
\newcommand \del \partial
\newcommand \Ric {\mathbf{Ric}}
\newcommand \Acal   {\mathcal A}
\newcommand \RR         {\mathbb R}
\newcommand \ee         {\end{equation}}
\newcommand \la \langle
\newcommand \ra \rangle
\newcommand \Hcal   {\mathcal H}
\newcommand \Kcal   {\mathcal K}
\newtcolorbox{done}[1][]{%
     enhanced, breakable, size=minimal, parbox=false, after={\mbox{} \par}, 
     before upper={\needspace{7\baselineskip} \indent}, colback=white, 
     overlay = {\draw[line width =2pt] (frame.north east) -|
                       ([xshift=3mm]frame.east)|-(frame.south east);},
     overlay first={\draw[line width =2pt] (frame.north east) -|
                           ([xshift=3mm]frame.south east);},
     overlay middle={\draw[line width =2pt] ([xshift=3mm]frame.north east) -- 
                              ([xshift=3mm]frame.south east);},
     overlay last={\draw[line width =2pt] ([xshift=3mm]frame.north east)|-
                          (frame.south east);},
     #1
}
\begin{document}

\title{Global evolution in spherical symmetry for
\\
 self-gravitating massive fields 
} 

\author{Philippe G. LeFloch\footnote{
 Laboratoire Jacques-Louis Lions \& Centre National de la Recherche Scientifique,
Sorbonne Universit\'e, 4 Place Jussieu, 75252 Paris, France.
Email: {\sl contact@philippelefloch.org}
\newline $^\dag$ Centre for Mathematical Analysis, Geometry and Dynamical Systems, Instituto Superior T\'ecnico,  Universidade de Lisboa, Av. Rovisco Pais, 1049–001 Lisbon, Portugal. Email: {\sl filipecmena@tecnico.ulisboa.pt}
\newline 
$^\flat$ Centro de Matem\'atica, Universidade do Minho, Campus de Gualtar, 4710--057 Braga, Portugal.
\newline $^\ddagger$ Institut Montpelli\'erain A. Grothendieck, Universit\'e de Montpelier, Place E. Bataillon, 34090 Montpellier, France.
} \hskip.01cm, 
Filipe C. Mena$^{\dag\flat}$, and The-Cang Nguyen$^\ddagger$}


\date{December 2022}

\maketitle

\begin{abstract} 
We are interested in the global dynamics of a massive scalar field evolving under its own gravitational field and, in this paper, we study spherically symmetric solutions to Einstein's field equations coupled with a Klein-Gordon equation with quadratic potential. For the initial value problem we establish a global existence theory when initial data are prescribed on a future light cone with vertex at the center of symmetry. A suitably generalized solution in Bondi coordinates is sought
which has low regularity and possibly large but finite Bondi mass. A similar result was established first by Christodoulou for massless fields.
In order to deal with massive fields, we must overcome several challenges and significantly modify Christodoulou's original method. First of all, we formulate the Einstein-Klein-Gordon system in spherical symmetry as a non-local and nonlinear hyperbolic equation and, by carefully investigating the global dynamical behavior of the massive field, we establish various estimates concerning the Einstein operator, the Hawking mass, and the Bondi mass, including positivity and monotonicity properties. Importantly, in addition to a regularization at the center of symmetry we find it necessary to also introduce a regularization at null infinity. We also establish new energy and decay estimates for, both, regularized and generalized solutions.
\end{abstract}


\setcounter{secnumdepth}{3}
\setcounter{tocdepth}{1} 
\tableofcontents
 
 \
 
 \newpage  

\section{Introduction}

\subsection{Global evolution of self-gravitating massive matter}

\paragraph{Einstein-Klein-Gordon system.}

We initiate the mathematical study of the global evolution problem for self-gravitating massive scalar fields in spherical symmetry and, specifically, we generalize here an existence result established by Christodoulou for massless fields within a pioneering work beginning with the papers \cite{Chr,Chr2}. Our objective is to extend Christodoulou's method in \cite{Chr2}, which does not immediately apply to massive fields and must be significantly revisited. The main unknown of Einstein's field equations is a spacetime $(M,\gbf)$ 
consisting of a four-dimensional manifold $M$ endowed with a Lorentzian metric $\gbf$. In addition, the matter content is assumed here to be described by a massive scalar field $\phi: M \mapsto \RR$ with a potential $U=U(\phi)$, which evolves under its own gravitational field, according to the Klein-Gordon equation
\bse
\label{equa-KGequa} 
\bel{eq:KG1} 
\Box_{\gbf} \phi - U'(\phi) = 0. 
\ee 
Here, $\Box_{\gbf}$ denotes the wave operator associated with the unknown metric $\gbf$ and, for definiteness,  we assume a quadratic potential (with $c> 0$) 
\be
U(\phi)=\frac{c^2}{2} \phi^2. 
\ee
\ese 
Quadratic potentials are ``critical'' from the standpoint of global dynamics and decay, and commonly used in physical and numerical studies~\cite{Arbey,Bahamonde, Boson, Okawa, Urena}. 

Namely, we seek solutions $(\gbf, \phi)$ to the Einstein equations
\bel{Eq1-01}
\Ric - {R \over 2} \, \gbf = 8\pi \, T[\phi,\gbf], 
\ee
where $\Ric$ denotes the Ricci tensor and $R$ the scalar curvature of the metric. 
The stress-energy tensor $T[\phi,\gbf]$ reads  
\bel{stress} 
T[\phi,\gbf]=d\phi\otimes d\phi -{1 \over 2} \bigl(  \gbf( \nabla\phi, \nabla\phi) +c^2\phi^2 \bigr) \, \gbf. 
\ee
The Klein-Gordon equation coupled to the Einstein equations can be cast in the form of a nonlinear and coupled system of wave and Klein-Gordon equations. In comparison to earlier mathematical work, one of the major challenges that we need to overcome is that \eqref{eq:KG1} is {\sl not invariant under scaling.} Recall that the global evolution for massive fields in the near-Minkowski regime was recently investigated by Ionescu and Pausader \cite{IonescuPausader} and by LeFloch and Ma \cite{LeFloch-Ma-1,LeFloch-Ma-2,LeFloch-Ma-3}. Their result of nonlinear stability do not require spherical symmetry and cover small perturbations of the Minkowski spacetime and is formulated in suitably weighted Sobolev norms. 


\paragraph{Global evolution in spherical symmetry.}

On the other hand, in order to investigate the global evolution problem for ``arbitrary large'' initial data, we find it necessary to focus on the class of spherically symmetric solutions and to introduce a notion of generalized solutions with low regularity. 

The spacetimes $(M,\gbf)$ under consideration here are assume to enjoy spherical symmetry, in the sense that the rotation group $SO(3)$ acts as an isometry group on the spacetime and that the set of fixed-points of this group is a timelike world-line while the orbits of all other spacetime points are spacelike topological two-spheres $\mathbb{S}^2$. In order to uncover the global geometry of such spacetimes, we formulate the initial value problem associated with the Einstein-Klein-Gordon system when data are imposed on a future light cone with vertex at the origin. We follow the strategy proposed in~\cite{Chr,Chr2} and we thus introduce {\sl generalized Bondi coordinates,} which are based on a null coordinate and a retarded time. 
An initial data set is imposed on a complete future null geodesic cone extending to infinity, and is assumed to be asymptotically flat and have finite Bondi mass.  
On the one hand, for {\sl massless fields} enjoying suitable smallness conditions, Christodoulou~\cite{Chr} proved the global existence of {\sl classical solutions} (enjoying sufficient regularity so that all of the terms in the Einstein equations make sense as continuous functions, at least) and established that such solutions disperse in the infinite future. 
On the other hand, Christodoulou~\cite{Chr2} considered ``large'' initial data and established the global existence of {\sl generalized solutions,} defined in Bondi coordinates in a weak sense and covering a complete domain of outer communications.   


\paragraph{Einstein equations in Bondi coordinates.}

We introduce global functions $(u,r)$ on $M$, in which $u \geq 0$ represents a (future) null coordinate and $r \geq 0$ the (square-root of the) area of symmetry of each orbit $\mathbb{S}^2$. In Bondi coordinates the metric reads 
\bel{eq:metricrad}
\gbf= -e^{2\nu} \, du^2 - 2e^{\nu+ \lambda} \, dudr + r^2 \,\gbf_{S^2},
\ee
where the functions $\nu= \nu(u,r)$ and $\lambda= \lambda(u,r)$ depend only on the variables $(u,r)$, and $\gbf_{S^2}$ denotes the canonical metric on the two-sphere. 
From \eqref{Eq1-01} --\eqref{eq:metricrad} it is straightforward to derive the system satisfied by the metric coefficients $\nu,\lambda$ and the scalar field $\phi$, namely 
\bel{Bondi system diff}
\aligned
\del_r(\nu+ \lambda) & = 4\pi r  (\del_r \phi )^2,
\qquad
&&
\del_r\big(re^{\nu- \lambda} \big) 
 =  \big(1 - 4\pi r^{2}c^2 \phi^2  \big) e^{\nu+ \lambda},
\\
D(\del_r(r\phi))
& =  \frac{1}{2} \big(
1 - e^{-2\lambda} - 4r^2 \, c^2 \pi \phi^2
\big) e^{\nu+ \lambda} \del_r\phi- \frac{c^2}{2} r e^{\nu+ \lambda} \phi,
\endaligned
\ee
in which $D$ is the null derivative operator 
\bel{equa-defD} 
D :=  \del_u- \frac{1}{2}e^{\nu- \lambda} \del_r.
\ee
 

We can rewrite the Einstein-Klein-Gordon system in a first-order form, as follows. Let us define the mean value of a function $f=f(u,r)$ by 
\be
\label{mean-value-operator}
\overline f(u,r) := \frac{1}{r} \int_0^r f(u,s) \, ds.
\ee
In terms of the new unknown  $h = \del_r ( r \, \phi)$ or, equivalently,  $\hb := \phi$, we can rewrite \eqref{Bondi system diff} as the {\sl reduced Einstein equation}
\bsel{self-gravitating eq}
\be
\aligned
& D h = \frac{1}{2r}(g- \gb)(h - \hb) - \frac{c^2}{2}r \, \hb \, e^{\nu+ \lambda},%
\endaligned
\ee
which is a first-order, integro-differential equation with unknown $h$ and the coefficients given as suitable integrals of $h$: 
\bel{fund-eqs}
\aligned
\nu+ \lambda& =  -4\pi \, \int_r^{+ \infty} \frac{(h - \hb)^2}{s} \, ds,
\qquad 
g := \big(1-4c^2\pi r^2 \hb^2\big) e^{\nu+ \lambda}. 
\endaligned
\ee 
\ese
Hence, by definition $\gb = e^{\nu- \lambda}$ and we also have the alternative expressions
$h =  \del_r(r \, \hb)$ and $\del_r \hb = \frac{h - \hb}{r}$. 


\paragraph{Notions of mass.}

The {\sl Hawking mass,} available in spherical symmetry, is the function 
\bel{Hawk-mass}
m:= \frac{r}{2} \bigg(1- \frac{\gb}{e^{\nu+ \lambda}} \bigg), 
\ee
and the Bondi mass $M=M(u)$ is the limit (which will be shown to exist) 
\be
M (u) :=\lim_{r \to + \infty}  m(u,r).
\ee
These functions enjoy important monotonicity properties which will be established in the course of this paper and 
play an important role in the analysis of spherically symmetric solutions to the Einstein equations, as first observed in~\cite{Chr2}.
A main difficulty in the mathematical analysis comes from the fact that the unknown metric is not directly
controlled in the interior of the sphere of radius $2M(u)$. For later use, let us also point out the identity 
\bel{eoverbarg}
\frac{e^{\nu+ \lambda}}{\gb} = \left (1- \frac{2m}{r} \right )^{-1}.
\ee
  

\subsection{Global evolution in a class of generalized solutions}

\label{sec-IVP}

\paragraph{Notion of generalized solutions.} 

The global evolution problem under consideration may leads to {gravitational collapse} and this motivates one to search for solutions with singularities.
We thus consider the initial value problem associated with the reduced equation~\eqref{self-gravitating eq} when an initial data $h_0$ is prescribed on the null cone labelled $u=0$, that is, 
\bel{equa-initial}
h\big|_{u=0} = h_0. 
\ee
The spacetime domain is denoted  by 
\be
\Ocal := \big\{ (u,r) \, / \, 0\leq u< + \infty,~0<r< + \infty \big\},
\ee 
while, by definition, its closure $\overline \Ocal$ also includes the center line $r=0$. 
For each $(u_1, r_1) \in \Ocal$, let us introduce the characteristic curve $u \mapsto \chi(u; u_1,r_1)$ 
passing through $(u_1,r_1)$ which is the solution of the ordinary differential equation
\be
\aligned
& \del_u \chi 
= -  \frac{1}{2} \gb,
  \qquad
\chi(u_1; u_1, r_1) =r_1,  \qquad 
 u \in [0, u_1]. 
\endaligned
\ee 
It is also convenient to introduce the domain limited by this characteristic
\bel{equa-Ocalu1r1} 
\Ocal(u_1,r_1) := \big\{ (u,r) \, / \, 0 \leq u<u_1,\, \, 0< r <\chi(u; u_1,r_1)\big\}.
\ee
Recall that from the single function $h$ we have defined $\hb$ as well as $\nu, \lambda, g, \gb$, and $m$.  

\begin{definition}
\label{def-general-sol} 
A {\bf global generalized solution} is a continuously differentiable function $h\in C^1(\Ocal)$ satisfying the evolution equation \eqref{self-gravitating eq} within the spacetime domain $\Ocal$ and the initial condition \eqref{equa-initial}, together with the following properties. 
\begin{enumerate}

\item[1.] {\bf Integrability and regularity conditions.}  The matter field enjoys the integrability conditions 
\bse
\be
h\in L^2(\Ocal), 
\qquad 
\displaystyle{\int_0^{+ \infty} \Big(h^2 + \hb^2 + c^2 r^2\hb^2 \Big) \, dr\in L_\loc^\infty([0, + \infty))}, 
\ee 
while the metric coefficients satisfy  
\be
e^{\nu+ \lambda} + \gb\in C^0(\Ocal), 
\qquad
(e^{\nu+ \lambda}/\gb)(u, \cdot) \in L_\loc^1([0,+ \infty)) \text{ for all } u \geq 0.
\ee
In addition, the following limit exists
\be
\xi:= \lim_{\delta\to 0} \Big(\int^r_\delta \gb \, \frac{h - \hb}{r} \, dr - c^2\int^r_\delta r \, \hb \, e^{\nu+ \lambda} \, dr\Big)
\ee
(which represents the variations of the matter along null directions)
and enjoys the integrability condition
\be
\frac{e^{\nu+ \lambda}}{\gb^2} \,  {\xi^2 \over r} \in L_\loc^1(\overline\Ocal).
\ee
\ese

\item[2.] {\bf Two evolution equations.} The matter field $\hb$ is weakly differentiable in $\Ocal$ and evolves according to 
\bse
\be
D\hb= \frac{\xi}{2r}, 
\ee
while the Hawking mass $m$ (defined in \eqref{Hawk-mass}) is weakly differentiable in $\Ocal$ and evolves according to 
\be
\label{mass-evol-eq}
D m = -  e^{-(\nu+ \lambda)} \pi \xi^2 - \pi c^2  r^2\hb^2\gb.
\ee
\ese

\item[3.] {\bf Integral identity.} One also requires that, for all $(u_1, r_1) \in \Ocal$, 
\bel{integral identity 0}
\aligned
& \int_0^{r_1} \Big(\frac{e^{\nu+ \lambda}}{\gb} \Big)(u_1,r) \, dr
+ 2\pi \, \iint_{\Ocal(u_1,r_1)} \frac{e^{\nu+ \lambda}}{\gb^2} \frac{\xi^2}{r} \, dudr
 + \frac{1}{2} \int_0^{u_1}e^{(\nu+ \lambda)(u,0)} \, du
  = \int_0^{r_{0}} \Big(\frac{e^{\nu+ \lambda}}{\gb} \Big)(0,r) \, dr.
\endaligned
\ee
\end{enumerate}
\end{definition}

 The equation \eqref{integral identity 0} is derived from the integration of the expression for $D(\int_0^r (1-2m/r)^{-1} \, dr)$ along characteristic curves (cf.~\eqref{eoverbarg}). Since the point $(u_1, r_1) \in \Ocal$ is arbitrary, it provides us with a control of the geometry in the full domain of outer communication, namely the causal past of future null infinity. As a result of the integration along characteristics, the identity \eqref{integral identity 0} relates an integral over the initial future light cone $u=0$ (on the right-hand side) to the sum of three positive integral terms: 
 (1) the first integral (over the future light cone $u=u_1$) concerns the interior of the sphere of radius $2M(u_1)$ (which can not be controlled by the Bondi mass, as pointed out in \cite{Chr2}), 
 (2) the second integral is a spacetime integral and will be crucial to prove that generalized solutions satisfy the mass equation \eqref{mass-evol-eq}, 
 and (3) the third  integral is a boundary contribution over the central line $r=0$ (which will be analyzed within our proof).
 

\paragraph{Main statement.}

Our main result can be interpreted as a global nonlinear stability property of the exterior domain of communications of Schwarzschild spacetime, under under arbitrarily large perturbations by a massive matter field. 

\begin{theorem}[Global evolution of self-gravitating Klein-Gordon fields in spherical symmetry] 
\label{thrgensol}
For every initial data $h_0\in C^1([0, + \infty))$ with finite Bondi mass posed on an outgoing light cone with vertex at the center of symmetry, 
there exists a global generalized solution $h$ to the Cauchy problem \eqref{self-gravitating eq} and \eqref{equa-initial}
 describing the spherically symmetric evolution of a self-gravitating Klein-Gordon field with quadratic potential.
\end{theorem}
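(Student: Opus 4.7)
The plan is to construct the generalized solution as the limit of a family of smooth approximate solutions obtained by a two-fold regularization, following the general architecture of Christodoulou's approach in \cite{Chr2} but substantially adapted to the massive setting. First, I would regularize the initial datum $h_0$ and the geometric operator near the center $r=0$ (as in the massless case) and, crucially, introduce a second regularization cutting off the massive contribution in the region $r \geq R$ for some large parameter $R$. This second cutoff is needed because the term $\tfrac{c^2}{2} r \, \hb \, e^{\nu+\lambda}$ in \eqref{self-gravitating eq} grows in $r$ and destroys the large-$r$ smallness available in the massless problem. For each pair $(\eps,R)$ of regularization parameters, I would establish local-in-time existence of a classical solution $h^{\eps,R}$ by a contraction mapping argument applied to the integro-differential form \eqref{self-gravitating eq}--\eqref{fund-eqs}, working in a function space that jointly controls $h$, $\hb$, and the auxiliary quantities $\nu+\lambda$ and $\gb$.

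Next, I would extract a priori estimates uniform in $(\eps,R)$. The key ingredients are: (i) monotonicity of the Hawking mass along outgoing null directions obtained by integrating \eqref{mass-evol-eq} and exploiting the positivity of $e^{-(\nu+\lambda)}\pi \xi^2 + \pi c^2 r^2 \hb^2 \gb$; (ii) a uniform upper bound on $\nu+\lambda$ coming from the explicit formula in \eqref{fund-eqs} together with a uniform $L^2$ bound on $(h-\hb)/\sqrt s$; (iii) the fundamental integral identity \eqref{integral identity 0}, which alone provides control of the interior region inside the sphere $r = 2M(u)$ that cannot be recovered from the Bondi mass; and (iv) pointwise growth control of $r \, \hb$ that compensates for the non-scale-invariant massive contribution. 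Using the positivity and monotonicity properties of the Einstein operator and the Bondi mass, I would then show that the local solutions extend to all $u \in [0,+\infty)$ for each fixed $(\eps,R)$.

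With uniform estimates in hand, the next step is to pass to the limit, first $R \to +\infty$ and then $\eps \to 0$, by a compactness argument. One expects $h^{\eps,R}$ to converge weakly in $L^2(\Ocal)$, the mean $\hb^{\eps,R}$ to converge strongly on compact subsets of $\Ocal$ by a Hardy-type gain in $r$, and the metric coefficients $e^{\nu+\lambda}$, $\gb$ to converge in $C^0$ on compact subsets of $\Ocal$. One must then verify each item in Definition \ref{def-general-sol}: the integrability conditions follow from lower semicontinuity applied to the uniform bounds; the evolution equations for $\hb$ and for $m$ follow by passing to the distributional limit, where the product structure $(g-\gb)(h-\hb)$ requires the strong convergence of $\gb$ and $\hb$ together with the weak convergence of $h$; and the integral identity \eqref{integral identity 0} is preserved in the limit because each term is either a lower-semicontinuous non-negative quantity or a continuous functional of strongly converging coefficients.

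The hard part will be the lack of scale invariance introduced by the Klein-Gordon potential. This manifests in two places: controlling the boundary contribution on the central line $r=0$ (the third term in \eqref{integral identity 0}) requires a delicate balance between the massless flux and a pointwise bound on $\hb$, and the behaviour near $r = +\infty$, where the massive term prevents the dispersive decay available in the massless case, makes the removal of the $R$-cutoff genuinely nontrivial. Overcoming both forces one to introduce the auxiliary flux variable $\xi$ and to impose its integrability $e^{\nu+\lambda} \xi^2 / (\gb^2 \, r) \in L^1_{\loc}(\overline \Ocal)$ as part of the very definition of a generalized solution, and to verify this integrability uniformly along the approximation sequence — which is where the spherically symmetric structure and the monotonicity of the Hawking mass will be used most heavily.
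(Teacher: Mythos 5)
Your overall architecture matches the paper's: a regularization at the center ($r\mapsto r+\eps$) combined with a cutoff of the matter field beyond a large radius, a priori estimates driven by the monotonicity of the Hawking mass and the integral identity \eqref{integral identity 0}, and a double limit (cutoff radius to infinity first, then $\eps\to 0$) in which the flux $\xi$ enters the very definition of a generalized solution. There are, however, two concrete gaps.

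First, the local existence step. You propose a contraction mapping argument on the integro-differential system, but this is exactly the point where the massless strategy breaks down: the paper states (citing Chae) that Christodoulou's fixed-point iteration does not work for massive fields. The solution operator $h\mapsto k$, where $k$ solves the linear transport equation along the characteristics generated by $\gb_h$, is not a contraction in $C^1$: comparing solutions transported along two different characteristic families costs a derivative, and the massive source $\tfrac{c^2}{2}(r+\eps)\,\hb\, e^{\nu+\lambda}$ destroys the decay that rescues the iteration in the massless case. The paper instead proves that this operator is continuous and \emph{compact} (via a $C^2$ bound on $k$) and applies Schauder's theorem, with a homotopy parameter $t$ and an artificial positive term $a(1-t)$ inserted into $g$ so that the positivity $\gb>0$ --- which for massive fields is \emph{not} automatic from the structure of the equations --- can be propagated through the mass evolution equation (Lemma \ref{lemma gb> 0}). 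Your proposal does not address how $\gb>0$ is maintained during the construction, and without it the characteristics are not even well defined.

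Second, the mode of convergence. Weak convergence of $h^{\eps,R}$ in $L^2(\Ocal)$ is too weak: the coefficient $e^{\nu+\lambda}=\exp\bigl(-4\pi\int_r^{+\infty}(h-\hb)^2 s^{-1}\,ds\bigr)$ is a quadratic-then-exponential functional of $h$, and $g-\gb$ in the product $(g-\gb)(h-\hb)$ likewise depends quadratically on $h$, so these terms do not pass to a weak limit. The paper obtains \emph{uniform convergence on compact subsets} of $h_\eps$, $\del_r h_\eps$ and $\hb_\eps$ via equicontinuity (Lemmas \ref{lemma equi-continuity of delr h} and \ref{equi-continuity of hb}) combined with uniform tail smallness from the energy estimates; only the flux $\xi_\eps$ is treated by weak $L^2$ compactness, as you correctly describe for it. Repairing your argument requires replacing weak compactness for $h$ itself by an Arzel\`a--Ascoli argument, which in turn needs the pointwise and derivative bounds of Lemmas \ref{lemma-uniform bound of h and delr h with e} and \ref{prop-lims} that your sketch does not supply.
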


We emphasize that it is beyond the scope of the present paper to investigate qualitative properties of the generalized solutions beyond the ones stated in Definition~\ref{def-general-sol}, and this issue is postponed for future work. 


\paragraph{Related results.} 

Soon after Christodoulou's work, Goldwirth and Piran \cite{Goldwirth-Piran} also considered massless scalar fields
for a class of Gaussian initial data, and numerically investigated the formation of black holes. 
Choptuik \cite{Choptuik} numerically discovered a (self-similar) critical phenomena in the collapse of massless scalar fields, while
Christodoulou \cite{Chr-naked} proved rigorously that {\sl naked singularities} (with singularities ``visible'' from null infinity) 
 can form as result of the collapse of a scalar field, but however are not generic. 
 
These theoretical and numerical advances on massless fields next inspired more numerical works, including a numerical investigations of
massive fields. Brady et al. \cite{Brady, Brady-etal} studied massive fields governed by a quadratic potential for a class of initial data sets with exponential decay. A class of quartic potentials was also studied numerically by Honda~\cite{Honda}, who studied the stability of the collapse of oscillating solutions. 
Along similar lines, Zhang and Lu \cite{Zhang-Lu} considered the spherically-symmetric collapse of nonlinear massless scalar field with self-interactions
and first treated hyperbolic potentials, namely, exponentially decaying functions with powers of odd exponents.  More recently, Michel and Moss \cite{Michel-Moss} numerically studied trigonometric potentials, and emphasized the role of 
the Bondi mass in defining the final black hole mass: its variation is a measure of the energy flux emanating from the collapsing star.

As far as the mathematical analysis is concerned, only few works took over Christodoulou's framework and extended his results. Chae \cite{Chae} proved the global existence of classical solutions with small data for the Einstein-massive field system with quartic or quintic potentials; he treated
 initial data with polynomial decay and relied on a fixed-point method adapted to the potentials under consideration. 
 Extensions of this result to the Maxwell, Yang-Mills, and Higgs fields were given in~\cite{Chae2}.
More recently, Costa et al.~\cite{Costa-etal,Costa-Mena} considered the Einstein-massless field system with a positive cosmological 
constant and established the global existence of classical solutions for a class of initial data sets.


\paragraph{Outline of this paper.}

In Section~\ref{sec-section2} we present a regularization of the Einstein-Klein-Gordon system, which includes both a regularization at the center and a regularization at null infinity, and we discuss some feature of our method of proof. In Section~\ref{sec-section3} we establish the existence of global classical solutions for the $(\eps,b)$-regularized problem, and prove suitable Lipschitz continuity and sign properties. 
Geometric estimates on the Hawking mass and the energy of the Klein-Gordon fields are then derived in Section~\ref{sec-section4}. The existence of global classical solutions to the $\eps$-regularized problem is the, completed in Section~\ref{sec-section5}
In Section~\ref{sec-section6} we are finally in a position to derive several global estimates enjoyed by generalized solutions. 
Finally, the mass equation plays a key role in the completion of the proof in Section~\ref{sec-section7}. 


\section{A formulation based on two regularizations}
\label{sec-section2}
 
\subsection{Regularization at the center} 

We proceed by constructing first smooth solutions to a regularized version of the reduced Einstein equation. We observe that the fixed-point method used by Christodoulou does not work for massive fields (a fact that was pointed out in \cite{Chae}), 
and we thus propose a different strategy which is based on the introduction of a continuous and compact operator. 
Following Christodoulou \cite{Chr2} we modify the mean value operator \eqref{mean-value-operator} and, given a parameter $\eps \in (0,1)$, we set  
\be
\fb_\eps(u,r) := \frac{1}{r+ \eps} \int_0^rf_\eps(u,s) \, ds
\ee 
for any integrable function $f_\eps: [0, +\infty) \mapsto \RR$. 
 
We are going to study first the {\bf $\eps$-regularized reduced Einstein equation}  
\bse\label{epsilon self-gravitating eq}
\bel{epsilon self-gravitating eq-0}
\aligned
 D_\eps h_\eps= \frac{1}{2(r+ \eps)}(g_\eps- \gb_\eps)(h_\eps- \hb_\eps) - \frac{c^2}{2}(r+ \eps) e^{\nu_\eps + \lambda_\eps} \hb_\eps,
\qquad\quad
 h_\eps\big|_{u=0} = h_0,
\endaligned
\ee
in which $h_0\in C^1([0,+ \infty)$ is prescribed on the initial cone $u=0$ and 
\be
\aligned
\label{g-equation}
D_\eps &:=  \del_u- \frac{1}{2} \gb_\eps \del_r,
\qquad
&& 
g_\eps :=  \big(1-4c^2\pi (r+ \eps)^2 \hb_\eps^2\big) e^{\nu_\eps + \lambda_\eps},
\\
\gb_\eps & =  e^{\nu_\eps- \lambda_\eps},
&&
\nu_\eps + \lambda_\eps 
=  -4\pi \, \int_r^{+ \infty} \frac{(h_\eps- \hb_\eps)^2}{s+ \eps} \, ds.
\endaligned
\ee
\ese 
We also introduce the $\eps$-Hawking mass 
\bel{bondi-mass-eps}
m_\eps := \frac{r+ \eps}{2} \bigg(1- \frac{\gb_\eps}{e^{\nu_\eps + \lambda_\eps}} \bigg) 
\ee
and the $\eps$-Bondi mass 
\be  
M_\eps (u) :=\lim_{r \to + \infty}  m_\eps(u,r).
\ee

For the regularized problem \eqref{epsilon self-gravitating eq} we will establish estimates that are valid up to the central line $r=0$. 
Our estimates for the regularized problem involve the following ``norms'' determined by the initial data $h_0$: 
\bel{def A}
\aligned
\Acal_\eps
(u,r)
&:= \sup_{r' \in [r, +\infty)} \big| \int_0^r h_0 \, ds \big|
+ \sqrt{\frac{u \,  M_\eps(0)
}{2\pi}},
\\ 
\Hcal_\eps
(u)
&:= 
%
{33 \over 4} 
\int_0^u \max\Big( 4\pi (\Acal_\eps
(u_1,0))^4,
\, 
e^{-2\sqrt{2} \pi^{1/4}} \Big) \, du_1,
\\
\Kcal_\eps
(u)
&:=\frac{17}{\pi} \int_0^u \max\big( {c^2e^{-2},}
M_\eps(u_1)
\big) \, du_1.
\endaligned
\ee
More precisely, the latter function depends upon the Bondi mass $M_\eps(u_1)$ which can be bounded above by the initial Bondi mass $M_\eps(0)$.  
We will establish  the following result, in which a subscript $0$ is used for functions determined from the initial data $h_0$, while the subscript $\eps$ indicates that the regularized mean is applied. 

\begin{theorem}[Existence theory for the $\eps$-regularized reduced equation]
\label{lemma-global epsilon existence}
Assume that the initial $\eps$-Bondi mass $M_{\eps,0}$ is finite and $\gb_{\eps,0}> 0$ in the interval $(0, +\infty)$. For any $\eps \in (0,1)$, the $\eps$-regularized  problem \eqref{epsilon self-gravitating eq} admits a global classical solution $h_\eps$ which satisfies, in the spacetime domain $\Ocal$, 
\bel{equa-1234}  
(1) 
\quad \gb_\eps> 0, 
  \qquad \qquad 
(2) 
\quad 
(r+ \eps) \, |\hb_\eps | \leq \Acal_\eps,
\ee
while for all times $u \geq 0$, 
\be
\label{energy-int}
\aligned
& (3) 
\quad 
&
\int_0^{+ \infty}h^2_\eps (u, \cdot) \, dr
&&&\leq 
\int_{0}^{+ \infty}h_0^2 \, dr + \frac{u}{8\pi}
+c^2 \Hcal_\eps(u), 
\\
& (4) 
\quad 
&
\int_0^{+ \infty}(h_\eps- \hb_\eps)^2(u, \cdot) \, dr 
&&&\leq \int_0^{+ \infty}(h_0- \hb_0)^2 \, dr + \frac{u}{4\pi} +c^2 \Hcal_\eps(u)+ \Kcal_\eps(u), 
\\ 
& (5) 
\quad 
&
c^2\int_0^{+ \infty}(r+ \eps)^2\hb_\eps^2(u, \cdot) \, dr 
&&& \leq {M_\eps(0) \over 2 \pi} 
\endaligned
\ee
\end{theorem}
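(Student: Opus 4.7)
My plan is to first establish global existence for a further truncated problem obtained by also regularizing at null infinity (the $(\eps,b)$-regularized equation introduced in Section~\ref{sec-section2}), and then to pass to the limit $b \to + \infty$ using estimates that are uniform in $b$. The key difficulty, which is not present in the massless case, is the source $\frac{c^2}{2}(r+\eps) e^{\nu_\eps + \lambda_\eps} \hb_\eps$ in \eqref{epsilon self-gravitating eq-0}: its linear growth in $r$ obstructs any direct contraction argument on $[0, +\infty)$, so an additional $b$-cutoff is essential, and Christodoulou's fixed-point scheme must be replaced by a compactness-based one.

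For the $(\eps,b)$-regularized problem, I would construct a continuous and compact operator $\mathcal T$ on a closed convex subset of $C([0,U] \times [0,b])$ as follows: given an input $h$, rebuild $\nu_\eps + \lambda_\eps$ via \eqref{fund-eqs}, rebuild $\gb_\eps$ and $g_\eps$ via \eqref{g-equation}, solve the ODE for the characteristic $\chi$, and define $\mathcal T h$ by integrating \eqref{epsilon self-gravitating eq-0} along $\chi$ starting from $h_0$. Because of the $\eps$-regularization the factor $1/(r+\eps)$ is uniformly bounded, and because of the $b$-cutoff all integrals in $r$ are over a bounded interval; a Schauder-type argument then yields a fixed point on a short time slab, with propagation of $C^1$-regularity. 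Continuation in $u$ will then be driven by the a priori estimates listed in the theorem.

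The heart of the proof is deriving the bounds (1)--(5) \emph{simultaneously} and uniformly in $U$ and $b$. For (1), positivity of $\gb_\eps$ follows by integrating the analogue of the second equation in \eqref{Bondi system diff}, starting from the assumption $\gb_{\eps,0}>0$ and controlling the sign of $(1-4\pi(s+\eps)^2 c^2 \hb_\eps^2)$ via the pointwise bound (2). For (2), integrating the transport equation for $(r+\eps)\hb_\eps$ along characteristics produces the initial contribution $\sup_{r'\geq r}\big|\int_0^{r'}h_0\,ds\big|$, while the Klein--Gordon source contributes a term that Cauchy--Schwarz, combined with the integral bound (5), controls by $\sqrt{u\,M_\eps(0)/(2\pi)}$. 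The bound (5) is obtained via monotonicity of the $\eps$-Hawking mass: differentiating $m_\eps$ along $D_\eps$ yields the $\eps$-analogue of \eqref{mass-evol-eq}, in which the potential contribution $\pi c^2 r^2\hb^2\gb$ appears with the right sign; integration then combines with \eqref{bondi-mass-eps} to give the initial Bondi-mass bound $M_\eps(0)/(2\pi)$.

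For the energy estimates (3) and (4), I would multiply \eqref{epsilon self-gravitating eq-0} by $h_\eps$ (resp.\ by $h_\eps-\hb_\eps$) and integrate in $r$; the transport operator $D_\eps$ provides, after integration by parts and use of the $b$-truncated behavior at infinity, a coercive energy on the left-hand side. The bilinear terms in $(g_\eps-\gb_\eps)$ are handled as in the massless framework of \cite{Chr2}, whereas the new Klein--Gordon contribution is a remainder of size $c^2 \int (r+\eps)|\hb_\eps||h_\eps|\,dr$ which is absorbed, via (2) and (5), into the correction terms $c^2\Hcal_\eps(u)$ and $\Kcal_\eps(u)$ appearing in the statement. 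The main obstacle is precisely the circular coupling between (2)--(5): each of these bounds depends on one of the others, so they must be closed together by a bootstrap argument tracking all five quantities along $u$. Once uniform bounds in $(U,b)$ are secured, the continuation argument gives a global $(\eps,b)$-solution, and a standard Arzel\`a--Ascoli compactness argument applied to $h_\eps$, $\hb_\eps$, and the reconstructed coefficients lets us pass to $b\to+\infty$, yielding the global $\eps$-solution asserted by the theorem.
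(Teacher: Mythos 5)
Your overall architecture --- a $b$-cutoff at null infinity, a Schauder fixed point on a short slab, a priori bounds (1)--(5) closed simultaneously with the Hawking-mass monotonicity as the engine, then $b\to+\infty$ by compactness --- is the same as the paper's. But there is a genuine gap in your argument for (1). You propose to obtain $\gb_\eps>0$ by integrating $\del_r\big((r+\eps)\gb_\eps\big)=g_\eps$ and ``controlling the sign of $1-4\pi c^2(r+\eps)^2\hb_\eps^2$ via the pointwise bound (2).'' That cannot work for the large data covered by the theorem: bound (2) only gives $(r+\eps)|\hb_\eps|\le\Acal_\eps$, and $\Acal_\eps$ is built from $\sup_{r'}\big|\int_0^{r'}h_0\,ds\big|$ and $\sqrt{u\,M_\eps(0)/2\pi}$, neither of which is forced to be below $1/(2c\sqrt{\pi})$. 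Hence $g_\eps$ may be negative on a large set, and the spatial integration yields no sign information on its mean $\gb_\eps$. This is exactly the point the paper identifies as the main departure from the massless case. The actual proof (Lemma \ref{lemma gb> 0}) is a continuity-in-$u$ contradiction argument: at a first hypothetical zero $(u_*,r_*)$ of $\gb$, one computes $\del_u\lambda_h$ from the evolution equation of the Hawking mass, observes that the dangerous terms assemble into non-positive perfect squares, deduces a finite upper bound for $\lambda_h$ on $[0,u_*)\times[0,r_*]$, and concludes that $\gb=e^{\nu+\lambda}e^{-2\lambda}$ is bounded away from zero --- a contradiction. Without some mechanism that propagates positivity in $u$ rather than trying to read it off in $r$, claim (1) --- and with it the characteristics, the solution operator, and the whole continuation argument --- is not justified.

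A secondary issue: the fixed-point operator as you describe it is too naive, because during the iteration the reconstructed $g$ and $\gb$ need not be positive, so the characteristic ODE and the invariance of your convex set are not guaranteed. The paper circumvents this by inserting a homotopy parameter $t$, a large auxiliary constant $a$ in $g_{a,t,h}$, and a cut-off $\rho$ on the velocity, removing these crutches only after Lemma \ref{lemma gb> 0} is established. Your treatment of (2)--(5) and of the energy estimates is otherwise in the same spirit as the paper's: (5) from $\del_r m_\eps\ge 0$ plus the monotone decay of the Bondi mass, (2) from integrating the mass equation along characteristics and Cauchy--Schwarz (note the relevant square-integral there is a $u$-integral along the characteristic controlled by $M_\eps(0)$, not the spatial integral (5)), and (3)--(4) from integration by parts in $r$ with the Klein--Gordon remainder absorbed into $\Hcal_\eps$ and $\Kcal_\eps$.
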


In agreement with our earlier notation, we also set 
\bel{definition of xi}
\xi_\eps:=2(r+ \eps)D_\eps \hb_\eps= \int_0^r\gb_\eps \frac{(h_\eps- \hb_\eps)}{s+ \eps} \, ds
- c^2\int_0^r(s+ \eps) \, \hb_\eps \, e^{\nu_\eps + \lambda_\eps} \, ds. 
\ee
For each $(u_1, r_1) \in \Ocal$, we have denoted by $u \mapsto \chi_\eps(u; u_1, r_1)$ the characteristic curve satisfying 
\be
\aligned
& \del_u \chi_\eps = -  \frac{1}{2} \gb_\eps,
\qquad 
\chi_\eps(u_1;u_1,r_1) =r_1,  \qquad 
 u \in [0, u_1]. 
\endaligned
\ee
In addition to  the spacetime domain \eqref{equa-Ocalu1r1}, we also work with 
\be
\Ocal_{\eps,\delta}(u_1,r_1)
:= \big\{ (u,r) \, / \, 0<u<u_1,\,\delta<r<\chi_\eps(u; u_1, r_1) \big\}, 
\ee 
in which the parameter $\delta\in (0,1)$ will be taken to tend to zero. %

\begin{corollary}[Evolution of the $\eps$-Hawking mass] 
\label{coroll-thr-21}
Under the conditions in Theorem~\ref{lemma-global epsilon existence}, the solution $h_\eps$ also satisfies the following properties.
\bei
\item[(6)] 
The $\eps$-Hawking mass satisfies the following evolution equation in $\Ocal$: 
\bel{evol-eps-mass}
D_\eps m_\eps= -  \frac{4\pi}{e^{\nu_\eps + \lambda_\eps}}(r+ \eps)^2(D_\eps \hb_\eps)^2- c^2\pi(r+ \eps)^2\hb_\eps^2\gb_\eps.
\ee

\item[(7)] For each $\delta > 0$ and any point $(u_1, r_1)$ with $r_1> \delta$ the following integral identity   
holds (with $r_{0,\eps} := \chi_\eps(0; u_1, r_1)$)
\bel{integral identity}
\hskip-.2cm
\int_{\delta}^{r_1} \Big(\frac{e^{\nu_\eps + \lambda_\eps}}{\gb_\eps} \Big)(u_1,r) \, dr + 2\pi \, \iint_{\Ocal_{\eps,\delta}(u_1,r_1)} \Big(\frac{e^{\nu_\eps + \lambda_\eps}}{\gb_\eps^2} \frac{\xi^2_\eps}{(r+ \eps)} \Big) \, drdu
 + \frac{1}{2} \int_0^{u_1} \hskip-.2cm
 e^{(\nu_\eps + \lambda_\eps)(u,\delta)} \, du
   = \int_{\delta}^{r_{0,\eps}} \Big(\frac{e^{\nu_\eps + \lambda_\eps}}{\gb_\eps} \Big)(0,r) \, dr.
\ee 
 
\eei
\end{corollary}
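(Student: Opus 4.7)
The plan is to establish (6) first by a direct pointwise computation from the constraint and evolution equations of the regularized system, and then to deduce (7) by integrating the quantity $e^{\nu_\eps + \lambda_\eps}/\gb_\eps = e^{2\lambda_\eps}$ along the past characteristic cone $r = \chi_\eps(u; u_1, r_1)$, using (6) to recognize the bulk integrand.

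For (6), I would start from $m_\eps = \frac{r+\eps}{2} - \frac{(r+\eps)\gb_\eps}{2e^{\nu_\eps + \lambda_\eps}}$ and compute $\partial_r m_\eps$ by combining the two constraint-type identities $\partial_r(\nu_\eps + \lambda_\eps) = 4\pi(h_\eps - \hb_\eps)^2/(r+\eps)$ (obtained by differentiating the integral definition in \eqref{g-equation}) and $\partial_r((r+\eps)\gb_\eps) = g_\eps$ (inherited from the regularized Einstein system); the explicit formula for $g_\eps$ then yields $\partial_r m_\eps = 2\pi c^2(r+\eps)^2 \hb_\eps^2 + 2\pi \gb_\eps (h_\eps - \hb_\eps)^2/e^{\nu_\eps + \lambda_\eps}$. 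To compute $D_\eps m_\eps = \partial_u m_\eps - (\gb_\eps/2)\partial_r m_\eps$, I would differentiate the integral formula for $\nu_\eps + \lambda_\eps$ under the integral sign (justified by the $L^2$-control on $h_\eps - \hb_\eps$ from Theorem~\ref{lemma-global epsilon existence}), eliminate $\partial_u h_\eps$ via the reduced equation \eqref{epsilon self-gravitating eq-0}, substitute the identity $D_\eps \hb_\eps = \xi_\eps/(2(r+\eps))$ from \eqref{definition of xi}, and finally use an integration by parts so that the non-local cross-terms telescope down to exactly \eqref{evol-eps-mass}.

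For (7), I would introduce $I(u) := \int_\delta^{\chi_\eps(u; u_1, r_1)}(e^{\nu_\eps + \lambda_\eps}/\gb_\eps)(u, r)\, dr$ so that \eqref{integral identity} amounts to
\[
I(u_1) - I(0) \;=\; -2\pi \iint_{\Ocal_{\eps, \delta}(u_1, r_1)} \frac{e^{\nu_\eps + \lambda_\eps}}{\gb_\eps^2}\,\frac{\xi_\eps^2}{r+\eps}\, du\, dr \;-\; \tfrac{1}{2}\int_0^{u_1} e^{(\nu_\eps + \lambda_\eps)(u, \delta)}\, du,
\]
and compute $I'(u)$ by Leibniz's rule with $\chi'_\eps = -\gb_\eps/2$. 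Decomposing $\partial_u = D_\eps + (\gb_\eps/2)\partial_r$ inside the integrand and integrating the $\partial_r$-contribution by parts, using $\partial_r \gb_\eps = (g_\eps - \gb_\eps)/(r+\eps)$ and $\gb_\eps \cdot (e^{\nu_\eps + \lambda_\eps}/\gb_\eps) = e^{\nu_\eps + \lambda_\eps}$, the boundary contribution at $r = \chi_\eps(u)$ exactly cancels the Leibniz term, and one is left with the boundary term $-\tfrac{1}{2} e^{(\nu_\eps + \lambda_\eps)(u, \delta)}$ at $r = \delta$ together with a bulk integrand $D_\eps(e^{2\lambda_\eps}) - (g_\eps - \gb_\eps) e^{\nu_\eps + \lambda_\eps}/(2(r+\eps)\gb_\eps)$. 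Expanding $D_\eps(e^{2\lambda_\eps}) = e^{4\lambda_\eps}\bigl[ 2 D_\eps m_\eps/(r+\eps) + m_\eps \gb_\eps/(r+\eps)^2 \bigr]$, substituting \eqref{evol-eps-mass}, and using $g_\eps - \gb_\eps = e^{\nu_\eps + \lambda_\eps}\bigl[ 2 m_\eps/(r+\eps) - 4 c^2 \pi (r+\eps)^2 \hb_\eps^2 \bigr]$ (an identity that follows from the definition of $g_\eps$ and the relation \eqref{eoverbarg} transposed to the regularized setting), the $m_\eps/(r+\eps)^2$- and $c^2 \hb_\eps^2$-contributions cancel in pairs, and the integrand reduces to $-2\pi e^{\nu_\eps + \lambda_\eps} \xi_\eps^2/(\gb_\eps^2(r+\eps))$. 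Integrating in $u$ from $0$ to $u_1$ then yields \eqref{integral identity}.

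The hard part will be the proof of (6): differentiating the non-local quantity $\nu_\eps + \lambda_\eps$ with respect to $u$ over an unbounded $r$-range must be justified via the integrability and Bondi-mass finiteness furnished by Theorem~\ref{lemma-global epsilon existence}, and the ensuing conspiracy of cancellations that reduces the cross-terms to the clean right-hand side of \eqref{evol-eps-mass} is the delicate point of the whole argument. By contrast, once (6) is in hand, the proof of (7) is essentially an algebraic verification through the chain of cancellations sketched above.
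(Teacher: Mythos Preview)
Your proposal is correct and follows essentially the same route as the paper. For (6) the paper likewise computes $\partial_r m_\eps$ from the constraints and then obtains $D_\eps m_\eps$ by differentiating the integral defining $\nu_\eps+\lambda_\eps$ in $u$ (the justification of this swap is precisely the paper's Lemma~\ref{lem-swap}, which plays the role you assign to the $L^2$-control from Theorem~\ref{lemma-global epsilon existence}); for (7) the paper performs the same Leibniz/integration-by-parts computation you outline, first recording the pointwise identity $D_\eps\big(e^{\nu_\eps+\lambda_\eps}/\gb_\eps\big) = -2\pi\,e^{\nu_\eps+\lambda_\eps}\xi_\eps^2/\big(\gb_\eps^2(r+\eps)\big) + \tfrac12(\partial_r\gb_\eps)\,e^{\nu_\eps+\lambda_\eps}/\gb_\eps$ and then integrating along the characteristic, which is exactly your cancellation scheme rearranged.
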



\subsection{Regularization at null infinity} 
\label{sec-formulation-fixed-point}

In order to establish Theorem~\ref{lemma-global epsilon existence}, another regularization will be also needed, in which 
we disregard first the initial matter field  {\sl beyond} some large radius (cf.~Section~\ref{sec-formulation-fixed-point}) and we 
work in the interval $[0, b+2]$ (for some positive $b$).
 Thanks to a sign condition enjoyed the characteristic curves (which follows from the attractive nature of gravity)
 we can check that the regularized solutions are then also supported in the same interval. 
Several smaller intervals will arise in our analysis, including $[0,b]$ and, in our presentation, it is convenient to assume $b \geq 1$.  
We then split the spacetime domain into slab of finite proper distance, namely we study the solution $h_{\eps, b} = h_{\eps, b}(u,r)$ successively in 
\be 
(u,r) \in [n,n+c_1]\times[0,b+ 2]
\ee
for integers $n=0, 1, \ldots$. Indeed, we will prove a local existence result covering any slab of the form $[n, n+c_1]$ for arbitrarily large initial data and for a {\sl sufficiently small }$c_1> 0$: more precisely $1/c_1$ 
will be chosen to be of the order of the Klein-Gordon mass $c$. By induction, we can apply this local existence result in order to cover the whole time line $u \geq 0$. 

Importantly, we will guarantee that the characteristics propagate inwards, so that the support of the solution $h_{\eps, b}$ will remain within the spatial interval $[0,b+ 2]$ for all times. For his property it will be crucial to prove that 
\be
\overline g_{\eps, b} > 0. 
\ee
Furthermore, in the following analysis, we will also introduce spacetime domains based on the smaller intervals $[0,b]$ and $[0,b+1]$, since some of our arguments will provide the existence of a solution defined in $[0,b+2]$ but enjoying the desired properties only in slightly smaller intervals. (Recall that the parameter $b$ will go to $+\infty$ eventually.)  

  
\subsection{Challenges overcome in this paper}

Our main focus in this paper is on the definition and existence of generalized solutions to the Einstein equations in spherical symmetry.  
For the large data problem under consideration, a singularity may arise at the center of symmetry and it is thus natural to introduce first 
a regularization at the center of symmetry, based on replacing $r$ by $r+ \eps$ in the equations and then letting the 
parameter $\eps \in (0,1)$ tend to zero eventually. A main part of the analysis will thus be devoted to establish the existence of regularized solutions and to prove that the enjoy the desired properties {\sl uniformly} with respect to $\eps \to 0$. 

Another main difficulty is proving that the generalized solutions satisfy not only the evolution equation,
 but also the mass equation \eqref{mass-evol-eq}, which is (equivalent to) one of the Einstein equations and, therefore, must be validated also at the level of generalized solutions.  A related major difficulty comes from the fact that the mass does not control 
the geometry in the interior of the spheres with radius $2m(r)$, so that 
specific arguments must be developed in order to control the solution in the region near the center of symmetry.

As a matter of fact, several properties and arguments in Christodoulou's proof~\cite{Chr, Chr2} do not carry over from the massless to the massive matter problem.
 
\bei

\item {\bf Positivity and finiteness properties of the mass.} In the massless case $c=0$, the positivity of the function $\gb$ and the finiteness of the Bondi mass are immediate; moreover, they impose no further constraint on the sequence of $\eps$-regularized solutions and, in turn, the evolution equation for the Hawking mass is {\sl not directly required} in order to deal with terms involving the metric coefficient $\gb$.  In contrast, for the massive fields having $c \neq 0$, the positivity property $\gb> 0$ 
(as is also done for the finiteness of the Bondi mass) does not follow from the evolution equations but {\sl 
must be propagated} from assumptions made on the initial hypersurface. 

\item {\bf A consistent construction scheme.} For massless fields ($c=0$), the functions $Dh$ and $D(\del_r h)$ have the same decay rate as the one of the functions $h$ and $\del_r h$, respectively. For these functions, the fall-off at null infinity is preserved during the evolution and  this observation provides Christodoulou a basis for constructing a sequence of functions converging to the desired solution $h$ and, in turn, these features made the construction relatively direct.  
 On the other hand, for massive fields ($c \ne 0$), the scaling invariance is no longer available and 
the functions $Dh$ and $D(\del_r h)$ might decay at slower rates than $h$ and $\del_r h$, respectively. 
Hence, one of the challenges is to guarantee, {\sl simultaneously} in the process of showing the existence of (regularized or exact) solutions, 
(1)  a consistent fall-off for the unknown function $h$, 
 (2) the positiveness of the coefficient $\gb$, and (3) the finiteness of the Bondi mass. 

\item {\bf Cut-off method to cope with null infinity.}
 For massless fields, Christodoulou proved that $h$ and $Dh$ are $C^1$-regular outside the central axis $r=0$. On the other hand, 
for massive fields, the function $Dh$ is not uniformly bounded in the $C^1$ norm in the vicinity of null infinity.
This leads to an additional difficulty in the construction of a sequence of regularized solutions and the convergence analysis toward an actual solution. By taking advantage of the attractive nature of gravity and there a ``favorable'' sign on the characteristics of the reduced Einsteni equation, we introduce here a cut-off technique: we solve the equations first on a compact spacetime region and then rely on the finiteness of the Bondi mass in order to extend this solution. While Christodoulou deals with regularized solutions defined for all $r$ and a compact interval in $u$, we construct regularized solutions defined on a {\sl bounded} space interval $[\delta, b]$ first, before letting $\delta \to 0$ and $b \to + \infty$. 

\item {\bf Fixed-point argument based on the Schauder theorem.}
Our method of proof then requires the construction of a solution operator which is proven to be continuous and compact and, next, the application pf Schauder's fixed-point theorem. A major advantage of the proposed strategy is that, throughout our analysis for, both, the approximate and the exact solutions, we may employ {\sl properties derived from the evolution equations,} including the evolution equations for the Hawking mass. Throughout our analysis, the finiteness of the Bondi mass plays a crucial role. 

\item {\bf Convergence toward generalized solutions.}
Finally, for massless fields, Christodoulou established sup-norm estimates on the solution $h_\eps$ to the regularized problem 
before proving that it converges uniformly to a solution $h$ of the original problem. On the other hand, for massive fields 
such sup-norm estimates are not available and, in order to prove that $h_\eps \to h$, we develop here an alternative argument based on  the weaker control provided by energy estimates.

\eei  
 

\section{Existence for the $(\eps,b)$-regularized problem} 
\label{sec-section3}

\subsection{Main statement on the regularized problem} 
\label{sec-formulation-fixed-point}

\paragraph{Functional setup.}

Throughout this section, a small parameter $\eps \in (0,1)$ is fixed and, for simplicity in the presentation, we suppress the subscript $\eps$ from our notation. Our objective is to apply Schauder's fixed-point theorem the regularized problem, namely, if $X$ is a Banach space and $T:  X \to  X$ is a continuous and compact operator whose image $T(X)$ is bounded, then $T$ admits a fixed-point.
We focus on a spacetime slab limited by two light cones who proper distance is determined by the constant
\bel{notation c1}
c_1:= \min \Big(1,\frac{1}{20c^2} \Big), 
\ee
In order to regularize our reduced equation at null infinity, we also fix a parameter 
\be
b \geq 1, 
\ee  
which will eventually be taken to tend to zero. We consider the Banach space $(X_{b},\,\| \cdot \|_{C^1})$ of all continuously differentiable functions $h=h(u,r)$ endowed with the standard norm $\| \cdot \|_{C^1}$ and vanishing out side the interval $[0, b+2]$, namely
\be
X_{b}:= \bigg\{h \in C^1\big([0,c_1]\times[0,+ \infty)\big)
\, \big/ \,  h(u,r) =0\quad\text{for all $r\geq b+ 2$} \bigg\}. 
\ee


\paragraph{Regularization of the metric.}

In order to guarantee that the expected expression of the metric coefficient $g = \big(1-4c^2\pi r^2 \hb^2\big) e^{\nu+ \lambda}$ formally given in \eqref{fund-eqs} remains {\sl positive} throughout our fixed-point analysis, we proceed as follows. 
Given a sufficiently large parameter $a> 0$, to any pair $(t,h) \in [0,1]\times X_b$ we associate the function
\bel{gath}
g_{a,t,h}(u,r) := \Big( 1-4\pi  \big(c^2 \, (r+ \eps)^2\hb^2-a(1-t)\big)\Big) \, e^{\nu_h+ \lambda_h}.
\ee
We also introduce a smooth function $\rho$ satisfying 
\be
\rho(x)
:= 
\begin{cases}
x, & x \geq 0,
 \\ 
\in [-1, 0]& x < 0, 
\end{cases} 
\ee
and we regularize the family of characteristics as follows. For each point $(u_1, r_1) \in [0,c_1]\times [0,+ \infty)$, we consider the curve $u \mapsto \chi_{a,t, h}(u; u_1,r_1)$ satisfying  
\be
\aligned
& \del_u\chi_{a,t, h} = -  \frac{1}{2} \rho\big(\gb_{a,t,h} \big), 
\qquad 
 \chi_{a,t,h}(u_1; u_1,r_1) =r_1, 
 \qquad 
 u \in [0, u_1]. 
\endaligned
\ee
 

\paragraph{Regularization of the matter field.}

For each $(t,h) \in [0,1]\times X_b $, there exists a unique $k\in C^1$ satisfying the (linear!) transport equation
\bel{main equation}
\aligned
D_{a,t,h} k - \frac{t}{2(r+ \eps)} \big(g_{a,t,h} - \gb_{a,t,h} \big)k
 = \Omega_{a, t, h}, 
\qquad
k\big|_{u=0} = h_0, 
\endaligned
\ee
in which 
\be
\aligned
D_{a, t, h} 
& := \del_u- \frac{1}{2} \rho(\gb_{a,t,h})\del_r,
\\
\Omega_{a, t, h} 
& := - \frac{1}{2(r+ \eps)} \big(g_{a,t,h} - \gb_{a,t,h} \big) \, \hb
   - \frac{c^2}{2}(r+ \eps) \, \hb \, e^{\nu_{h} + \lambda_{h}},
   \endaligned
\ee
This solution however may not have compact support within the interval $[0, b+2]$. So, we define the {\bf solution mapping operator} $\Psi_b : [0,1]\times X_b \to  X_b$ by 
\bel{equa-defpsi}
\Psi_b(t,h) := \eta(r) k, 
\ee
where $\eta$ is a (smooth) cut-off function satisfying
\be 
\eta(r) 
\begin{cases}
= 1, \quad & r < b+1,  
\\
> 0, & r \in (b+1, b+ 2), 
\\
= 0,  & r\geq b+ 2. 
\end{cases} 
\ee


\paragraph{Main statement for this section.} 

We are going to establish the following result. 

\begin{proposition}[Existence theory for the $(\eps,b)$-regularized problem]
\label{lemma Psi cc}
The solution mapping operator  $\Psi_b$ is continuous and compact. 
Consequently, given any data $h_0$ satisfying $\gb_{h_0}> 0$, the solution mapping operator 
$\Psi_b(1,.)$ (with $t=1$) admits a fixed-point $h$. In addition, on the slightly smaller spacetime domain $[0,c_1]\times (0,b+1]$ the positivity property $\gb_h> 0$ holds. 
\end{proposition}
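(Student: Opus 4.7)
My plan is to apply the Leray--Schauder version of Schauder's fixed-point theorem to the homotopy $t \mapsto \Psi_b(t,\cdot)$. For each $(t,h) \in [0,1] \times X_b$, the integral expression for $\nu_h+\lambda_h$ converges because $h$ has compact support in $[0,b+2]$ and the averaging $h \mapsto \hb$ gains regularity; hence $\nu_h+\lambda_h$, $g_{a,t,h}$, and $\gb_{a,t,h}$ belong to $C^1$ in $(u,r)$. The regularized characteristic $\chi_{a,t,h}$ is then globally defined since $\rho(\gb_{a,t,h})$ is uniformly bounded, and the linear transport problem \eqref{main equation} admits a unique $C^1$ solution $k$ by the method of characteristics; multiplying by the cut-off $\eta$ yields $\Psi_b(t,h) \in X_b$.

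Continuity of $\Psi_b$ in $(t,h)$ follows from the continuous dependence of the coefficients of \eqref{main equation} on $h$, combined with continuous dependence of ODE flows on parameters. For compactness, the key structural fact is a gain of one derivative in $r$: when $h$ ranges over a bounded subset of $X_b$, the averaging and the integral $\int_r^{+\infty}(h-\hb)^2/(s+\eps)\,ds$ produce coefficients that are bounded in $C^2([0,c_1]\times[0,b+2])$, so solving the transport equation along $C^2$ characteristics yields $k$ bounded in $C^{1,1}$ on this compact set. Arzel\`a--Ascoli then provides precompactness of $\Psi_b([0,1]\times B)$ in $X_b$ for every bounded $B$.

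To invoke Leray--Schauder, I would next establish that the set of all fixed-points of $\Psi_b(t,\cdot)$ for $t \in [0,1]$ is uniformly bounded in $X_b$. This a priori bound is derived from energy-type estimates applied along characteristics (in the spirit of \eqref{energy-int} but for the $t$-homotopy), exploiting the smallness of $c_1$ prescribed in \eqref{notation c1} and the boundedness of $\rho$; the parameter $a$ must be chosen large enough so that $g_{a,0,h} > 0$ for all $h$ in the bounded set, anchoring the homotopy at $t=0$. Leray--Schauder then furnishes a fixed-point $h$ of $\Psi_b(1,\cdot)$. Finally, on the inner region $\{r \leq b+1\}$ the cut-off is trivial, so $h = k$ satisfies the unregularized equation \eqref{epsilon self-gravitating eq-0}; integrating the $\eps$-analogue of $\del_r(r e^{\nu-\lambda}) = (1-4\pi r^2 c^2 \phi^2) e^{\nu+\lambda}$ from the center, combined with the initial positivity $\gb_{h_0} > 0$ and the a priori control of $(r+\eps)^2 \hb^2$, propagates $\gb_h > 0$ across the slab $[0,c_1]\times(0,b+1]$. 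The main obstacle I foresee is the uniform a priori bound on fixed-points in the third step: it must be strong enough to yield the compactness needed for Leray--Schauder while simultaneously guaranteeing that the quantity $4\pi c^2(r+\eps)^2\hb^2$ stays strictly below $1$ throughout the slab, so that the positivity argument closes and the $\rho$-regularization becomes invisible in the interior.
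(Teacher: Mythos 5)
Your first two steps are essentially the paper's: the transport problem \eqref{main equation} is linear in $k$, so well-posedness along the regularized characteristics is immediate, and compactness is obtained exactly as you say, by a $C^2$-in-$r$ bound on $k$ followed by Arzel\`a--Ascoli (Section~\ref{Lipschitz} and Appendix~\ref{appendix-one}). The fixed-point framework differs mildly: you invoke Leray--Schauder and therefore need an a priori bound on the solution set of $h=t\Psi_b(t,h)$, whereas the paper uses the plain Schauder theorem after showing (Lemma~\ref{lemma the bound of Psi}) that the \emph{entire image} $\Psi_b([0,1]\times X_b)$ is bounded in $C^1$. That stronger statement comes for free from the linearity of \eqref{main equation} together with the damping factor $e^{\nu_h+\lambda_h}\le e^{-4\pi\int_r(h-\hb)^2/(s+\eps)\,ds}$, which makes quantities such as $(r+\eps)^2\hb^2\,e^{\nu_h+\lambda_h}$ and $\hb\,e^{\nu_h+\lambda_h}$ bounded uniformly over all of $X_b$; no separate energy estimate along the homotopy is needed. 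Your route could be made to work, but the a priori bound you defer to ``energy-type estimates in the spirit of \eqref{energy-int}'' is precisely the hard part you have not supplied, and the paper deliberately avoids it at this stage.

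The genuine gap is in your last step, the positivity $\gb_h>0$. You propose to integrate the radial equation $\del_r\big((r+\eps)e^{\nu-\lambda}\big)=g$ ``from the center, combined with the initial positivity $\gb_{h_0}>0$ and the a priori control of $(r+\eps)^2\hb^2$.'' This cannot close: $\gb_{h_0}>0$ is a hypothesis on the cone $u=0$, not at $r=0$, so radial integration at a fixed $u>0$ tells you nothing unless you already know $g=\big(1-4\pi(c^2(r+\eps)^2\hb^2-a(1-t))\big)e^{\nu_h+\lambda_h}$ is positive, i.e.\ unless you have the pointwise bound $4\pi c^2(r+\eps)^2\hb^2<1$ that you yourself flag as the ``main obstacle.'' No such pointwise bound is available (or established in the paper) for massive fields, and indeed $g$ may be negative somewhere; only its mean $\gb$ is shown to stay positive. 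The paper's Lemma~\ref{lemma gb> 0} proceeds instead by a continuation/contradiction argument in $u$: assuming a first vanishing point $(u_*,r_*)$ of $\gb_{a,t,h}$, one derives the evolution identity \eqref{evolution of m with (t,epsilon)} for the Hawking mass $m_h$, whose right-hand side is non-positive; this yields the one-sided bound $\del_u\lambda_h\le \pi\gb_{a,t,h}(h-\hb)^2/(r+\eps)+\pi c^2(r+\eps)\hb^2e^{\nu_h+\lambda_h}$, hence a finite upper bound for $\lambda_h$ up to $u_*$, contradicting $e^{-2\lambda_h}(u,r_*)\to 0$. This use of the mass equation is exactly the new ingredient the authors single out as unavoidable in the massive case, and it is absent from your argument.
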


The proof of the first part of Proposition~\ref{lemma Psi cc} is postponed to Appendix~\ref{appendix-one} and relies on a direct application of the properties of the solution mapping operator.
 In the massless case~\cite{Chr}, the fact that $\gb_h> 0$ {\sl was an immediate consequence} of the structure of the Einstein  equations for {\sl massless fields.} 
 For massive field, this property is not immediate and requires the use of the evolution equation for the mass function. 
We single out here two lemmas, and we provide here the main arguments of proof while referring to Appendix~\ref{appendix-one} for further details. 

\begin{lemma}[$C^1$ stability of the solution mapping] 
\label{lemma the bound of Psi}
For any data $h_0$, there exists a positive constant $Q(\| h_0 \|_{C^1}, a, b, \eps)$  
such that 
$$
\sup_{(t,h) \in [0,1]\times X_b} \|\Psi_b(t,h)\|_{C^1} \leq Q(\| h_0 \|_{C^1}, a, b, \eps).
$$
\end{lemma}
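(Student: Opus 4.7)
My plan is to use the method of characteristics, which applies directly since \eqref{main equation} is a linear first-order equation in $k$ whose coefficients depend continuously on the fixed pair $(t,h)\in[0,1]\times X_b$. First I would control all coefficients on the compact slab $[0,c_1]\times[0,b+2]$. From the negativity of $\nu_h+\lambda_h = -4\pi\int_r^{+\infty}(h-\hb)^2/(s+\eps)\,ds$ one has $e^{\nu_h+\lambda_h}\leq 1$; the mean $\hb$ is pointwise bounded by $\|h\|_{L^\infty([0,b+2])}$, which yields $L^\infty$-bounds on $g_{a,t,h}$, $\gb_{a,t,h}$, and $\Omega_{a,t,h}$ on the slab; the factor $1/(r+\eps)$ appearing in the source and in $(g_{a,t,h}-\gb_{a,t,h})/(r+\eps)=\del_r\gb_{a,t,h}$ is bounded above by $1/\eps$; and the truncation $\rho$ ensures $\rho(\gb_{a,t,h})\geq -1$ uniformly, so the characteristic speed is bounded.

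Through each point $(u_1,r_1)$ the curve $u\mapsto \chi_{a,t,h}(u;u_1,r_1)$ is well-defined on $[0,u_1]$ by Cauchy--Lipschitz applied to the Lipschitz speed $\tfrac{1}{2}\rho(\gb_{a,t,h})$, and along it \eqref{main equation} reduces to the linear ODE
\begin{equation*}
\frac{d}{du}\, k\bigl(u,\chi(u;u_1,r_1)\bigr)
= \frac{t}{2(\chi+\eps)}\bigl(g_{a,t,h}-\gb_{a,t,h}\bigr)\,k + \Omega_{a,t,h},
\qquad
k\big|_{u=0} = h_0\bigl(\chi(0;u_1,r_1)\bigr).
\end{equation*}
Duhamel's formula represents $k(u_1,r_1)$ as an integrating factor times the initial datum plus the time integral of $\Omega_{a,t,h}$ weighted by the same factor; since $u_1\leq c_1\leq 1/(20c^2)$, a Gronwall estimate then yields an $L^\infty$-bound on $k$. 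Differentiating the linear ODE along the characteristic flow with respect to $r$ produces an analogous linear ODE for $\del_r k$ whose source involves $r$-derivatives of the coefficients, all of which remain in $L^\infty$ thanks to $h_0,h\in C^1$ and the strict positivity of $\eps$ preventing any degeneracy; a second Gronwall step bounds $\|\del_r k\|_{L^\infty}$, and $\del_u k$ follows immediately by rearranging the transport equation. Multiplying $k$ by the fixed cutoff $\eta$ yields $\|\Psi_b(t,h)\|_{C^1}\leq \|\eta\|_{C^1}\,\|k\|_{C^1([0,c_1]\times[0,b+2])}\leq Q$.

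The main obstacle is to verify that the exponent in the Gronwall estimate does not blow up as $h$ ranges over the unbounded set $X_b$. The saving mechanism is the exponential damping factor $e^{\nu_h+\lambda_h}$ present in both $g_{a,t,h}$ and $\Omega_{a,t,h}$: whenever $\|h\|_\infty$ is large, the integrand $(h-\hb)^2/(s+\eps)$ is correspondingly large on a nontrivial range, and the resulting exponential depression counterbalances the polynomial growth of $(r+\eps)^2\hb^2$ appearing in $g_{a,t,h}$. Combined with the smallness of $c_1$ chosen relative to the Klein--Gordon mass $c$, this absorbs all terms into a single constant depending only on $\|h_0\|_{C^1}$, $a$, $b$, and $\eps$; the detailed tracking is deferred to Appendix~\ref{appendix-one}.
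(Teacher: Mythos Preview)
Your approach---Duhamel along the characteristics $\chi_{a,t,h}$ for $k$, then differentiate in $r$ and repeat for $\del_r k$---is exactly the paper's (Section~\ref{Lipschitz}). You are in fact more explicit than the paper's outline about the central issue: the constant $Q$ must be \emph{uniform} over the unbounded set $X_b$, whereas the paper's sketch records constants $a_2,a_3,a_4$ that depend on $\|h\|_{C^0}$ and leaves the passage to a uniform bound implicit. The mechanism you name---the factor $e^{\nu_h+\lambda_h}$ suppresses polynomial growth in $\hb$---is correct and is precisely what the paper exploits elsewhere (cf.\ the proof of Lemma~\ref{lemma-rhb} and estimates \eqref{estimate nu+lambda}, \eqref{r<1}): since $\hb\to 0$ at infinity and $h$ vanishes for $r\ge b+2$, one checks that $|\hb(r)|$ large forces $\int_r^\infty(s+\eps)(\del_s\hb)^2\,ds\gtrsim|\hb(r)|^2$, whence any power $|\hb|^n e^{\nu_h+\lambda_h}$ is uniformly bounded. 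This handles $g_{a,t,h}$, $\gb_{a,t,h}$, and $\Omega_{a,t,h}$, hence the $C^0$ bound on $k$.

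One step needs more care than your phrasing suggests. The source $f_1$ for $\del_r k$ in \eqref{def of f_1} contains $h$ itself, e.g.\ the terms $-\tfrac{c^2}{2}he^{\nu_h+\lambda_h}$ and $\hb(h-\hb)^2 e^{\nu_h+\lambda_h}$, and a narrow spike in $h$ does \emph{not} force $e^{\nu_h+\lambda_h}$ to be small (only largeness of $\hb$ does). Thus your justification ``all of which remain in $L^\infty$ thanks to $h\in C^1$'' is not a uniform statement over $X_b$, since $\|h\|_{C^1}$ is unbounded there. The paper's outline is equally terse on this point. A workable route is to split $h=\hb+(h-\hb)$ and observe the identity $\tfrac{(h-\hb)^2}{r+\eps}\,e^{\nu_h+\lambda_h}=\tfrac{1}{4\pi}\del_r e^{\nu_h+\lambda_h}$, which gives a uniform $L^1_r$ bound on the $(h-\hb)$--contributions; combined with the bounded characteristic speed $-1\le\rho(\gb_{a,t,h})\le 1+4\pi a$, this controls their $u$--integral along $\chi$ without a pointwise bound.
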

  
\begin{lemma}[Sign property of the velocity]
\label{lemma gb> 0} 
Consider any fixed-point solution for $t=1$ or, more generally, any solution $(t,h) \in [0,1]\times X_b$ to the equation  $h =t\Psi_b(t,h)$.  If the initial data satisfies $\gb_{a,t,h_0} > 0$ for all $r \in (0,b+1]$, then the solution satisfies   
$\gb_{a,t,h}> 0$ in $[0,c_1]\times(0,b+1]$. 
\end{lemma}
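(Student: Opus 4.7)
The plan is to reformulate the sign condition as a strict mass inequality and propagate it along the characteristic curves $\chi_{a,t,h}$. Writing the regularized Hawking mass as in \eqref{bondi-mass-eps} gives $\gb_{a,t,h} = e^{\nu_h + \lambda_h}\bigl(1 - 2m_{a,t,h}/(r+\eps)\bigr)$, and since $e^{\nu_h + \lambda_h} > 0$ automatically from the integral representation in \eqref{g-equation}, proving $\gb_{a,t,h} > 0$ on $[0,c_1]\times(0,b+1]$ reduces to proving that
\be
F(u,r) := (r+\eps)/2 - m_{a,t,h}(u,r)
\ee
stays strictly positive there; the hypothesis gives precisely $F(0,\cdot) > 0$ on $(0,b+1]$.

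To propagate this inequality I would fix $(u_1, r_1) \in [0,c_1] \times (0,b+1]$ and evaluate $F$ along the backward characteristic $u \mapsto \chi(u;u_1,r_1)$. Using the mass-evolution identity \eqref{evol-eps-mass} of Corollary~\ref{coroll-thr-21} (which applies directly at the fixed point $t=1$, and whose $t$-parametrized analogue for $t\in[0,1)$ is obtained by applying $D_{a,t,h}$ to the definition of $m_{a,t,h}$ and invoking the transport equation \eqref{main equation} satisfied by $h=t\Psi_b(t,h)$), together with $\partial_u\chi = -\gb_{a,t,h}/2$ (valid as long as $\gb_{a,t,h} \geq 0$, so that $\rho(\gb_{a,t,h}) = \gb_{a,t,h}$) and the identity $\gb_{a,t,h} = 2F\,e^{\nu_h+\lambda_h}/(r+\eps)$, a direct computation yields the linear ODE
\be
\frac{dF}{du} + \frac{g_{a,t,h}(u,\chi(u))}{2(\chi(u)+\eps)}\,F \;=\; \frac{4\pi(\chi(u)+\eps)^2 (D_{a,t,h}\hb)^2}{e^{\nu_h+\lambda_h}} \;\geq\; 0.
\ee
The source on the right is manifestly non-negative, and the coefficient on the left is bounded on the compact slab under consideration. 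Setting $\mu(u) := \exp\bigl(\int_0^u \tfrac{g_{a,t,h}(u',\chi(u'))}{2(\chi(u')+\eps)}\,du'\bigr) > 0$, the estimate $(\mu F)' \geq 0$ propagates $F(0) > 0$ to $F(u_1) > 0$, i.e.\ $\gb_{a,t,h}(u_1,r_1) > 0$; arbitrariness of $(u_1, r_1)$ completes the proof.

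The main obstacle is closing the bootstrap hidden in this derivation: the ODE uses $\rho(\gb) = \gb$, which presupposes $\gb \geq 0$ on the characteristic, which is precisely the target. I would handle this by defining $u^\star$ as the supremum of $u \in [0,c_1]$ on which $\gb_{a,t,h} > 0$ holds throughout $[0,u]\times(0,b+1]$; continuity together with the initial sign hypothesis gives $u^\star > 0$, and applying the integrating-factor inequality on $[0,u^\star)$ rules out $\gb_{a,t,h}$ reaching zero at $u^\star$ in the interior, which upgrades $u^\star$ to $c_1$. A secondary technical point is that the backward characteristic from $r_1 \leq b+1$ could a priori attain values $r_0 = \chi(0;u_1,r_1) > b+1$; one controls this using the smallness \eqref{notation c1} of $c_1$ together with the uniform bounds provided by Lemma~\ref{lemma the bound of Psi}, after which the cut-off structure of $X_b$ at $r \geq b+2$ makes the argument go through.
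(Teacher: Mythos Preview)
Your strategy --- reduce $\gb_{a,t,h}>0$ to a mass inequality, establish a differential inequality for the mass (equivalently for $F$ or $\lambda_h$), and close via a continuation argument in $u$ --- is exactly the paper's. The gap is that you have inverted the logical dependence: Corollary~\ref{coroll-thr-21} and equation~\eqref{evol-eps-mass} are \emph{consequences of} Lemma~\ref{lemma gb> 0}, not inputs to it. The paper says so explicitly just after \eqref{mass-bound}: ``\eqref{ep equation mass} follows from the proof of Lemma~\ref{lemma gb> 0} (with $t=1$)''. So you cannot invoke \eqref{evol-eps-mass} here, even at $t=1$ (and in any case Corollary~\ref{coroll-thr-21} concerns the $\eps$-regularized solution $h_\eps$, not the $(a,b)$-regularized fixed point of $\Psi_b$). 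The mass identity must be derived from scratch inside this proof, and that derivation is in fact the bulk of the work: one computes $\del_u(\nu_h+\lambda_h)$ and $\del_u(\nu_h-\lambda_h)$ from the integral formulas \eqref{g-equation}--\eqref{gath}, integrates by parts, and uses the transport equation \eqref{h =tk} satisfied by $h=t\Psi_b(t,h)$ to arrive at the $t$-parametrized identity \eqref{evolution of m with (t,epsilon)}. Your parenthetical ``obtained by applying $D_{a,t,h}$ to the definition of $m_{a,t,h}$ and invoking \eqref{main equation}'' points in the right direction, but this computation \emph{is} the substance of the proof and cannot be waved through.

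Two smaller points. First, your displayed ODE for $F$ is only an equality at $t=1$; for $t<1$ the right-hand side acquires two further non-negative terms (compare \eqref{evolution of m with (t,epsilon)}), though the inequality you need survives. Second, once the mass identity is in hand, the paper integrates the resulting bound $\del_u\lambda_h\leq(\cdots)$ at \emph{fixed} $r=r_*$ (see \eqref{bound above of lambda}) rather than along characteristics; this sidesteps entirely the $r_0>b+1$ boundary issue you flag and is the cleaner route here.
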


Indeed, thanks to 
 Lemma \ref{lemma the bound of Psi} (whose proof is outlined in Section~\ref{Lipschitz}, below), we can apply the Schauder fixed-point theorem, implying that $\Psi_b$ has a fixed-point $h$. It then follows from Lemma \ref{lemma gb> 0} (established in Section~\ref{sec-sign-velo}, below) 
 that $\gb_h> 0$ for all $(u,r) \in [0,c_1]\times (0,b+1]$. It is essential to check that the cut-off function $\rho$ reduces to the identity over the range of the solution, and this property
 $\rho\big(\gb_{a,t,h}(u,\chi_{a,t,h}) \big)
= \gb_{a,t,h} \big(u,\chi_{a,t,h} \big)$
 is provided by Lemma~\ref{lemma gb> 0}. 


\subsection{Basic continuity properties} 
\label{Lipschitz}

\paragraph{Lipschitz continuity of the metric coefficient $g_{a,t,h}$.}  

For any $h_1, h_2\in X_b$ we have
 $$
\aligned
|\hb_1- \hb_2| & =  \left|\frac{1}{r+ \eps} \int_0^r(h_1- h_2) \, ds\right| \leq \|h_1- h_2 \, \|_{C^0}, 
\endaligned
$$
and 
$$
\aligned
& |g_{a,t_1,h_1} -g_{a,t_2,h_2}|
 \leq  \bigg|4\pi \Big( c^2 \, (r+ \eps)^2(\hb_2^2- \hb_1^2) -a(t_2-t_1) \Big)e^{\nu_{h_1} + \lambda_{h_1}} \bigg|
\\
& \quad + \bigg|\Big( 1-4\pi \, \big( c^2(r+ \eps)^2\hb_2^2-a(1-t_2)\big) \Big)\big(e^{\nu_{h_1} + \lambda_{h_1}} -e^{\nu_{h_2} + \lambda_{h_2}} \big)\bigg|
\\
& \leq   4\pi a|t_1-t_2|+ 4\pi c^2 \, (r+ \eps)^2 |(\hb_1- \hb_2)(\hb_1+ \hb_2)|
\\
& \quad 
+ \bigg|4\pi \, \Big(1-4\pi c^2 \, (r+ \eps)^2\hb_2^2+4\pi a(1-t_2)\Big)
\int_r^{b+1} \frac{(h_1- \hb_1)^2-(h_2- \hb_2)^2}{r+ \eps} \, ds \bigg|
\endaligned
$$
therefore
$$
\aligned
& |g_{a,t_1,h_1} -g_{a,t_2,h_2}|
  \leq   4\pi a|t_1-t_2|+4\pi c^2 \, (r+ \eps)^2 |(\hb_1+ \hb_2)|\|(\hb_1- \hb_2)\|_{C^0}
\\
&  + 8\pi \, \big|1-4\pi^2 \, c^2 \, (r+ \eps)^2\hb_2^2+4\pi a(1-t_2)\big| \, 
\int_r^{b+1} \frac{|h_1- \hb_1+h_2- \hb_2|}{r+ \eps} \, ds \, \|h_1- h_2 \, \|_{C^0}.
\endaligned
$$
Similarly, we also obtain 
\be
|\gb_{a,t_1,h_1} - \gb_{a,t_2,h_2}|
+ 
|\del_rg_{a,t_1,h_1} - \del_rg_{a,t_2,h_2}| \leq  a_0\|h_1- h_2 \, \|_{C^0} +a_1|t_1-t_2|,
\ee
where the constants $a_0,\,a_1> 0$ depend on $\|h_1\|_{C^0},\,\|h_2 \, \|_{C^0}, a, b, \eps$.


\paragraph{$C^1$ estimate for the matter function $k$.} 

Given $(t, h) \in [0,1]\times X_b$, let $k$ be the solution of  \eqref{main equation} associated with $(t, h)$. First of all, if $\|h\|_{C^0},\,\|\del_rk\|_{C^0}$ are bounded, then $\|\del_uk\|_{C^0}$ is bounded by a positive constant only depending on $\| h \|_{C^0}, \| \del_r k \|_{C^0}, a, b, \eps$.
In fact, by solving \eqref{main equation} along the characteristics $\chi_{a,t,h}(\cdot; u_1,r_1)$ with 
$$
A(u_1) := {\int_0^{u_1} 
 \frac{t(g_{a,t,h} - \gb_{a,t,h})}{2(r_2+ \eps)}  \big|_{\chi_{a, t, h}} \, du_2}
$$
we find 
\begin{equation*}
\label{k along chi}
\aligned
k(u_1,r(u_1)) 
& = h_0(0,r(0)) e^{A(u_1)}  
- \int_0^{u_1}  \Big( \frac{1}{2(r_2+ \eps)} \big(g_{a,t,h} - \gb_{a,t,h} \big) \, \hb- \frac{c^2}{2}(r_2+ \eps) \, \hb \, e^{\nu_{h} + \lambda_{h}}  \Big)_{\chi_{a,t,h}}
e^{A(u_1) - A(u_2)}  \, du_2.
\endaligned
\end{equation*}
 Therefore, we have  
\bel{estiamte C0 of k}
\|k\|_{C^0} \leq a_2,
\ee
where the constant $a_2> 0$ depends on $h_0,\|h\|_{C^0}, a, b, \eps$. Differentiating the equation \eqref{main equation} with respect to $r$, we obtain
\bel{differentiaing main equation}
D_{a,t,h}(\del_rk) - \frac{1}{2(r+ \eps)} \Big( (g_{a,t,h} - \gb_{a,t,h})\rho^\prime(\gb_{a,t,h})+t(g_{a,t,h} - \gb_{a,t,h}) \Big)\del_rk=f_1(h,k,t),
\ee
where
\bel{def of f_1}
\aligned
f_1(h,k,t):& =  \frac{\del_rg_{a,t,h}}{2(r+ \eps)}(tk- \hb) - \frac{1}{2(r+ \eps)^2}(g_{a,t,h} - \gb_{a,t,h})(2tk+h-3\hb)\\
& \quad
- \frac{c^2}{2}he^{\nu_h+ \lambda_{h}} -2\pi c^2\hb(h - \hb)^2e^{\nu_h+ \lambda_{h}}.
\endaligned
\ee
Solving this equation along the characteristics $\chi_{a,t,h}(\cdot; u_1,r_1)$ with 
$$
B(u_1) := \int_0^{u_1} \Big( \frac{1}{2(r_2+ \eps)} \big((g_{a,t,h} - \gb_{a,t,h})\rho^\prime(\gb_{a,t,h})+t(g_{a,t,h} - \gb_{a,t,h})\big) \Big)_{\chi_{a,t,h}} \, du_2, 
$$
we have
\bel{del_rk along chi}
\aligned
& \del_rk(u_1,r(u_1)) = h_0(0,r(0)) e^{B(u_1)}  
 - \int_0^{u_1}[f_1(h,k,t)]_{\chi_{a,t,h}} e^{B(u_1) -B(u_2)}  
\, du_2, 
\endaligned
\ee
which gives us 
$\|\del_rk\|_{C^0} \leq a_3$
for a positive constant $a_3$ depending on $h_0, \| h \|_{C^0}, \| k \|_{C^0}, a, b, \eps$. Combining this result with \eqref{estiamte C0 of k}, we conclude that 
$\|k\|_{C^1} \leq a_4$ similarly for some constant $a_4> 0$  depending on $h_0, \| h \|_{C^0}, \| k \|_{C^0}, a, b, \eps$. 


\paragraph{$C^2$ estimate for the matter function $k$.} 

When sufficient differentiability is available, by differentiating twice the equation \eqref{main equation} with respect to $r$ we obtain
\bel{del_rr of k}
D_{a,t,h}(\del_{rr}k) - \frac{1}{2(r+ \eps)} \Big(
2(g_{a,t,h} - \gb_{a,t,h})\rho^\prime(\gb_{a,t,h})+t(g_{a,t,h} - \gb_{a,t,h})\Big) \, \del_{rr}k =f_2(h,k),
\ee
where the right-hand side 
$$
\aligned
&f_2(h,k,t) := \frac{\del_rg_{a,t,h} \del_rk}{2(r+ \eps)} \big(2t+ \rho^\prime(\gb_{a,t,h})\big)\\
& - \frac{\del_rk}{2(r+ \eps)^2} \Big(
2(g_{a,t,h} - \gb_{a,t,h})(\rho^\prime(\gb_{a,t,h})+t)+ 2t(g_{a,t,h} - \gb_{a,t,h}) -(g_{a,t,h} - \gb_{a,t,h})^2\rho^{\prime\prime}(\gb_{a,t,h})\Big)
\\
& - \frac{\del_rg_{a,t,h}}{2(r+ \eps)^2}(3tk+ 2h-5\hb)+ \frac{3(g_{a,t,h} - \gb_{a,t,h})}{(r+ \eps)^3}(tk+h-2\hb)\\
& - \frac{(g_{a,t,h} - \gb_{a,t,h})}{2(r+ \eps)^2} \del_r h+ \frac{\del_{rr} \gb_{a,t,h}}{2(r+ \eps)}(tk- \hb)+ \frac{c^2}{2} \del_{rr} \big((r+ \eps) \, \hb \, e^{\nu_h+ \lambda_h} \big)
\endaligned
$$
satisfies $\big\|f_2(h,k,t)\big\|_{C^0} \leq a_5$ for some constant $a_5> 0$ depending on $h_0, \| h \|_{C^1}, a, b, \eps$.  By solving \eqref{del_rr of k} we arrive at the bound $\|\del_{rr}k\|_{C^0} \leq a_6$ for a constant $a_6> 0$ depending on $h_0, \| h \|_{C^1}, a, b, \eps$.
 The above estimates allow us to complete the proof of Lemma~\ref{lemma Psi cc}.


\subsection{Sign property for the velocity}
\label{sec-sign-velo}

We now give the proof of Lemma \ref{lemma gb> 0}.  We assume that the equation 
$h =t\Psi_b(t,h)$ holds and that the initial data satisfies $\gb_{a,t,h_0} > 0$. 
Since the case $t=0$ is trivial, 
we consider values $t\ne0$ and argue by contradiction. Let us assume that there exists $(u_0,r_0) \in  [0,c_1]\times[0,b+1]$ such that $\gb_{a,t,h}(u_0,r_0)\leq 0$ and let us set 
\begin{equation*}
u_*:= \sup\Big\{
u_1\in [0,c_1] \, \big/ \,
 \text{$\gb_{a,t,h}(u,r) > 0$ for all $(u,r) \in [0,u_1]\times[0,b+1]$} \Big\}.
\end{equation*}
Since, by our assumption, $\gb_{a,t,h}(0,r) = \gb_{a,t,h_0}> 0$ for all $r\in [0,b+1]$, we have 
$u_*> 0$. It follows from the definition of $u_*$ that  there exists $r_*\in (0,b+1]$ such that $\gb_{a,t,h}(u,r) > 0$ for all $(u,r) \in [0,u_*)\times [0,r_*]$ while $\gb_{a,t,h}(u_*,r_*) =0$ vanishes at the ``first'' point $(u_*,r_*)$. 
Since $h =t\Psi_b(t,h)$, according to the definition of $\Psi_b$ and, in particular, within the domain $[0,u_*)\times [0,r_*]$, we have 
\bel{h =tk}
\aligned
& D_{a,t,h}h = \frac{t}{2(r+ \eps)} \big(g_{a,t,h}(t_0) - \gb_{a,t,h}(t_0)\big)(h - \hb) - \frac{tc^2}{2}(r+ \eps) \, \hb \, e^{\nu_{h} + \lambda_{h}}, 
\\
& h(0,r) =t \, h_0(r),
\endaligned
\ee
As before, given $e^{\nu_h - \lambda_h} := \gb_{a,t,h}$ defined on $[0,u_*)\times [0,r_*]$, 
we define the Hawking mass associated with $h$ by
\bel{(t,epsilon) bondi-mass}
m_h(u,r) := \frac{r+ \eps}{2} \big(1-e^{-2\lambda_{h}(u,r)} \big).
\ee
A straightforward computation based on the identities \eqref{g-equation} gives us 
\bel{formula Dm}
\aligned
\del_rm_h & = 2\pi \Big( e^{-2\lambda_{h}}(h - \hb)^2+c^2(r+ \eps)^2\hb^2-a(1-t)\Big),
\\
\del_um_h & = (r+ \eps) e^{-2\lambda_h} \del_u\lambda_h.
\endaligned
\ee
We  compute the derivative $\del_u\lambda_h$, as follows. Since $h$ has compact support, from \eqref{g-equation} and \eqref{gath} we obtain
$$
\aligned
\del_u(\nu_h+ \lambda_h) & =  -8\pi \int_r^{+ \infty} \frac{(h - \hb)}{s+ \eps}(\del_u h - \del_u\hb) \, ds,
\\
\del_u(\nu_h - \lambda_h) & =  \frac{1}{\gb_{a,t,h}(r+ \eps)} \int_0^rg_{a,t,h} \del_u(\nu_h+ \lambda_h) \, ds
  - \frac{8\pi c^2 }{\gb_{a,t,h}(r+ \eps)} \int_0^r\bigg(\int_0^shds_1\int_0^s\del_u hds_1\bigg) e^{\nu_h+ \lambda_h} \, ds.
\endaligned
$$
After integration by parts we find 
\begin{equation*}
\label{formula del_u lambda}
\aligned
\del_u\lambda_h 
& = \frac{4\pi }{\gb_{a,t,h}(r+ \eps)} \int_0^{r}g_{a,t,h} \int_s^{+ \infty} \frac{(h - \hb)}{s_1+ \eps}(\del_u h - \del_u\hb) \, ds_1ds-4\pi \int_r^{+ \infty} \frac{(h - \hb)}{s+ \eps}(\del_u h - \del_u\hb) \, ds
\\
& \quad + \frac{4\pi c^2}{\gb_{a,t,h}(r+ \eps)} \int_0^r\bigg(\int_0^shds_1\int_0^s\del_u hds_1\bigg) e^{\nu_h+ \lambda_h} \, ds
\\
& = \frac{4\pi }{\gb_{a,t,h}(r+ \eps)} \bigg(\int_0^r\gb_{a,t,h}(h - \hb)(\del_u h - \del_u\hb) \, ds
+c^2\int_0^r\bigg(\int_0^shds_1\int_0^s\del_u h ds_1\bigg) e^{\nu_h+ \lambda_h} \, ds\bigg).
\endaligned
\end{equation*}
Observe next that 
\bel{epsilon equation hb}
\aligned
\del_u h & =  Dh + {1 \over 2} \gb \del_r h, 
\qquad \quad 
\del_u\hb =  {1 \over r+ \eps} \int_0^r (Dh + {1 \over 2} \gb \del_r h) \, ds,
\endaligned
\ee
so that, by swapping from $\del_u$-operator to the $D$-operator, we rewrite \eqref{formula del_u lambda} on $[0,u_*)\times [0,r_*]$ as
$$ 
\aligned
 \del_u\lambda_h
& =  - \frac{4\pi }{\gb_{a,t,h}(r+ \eps)} 
\int_0^r \Big( \big(g- \gb_{a,t,h} \big) \frac{h - \hb}{s+ \eps} - c^2(s+ \eps) \, \hb \, e^{\nu_h+ \lambda_h} 
 + \gb_{a,t,h} \del_r h\Big) \int_0^s \big( D_\eps h+ \frac{1}{2} \gb_{a,t,h} \del_r h \big) \, ds_1  \, ds 
\\
& \quad
+ \frac{4\pi }{\gb_{a,t,h}(r+ \eps)}
 \int_0^r \Big( \big( g_{a,t,h} - \gb_{a,t,h} \big) \frac{h - \hb}{2(s+ \eps)} 
+ \gb_{a,t,h} \del_r h - \gb_{a,t,h} \frac{h - \hb}{s+ \eps} \Big) \int_0^s \big( D_\eps h+ \frac{1}{2} \gb_{a,t,h} \del_r h \big) \, ds_1\, ds 
\\
& \quad + \frac{4\pi }{\gb_{a,t,h}(r+ \eps)} \int_0^r\gb_{a,t,h}(h - \hb)\big(D_\eps h+ \frac{1}{2} \gb_{a,t,h} \del_r h\big) \, ds. 
\endaligned
$$
Thus, we have 
\be
\aligned
\del_u\lambda_h
 = - \frac{4\pi }{\gb_{a,t,h}(r+ \eps)}  \Bigg( \frac{1}{t} \bigg(\int_0^r\big(D_\eps h+ \frac{1}{2} \gb_{a,t,h} \del_r h\big) \, ds\bigg)^2
  - \gb_{a,t,h}(h - \hb) \int_0^r \big( D_\eps h+ \frac{1}{2} \gb_{a,t,h} \del_r h \big) \, ds \Bigg).
\label{partialu}
\endaligned
\ee
Combining this equation with \eqref{formula Dm}, we arrive at 
\be\label{evolution of m with (t,epsilon)}
\aligned
& \del_u m_h - \frac{1}{2} \gb_{a,t,h} \del_rm_h - \pi \, \gb_{a,t,h}(1-t)a
  = - \frac{\pi}{e^{\nu_h+ \lambda_h}} \bigg(2(r+ \eps)D_\eps \hb+ \big(1-t\big)\gb_{a,t,h}(h - \hb)\bigg)^2
\\
&  - \pi \gb_{a,t,h}c^2(r+ \eps)^2\hb^2- \frac{4\pi}{e^{\nu_h+ \lambda_h}} \bigg(\frac{1}{t} -1\bigg)\bigg(\int_0^r\big(D_\eps h+ \frac{1}{2} \gb_{a,t,h} \del_r h\big) \, ds\bigg)^2.
\endaligned
\ee
The right-hand side of this identity is clearly non-positive, and we thus deduce that 
\be
\del_u m_h\leq \frac{\gb_{a,t,h}}{2} \del_rm_h+ \pi \, \gb_{a,t,h} (1-t)a,
\ee
or, equivalently by using the mass evolution equation in \eqref{formula Dm}, for all $(u,r) \in [0,u_*)\times [0,r_*]$, 
\be
\del_u\lambda_h\leq \pi \, \gb_{a,t,h} \frac{(h - \hb)^2}{r+ \eps} + \pi c^2(r+ \eps) \, \hb^2e^{\nu_h+ \lambda_h}. 
\ee
By integration in $u$, within the domain $[0,u_*)\times [0,r_*]$ we obtain  
\bel{bound above of lambda}
\lambda_h(u,r)\leq \lambda_h(0,r)+ \int_0^u\bigg(\pi \, \gb_{a,t,h} \frac{(h - \hb)^2}{r+ \eps} + \pi c^2(r+ \eps) \, \hb^2e^{\nu_h+ \lambda_h} \bigg) \, du, 
\ee
which is a finite upper bound for $\lambda_h$ and prevents $\lambda_h(u,r)$ to blow-up. 
By recalling that we have assumed that $\gb_{a,t,h}(u_*,r_*) =0$, in the limit $u \to u_*$ this inequality tells us that
$$
0= \lim_{u\to u_*} e^{(\nu_h - \lambda_h) (u,r_*)} \lim_{u\to u_*} e^{-(\nu_h+ \lambda_h) (u,r_*)}= \lim_{u\to u_*} e^{-2\lambda_h (u,r_*)}> 0,
$$
which is a contradiction. This completes the proof of Lemma \ref{lemma gb> 0}.


\section{Geometric estimates on the mass and energy}
\label{sec-section4}

\subsection{Hawking mass} 

We assume that an initial data $h_0$ is given and satisfy $\gb_{h_0}> 0$ for all $r\geq0$. Given a sufficiently large $b \geq 1$, we can now assume that the mapping $\Psi_b(1,.)$ admits a fixed point denoted by $h_{\eps,b}$, satisfying the following properties:
\begin{enumerate}

\item The velocity satisfies $e^{\nu_{\eps,b} - \lambda_{\eps,b}}:= \gb_{\eps,b}> 0$ in the domain 
$\Ocal_{c_1, b}$. 

\item The mass function 
\bel{ep bondi mass}
m_{\eps,b}= \frac{r+ \eps}{2} \bigg(1- \frac{\gb_{\eps,b}}{e^{\nu_{\eps,b} + \lambda_{\eps,b}}} \bigg),
\ee
satisfies the identities and inequalities \eqref
{m is increasing w.r.t. r} --\eqref{mass-bound} below
(as follows from the computation in \eqref{formula Dm} with $t=1$). 

\end{enumerate}

\begin{proposition}[Monotonicity properties of the Hawking mass]
 The $(\eps, b)$-regularized solutions to the Einstein-Klein-Gordon equation satisfy
\bel{m is increasing w.r.t. r}
\del_rm_{\eps,b}=2\pi \, \big( e^{-2\lambda_{\eps,b}}(h_{\eps,b} - \hb_{\eps,b})^2+c^2(r+ \eps)^2\hb_{\eps,b}^2\big)\geq 0,
\ee 
\bel{ep equation mass}
D_\eps m_{\eps,b}= -  \frac{4\pi}{e^{\nu_{\eps,b} + \lambda_{\eps,b}}}(r+ \eps)^2\big(D_\eps \hb_{\eps,b} \big)^2- \pi \, \gb_{\eps,b}c^2(r+ \eps)^2\hb^2_{\eps,b} 
\ee
 in the slab $[0,c_1]\times [0,b+1]$
and, consequently, $m_{\eps,b}$ is non-decreasing when $r$ increases, and is non-increasing when $u$ increases, with 
\bel{mass-bound}
m_{\eps,b}(u,r)\leq m_{\eps,b}(u,b+1)\leq M_\eps(0), 
\qquad 
(u,r) \in [0,c_1]\times [0,b+1]. 
\ee
\end{proposition}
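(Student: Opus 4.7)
The plan is to specialize to $t = 1$ the computations already carried out in the proof of Lemma~\ref{lemma gb> 0} and then extract the sign and monotonicity consequences. First, setting $t = 1$ in the first identity of \eqref{formula Dm} kills the $a(1-t)$ correction term and immediately produces \eqref{m is increasing w.r.t. r}; the right-hand side is a sum of two manifestly non-negative terms, so $r \mapsto m_{\eps,b}(u,r)$ is non-decreasing on its domain of definition.

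For the evolution equation \eqref{ep equation mass}, I would substitute $t = 1$ into the identity \eqref{evolution of m with (t,epsilon)}. The term $\pi \gb_{a,1,h}(1-t)a$ on the left-hand side vanishes, the correction $(1-t)\gb_{a,t,h}(h - \hb)$ inside the square on the right-hand side drops out, and the coefficient $(1/t - 1)$ in front of the last integral also becomes zero. Observing that $(2(r+\eps) D_\eps \hb_{\eps,b})^2 = 4(r+\eps)^2 (D_\eps \hb_{\eps,b})^2$ and that the left-hand side reduces to $D_\eps m_{\eps,b}$, one reads off \eqref{ep equation mass} directly. Because $\gb_{\eps,b} > 0$ on $[0, c_1] \times (0, b+1]$ thanks to Lemma~\ref{lemma gb> 0}, both terms on the right-hand side are non-positive, so $D_\eps m_{\eps,b} \leq 0$ throughout this slab.

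To derive the bound \eqref{mass-bound}, the first inequality $m_{\eps,b}(u,r) \leq m_{\eps,b}(u, b+1)$ is immediate from the $r$-monotonicity established above. For the second, I would propagate information back along the characteristic $u' \mapsto \chi_\eps(u'; u, b+1)$ through $(u, b+1)$: since $\del_{u'} \chi_\eps = -\gb_{\eps,b}/2 \leq 0$, tracing backward from $u' = u$ to $u' = 0$ yields an endpoint $r_0 := \chi_\eps(0; u, b+1) \geq b+1$, and integrating the inequality $D_\eps m_{\eps,b} \leq 0$ along this curve gives $m_{\eps,b}(u, b+1) \leq m_{\eps,b}(0, r_0)$. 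The monotonicity of $m_{\eps,b}(0, \cdot)$ on $[0, +\infty)$ (since the algebraic identity \eqref{m is increasing w.r.t. r} extends beyond $b+1$) then yields $m_{\eps,b}(0, r_0) \leq \lim_{r \to +\infty} m_{\eps,b}(0, r) = M_\eps(0)$, and chaining these inequalities produces \eqref{mass-bound}.

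The main (minor) technical point is verifying that \eqref{ep equation mass}, or at least the sign $D_\eps m_{\eps,b} \leq 0$, remains valid along the portion of the characteristic sitting in the region $r \in [b+1, r_0]$, so that the monotonicity integration is legitimate. This is not a serious obstacle: the fixed point $h_{\eps,b}$ is defined on the full strip $[0, c_1] \times [0, +\infty)$, its compact support in $[0, b+2]$ keeps $\gb_{\eps,b}$ positive and bounded at large $r$ through \eqref{g-equation}, and the algebraic derivation of \eqref{ep equation mass} goes through unchanged wherever $\gb_{\eps,b} > 0$.
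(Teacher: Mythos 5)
Your proposal is correct and follows the paper's own route: the paper derives \eqref{m is increasing w.r.t. r} and \eqref{ep equation mass} precisely by setting $t=1$ in \eqref{formula Dm} and \eqref{evolution of m with (t,epsilon)} from the proof of Lemma~\ref{lemma gb> 0}, and obtains \eqref{mass-bound} by the same combination of $r$-monotonicity with integration of $D_\eps m_{\eps,b}\le 0$ along the backward characteristic through $(u,b+1)$. The technical point you flag --- that this characteristic enters the cutoff region $(b+1,b+2)$, where $h_{\eps,b}=\eta\, k_{\eps,b}$ satisfies a modified evolution equation so that \eqref{ep equation mass} is not literally ``unchanged'' there --- is genuine, but the paper's own two-line proof passes over exactly the same issue, so your argument matches it in both substance and level of rigor.
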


More precisely, \eqref{ep equation mass} follows from the proof of Lemma \ref{lemma gb> 0} (with $t=1$), while 
\eqref{mass-bound} is obtained by combining \eqref{m is increasing w.r.t. r} and \eqref{ep equation mass}.
Before analyzing the energy functional associated with the matter field $h_{\eps,b}$, let us derive some consequences of the positivity and monotonicity of the Hawking mass.
The first lemma below gives an expression equivalent to the Einstein field equation $G_{\theta\theta}=8\pi T_{\theta\theta}$ (for the angular direction on the orbits of symmetry)
and the second lemma provides us with pointwise  
Observe that it follows directly from \eqref{m is increasing w.r.t. r} that
\bel{integral of del_rm}
\aligned
\int_0^b{2\pi \, \big( e^{-2\lambda_{\eps,b}}(h_{\eps,b} - \hb_{\eps,b})^2+c^2(r+ \eps)^2\hb_{\eps,b}^2\big) \, dr}
= m_{\eps,b}(u,b+1) -m_{\eps,b}(u,0)\leq  m_{\eps,b}(u,b+1). 
\endaligned
\ee



\begin{lemma}[A remarkable identity]
\label{Fzeta epsilon=0}
On the spacetime slab $[0,c_1]\times [0,b+1]$ one has
$$
\frac{\del_r\gb_{\eps,b}}{r+ \eps} + \del_r\big(\gb_{\eps,b} \del_r\nu_{\eps,b} \big) - \del_{ur}(\nu_{\eps,b} + \lambda_{\eps,b}) -8\pi \, \Big(\del_r \, \hb_{\eps,b}D_\eps \hb_{\eps,b} - \frac{c^2}{2}e^{\nu_{\eps,b} + \lambda_{\eps,b}} \hb_{\eps,b}^2\Big) =0.
$$	
\end{lemma}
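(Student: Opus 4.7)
The identity is nothing but the $\theta\theta$ component of the Einstein equation, $G_{\theta\theta}=8\pi T_{\theta\theta}$, rewritten in the regularized Bondi setup. I will deduce it by direct computation from the two ``constraint-like'' relations
\begin{equation*}
\del_r(\nu_{\eps,b}+\lambda_{\eps,b})=\frac{4\pi(h_{\eps,b}-\hb_{\eps,b})^2}{r+\eps},
\qquad
\del_r\big((r+\eps)\gb_{\eps,b}\big)=g_{\eps,b},
\end{equation*}
the first of which follows by differentiating the integral formula in \eqref{g-equation}, and the second from the fact that $\gb_{\eps,b}$ is the modified mean value of $g_{\eps,b}$, combined with the transport equation \eqref{epsilon self-gravitating eq-0} and the auxiliary relation $D_\eps\hb_{\eps,b}=\xi_{\eps,b}/(2(r+\eps))$ from \eqref{definition of xi}.

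\textbf{Step 1: spatial derivatives.} From the two constraints above I read off $\del_r\gb_{\eps,b}=(g_{\eps,b}-\gb_{\eps,b})/(r+\eps)$ and $\del_r(\nu_{\eps,b}-\lambda_{\eps,b})=\del_r\gb_{\eps,b}/\gb_{\eps,b}$, so
\begin{equation*}
\gb_{\eps,b}\,\del_r\nu_{\eps,b}
=\frac{2\pi\,\gb_{\eps,b}(h_{\eps,b}-\hb_{\eps,b})^2}{r+\eps}+\frac{g_{\eps,b}-\gb_{\eps,b}}{2(r+\eps)}.
\end{equation*}
One differentiation in $r$ then expresses $\del_r\gb_{\eps,b}/(r+\eps)+\del_r(\gb_{\eps,b}\del_r\nu_{\eps,b})$ as an explicit combination of $g_{\eps,b}-\gb_{\eps,b}$, $\del_r(g_{\eps,b}-\gb_{\eps,b})$, $(h_{\eps,b}-\hb_{\eps,b})$ and $\del_r(h_{\eps,b}-\hb_{\eps,b})$; I will use $\del_r \hb_{\eps,b}=(h_{\eps,b}-\hb_{\eps,b})/(r+\eps)$ to keep formulas compact.

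\textbf{Step 2: the mixed derivative.} Differentiating the first constraint in $u$ yields
\begin{equation*}
\del_{ur}(\nu_{\eps,b}+\lambda_{\eps,b})=\frac{8\pi(h_{\eps,b}-\hb_{\eps,b})}{r+\eps}\bigl(\del_u h_{\eps,b}-\del_u\hb_{\eps,b}\bigr).
\end{equation*}
I then split $\del_u=D_\eps+\tfrac12\gb_{\eps,b}\del_r$ and use the matter equation \eqref{epsilon self-gravitating eq-0} to replace $D_\eps h_{\eps,b}$ and the relation $(r+\eps)D_\eps\hb_{\eps,b}=\xi_{\eps,b}/2$ to replace $D_\eps\hb_{\eps,b}$. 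This converts $\del_{ur}(\nu_{\eps,b}+\lambda_{\eps,b})$ into an expression involving only $g_{\eps,b},\gb_{\eps,b},h_{\eps,b},\hb_{\eps,b}$ and their $r$-derivatives, together with the ``matter source'' terms $c^2(r+\eps)\hb_{\eps,b}e^{\nu_{\eps,b}+\lambda_{\eps,b}}$ and $D_\eps\hb_{\eps,b}$.

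\textbf{Step 3: assembly and cancellation.} Substituting Steps 1--2 into the left-hand side of the claimed identity, the purely geometric contributions (those built from $g_{\eps,b}-\gb_{\eps,b}$ and $\del_r(g_{\eps,b}-\gb_{\eps,b})$) cancel pairwise because the algebraic definition $g_{\eps,b}=(1-4c^2\pi(r+\eps)^2\hb_{\eps,b}^2)e^{\nu_{\eps,b}+\lambda_{\eps,b}}$ gives
\begin{equation*}
\del_r(g_{\eps,b}-\gb_{\eps,b})
= g_{\eps,b}\del_r(\nu_{\eps,b}+\lambda_{\eps,b})-8\pi c^2(r+\eps)\hb_{\eps,b}\bigl((r+\eps)\del_r\hb_{\eps,b}+\hb_{\eps,b}\bigr)e^{\nu_{\eps,b}+\lambda_{\eps,b}}-\del_r\gb_{\eps,b},
\end{equation*}
which is exactly what is needed to absorb the $\del_r(\nu_{\eps,b}+\lambda_{\eps,b})$ produced in Step 2. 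What remains is precisely the matter source $8\pi\bigl(\del_r\hb_{\eps,b}\,D_\eps\hb_{\eps,b}-\tfrac{c^2}{2}e^{\nu_{\eps,b}+\lambda_{\eps,b}}\hb_{\eps,b}^2\bigr)$, and the identity follows.

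\textbf{Main obstacle.} There is no conceptual difficulty, only a lengthy bookkeeping: each of $\del_r(\gb_{\eps,b}\del_r\nu_{\eps,b})$ and $\del_{ur}(\nu_{\eps,b}+\lambda_{\eps,b})$ produces six to eight terms, and one has to align them carefully. The crucial algebraic observation that makes the cancellation transparent is that the algebraic definition of $g_{\eps,b}$ encodes exactly the $c^2\hb_{\eps,b}^2$ source, while the constraint $\del_r((r+\eps)\gb_{\eps,b})=g_{\eps,b}$ encodes the geometric ``Misner-Sharp'' piece, so the two sources of $r$-differentiation produce matching quadratic $(h_{\eps,b}-\hb_{\eps,b})^2$ and $\hb_{\eps,b}^2$ contributions that, together with the matter equation for $h_{\eps,b}$, leave only the claimed $D_\eps\hb_{\eps,b}$ and potential terms.
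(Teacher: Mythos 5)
Your proposal is correct and follows essentially the same route as the paper: both start from the relation $\gb_{\eps,b}\del_r\nu_{\eps,b}=\tfrac{g_{\eps,b}-\gb_{\eps,b}}{2(r+\eps)}+\tfrac{2\pi\gb_{\eps,b}(h_{\eps,b}-\hb_{\eps,b})^2}{r+\eps}$ together with $\del_r(\nu_{\eps,b}+\lambda_{\eps,b})=4\pi(h_{\eps,b}-\hb_{\eps,b})^2/(r+\eps)$, convert $\del_u$ into $D_\eps+\tfrac12\gb_{\eps,b}\del_r$ via the evolution equation, and verify the pairwise cancellation (the paper merely packages the mixed-derivative step as $4\pi D_\eps\big((h_{\eps,b}-\hb_{\eps,b})^2/(r+\eps)\big)$ rather than expanding $\del_{ur}(\nu_{\eps,b}+\lambda_{\eps,b})$ directly). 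The invocation of $\xi_{\eps,b}$ in your Step 2 is unnecessary, since $D_\eps\hb_{\eps,b}$ survives untouched into the final matter term, but this is harmless.
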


\begin{lemma}[Pointwise bounds on $h_{\eps,b}$ and $\hb_{\eps,b}$]  
\label{lemma-rhb} 
On the spacetime slab $[0,c_1]\times [0,b]$ one has
\bel{control rhb in C0}
(r+ \eps)|\hb_{\eps,b}|(u,r)\leq \Acal_\eps
(u,r),
\ee
where $\Acal_\eps
(u,r)$ is defined in \eqref{def A}.
Moreover,  provided that $b \geq 1$ is sufficiently large,  one also has 
\be
\label{bound of rhb}
\aligned
\sup_{(u,r) \in [0,c_1]\times [b,b+ 2]}|h_{\eps,b}| \leq  A_1(c_1,b),
\qquad\quad
\sup_{(u,r) \in [0,c_1]\times [0,+ \infty)}|(r+ \eps) \, \hb_{\eps,b}| \leq A_2(c_1,b),
\endaligned
\end{equation}
with
$$
\aligned
A_1(c_1,b):& = 2 \Big( \|h_0\|_{C^0([b- \frac{1}{8c^2},+ \infty))} 
+C(c_1,b)\Acal_\eps(c_1,0) \Big)e^{c_1 / (b- \frac{c_1}{2} + \eps)},
\\
A_2(c_1,b):& = 2A_1(c_1,b)+ \Acal_\eps
(c_1,0),
\qquad
\quad 
C(c_1,b) := c_1\bigg(\frac{4\big(1+ 2e^{-2}c^2(2b+4+ \eps)^2\big)}{(2b-c_1+ 2\eps)^2} +c^2\bigg).
\endaligned
$$	
\end{lemma}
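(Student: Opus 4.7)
The plan is to establish the three estimates successively by integration along the characteristics $u\mapsto \chi_\eps(u;u_1,r_1)$. Since $\del_u\chi_\eps=-\gb_\eps/2<0$, the characteristic is strictly decreasing in $u$; hence for every $u\in[0,u_1]$ one has $\chi_\eps(u)\geq r_1$ and $r_0:=\chi_\eps(0;u_1,r_1)\geq r_1$ with $r_0-r_1\leq u_1/2\leq c_1/2$ (using $\gb_\eps\leq 1$). This outward monotonicity backwards in time will absorb the initial contribution into $\sup_{r'\geq r_1}\bigl|\int_0^{r'}h_0\,ds\bigr|$ in the first estimate and keep the characteristics in the favorable region $r\geq b$ for the second.

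For \eqref{control rhb in C0}, I start from the relation $2(r+\eps)D_\eps\hb_\eps=\xi_\eps$ of \eqref{definition of xi} and integrate along the characteristic back to $(0,r_0)$:
\[
\hb_\eps(u_1,r_1)=\hb_\eps(0,r_0)+\int_0^{u_1}\frac{\xi_\eps}{2(\chi_\eps+\eps)}\Big|_{\chi_\eps}\,du.
\]
Multiplying by $r_1+\eps$ and using $(r_1+\eps)/(\chi_\eps(u)+\eps)\leq 1$, the initial term yields $\bigl|\int_0^{r_0}h_0\,ds\bigr|\leq \sup_{r'\geq r_1}\bigl|\int_0^{r'}h_0\,ds\bigr|$. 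The remaining integral is controlled by Cauchy--Schwarz together with \eqref{ep equation mass}: since $(r+\eps)^2(D_\eps\hb_\eps)^2=\xi_\eps^2/4$ and $e^{\nu_\eps+\lambda_\eps}\leq 1$, integrating the mass equation along the characteristic gives $\int_0^{u_1}\xi_\eps^2\bigr|_{\chi_\eps}\,du\leq m_\eps(0,r_0)/\pi\leq M_\eps(0)/\pi$, hence $\tfrac12\int_0^{u_1}|\xi_\eps|\bigr|_{\chi_\eps}\,du\leq \sqrt{u_1 M_\eps(0)/(4\pi)}\leq \sqrt{u_1 M_\eps(0)/(2\pi)}$, which delivers $\Acal_\eps(u_1,r_1)$.

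For the $C^0$ estimate on $h_{\eps,b}$ over $[b,b+2]$ in \eqref{bound of rhb}, I solve the transport equation satisfied by $h_\eps$ (equivalently by $k$ where $\eta=1$, and $h_{\eps,b}=\eta k$ on $[b+1,b+2]$) along the characteristic via the variation-of-constants formula
\[
h(u_1,r_1)=h_0(r_0)\,e^{A(u_1)}-\int_0^{u_1}\Bigl[\frac{(g_\eps-\gb_\eps)\hb_\eps}{2(r+\eps)}+\frac{c^2(r+\eps)e^{\nu_\eps+\lambda_\eps}\hb_\eps}{2}\Bigr]_{\chi_\eps}e^{A(u_1)-A(u)}\,du,
\]
with $A(u):=\int_0^u (g_\eps-\gb_\eps)/(2(r+\eps))\bigr|_{\chi_\eps}\,du_2$. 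The characteristic stays in $[b,b+2+c_1/2]$, so $r(u)+\eps\geq b-c_1/2+\eps$; using the identity $g_\eps-\gb_\eps=(r+\eps)\del_r\gb_\eps$ with the monotonicity bound $m_\eps\leq M_\eps(0)$ from \eqref{mass-bound} and the first part of this lemma to control $c^2(r+\eps)^2\hb_\eps^2$, one obtains the integrating-factor bound $e^{A(u_1)}\leq e^{c_1/(b-c_1/2+\eps)}$. The source integral is controlled via $(r+\eps)|\hb_\eps|\leq \Acal_\eps(c_1,0)$ combined with the lower bound on $r+\eps$, producing the constant $C(c_1,b)$; and $h_0(r_0)$ contributes the norm of $h_0$ over the short range traversed by $r_0$, yielding the claimed $A_1(c_1,b)$.

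The global bound on $(r+\eps)|\hb_{\eps,b}|$ in \eqref{bound of rhb} then follows by decomposition: for $r\in[0,b]$ the first estimate gives $(r+\eps)|\hb_{\eps,b}|\leq\Acal_\eps(c_1,0)$; for $r\in[b,b+2]$ one splits $(r+\eps)\hb_{\eps,b}(u,r)=\int_0^b h\,ds+\int_b^r h\,ds$, bounded respectively by $\Acal_\eps(c_1,0)$ and $2A_1(c_1,b)$; for $r>b+2$, $h_{\eps,b}$ vanishes by the cutoff, so $(r+\eps)\hb_{\eps,b}(u,r)=\int_0^{b+2}h\,ds$ satisfies the same bound, delivering $A_2(c_1,b)$. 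The principal obstacle I expect lies in the integrating-factor estimate in step two: getting $A(u_1)$ uniform in $b$ requires a delicate exploitation of both the monotonicity/finiteness of the Hawking mass \emph{and} the quadratic smallness of $c^2(r+\eps)^2\hb_\eps^2$ obtained from the first estimate, rather than any crude pointwise bound on $|g_\eps-\gb_\eps|$.
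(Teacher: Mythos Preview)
Your argument for \eqref{control rhb in C0} is correct and in fact cleaner than the paper's. The paper rewrites the mass identity so that the quantity $D_\eps\bigl((r+\eps)\hb_\eps\bigr)+\tfrac12\bigl(\tfrac{\gb_\eps}{r+\eps}+c\sqrt{\gb_\eps}\bigr)(r+\eps)\hb_\eps$ appears as a perfect square, solves the resulting linear ODE for $(r+\eps)\hb_\eps$ with an integrating factor, and then applies H\"older. Your shortcut --- integrating $D_\eps\hb_\eps=\xi_\eps/(2(r+\eps))$ along the characteristic, using $r_1+\eps\le \chi_\eps(u)+\eps$, and reading $\int_0^{u_1}\xi_\eps^2\,du\le M_\eps(0)/\pi$ directly off \eqref{ep equation mass} --- yields the same bound with less work. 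The decomposition for $A_2(c_1,b)$ at the end is also exactly what the paper does.

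The second estimate, however, does not go through as you describe. Two ingredients from the paper are missing, and without them your argument is circular.

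First, the constant $C(c_1,b)$ contains the term $2e^{-2}c^2(2b+4+\eps)^2$, and this comes from an auxiliary pointwise bound that you never establish: on $[0,c_1]\times[b-\tfrac{c_1}{2},+\infty)$ the paper shows
\[
(r+\eps)^2\,\hb_{\eps,b}^2\,e^{\nu_{\eps,b}+\lambda_{\eps,b}}\ \le\ \tfrac{1}{2\pi}\Bigl(\tfrac{2b+4+\eps}{e}\Bigr)^2,
\]
by a maximum--point argument combined with $e^{\nu+\lambda}\le e^{-2\sqrt{2\pi}\,|\hb|}$ and the calculus inequality $x^2e^{-2\sqrt{2\pi}\,x}\le (2\pi e^2)^{-1}$. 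This is what controls $|g_{\eps,b}|$ on the outer strip and hence the coefficient multiplying $|\hb|$ in the source. Neither the Hawking--mass monotonicity nor \eqref{control rhb in C0} supplies it.

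Second, your claim that ``the source integral is controlled via $(r+\eps)|\hb_\eps|\le\Acal_\eps(c_1,0)$'' is not available: \eqref{control rhb in C0} has only been proved for $r\le b$, while the characteristic through $(u_1,r_1)\in[0,c_1]\times[b,b+2]$ runs in $[b-\tfrac{c_1}{2},\,b+2+\tfrac{c_1}{2}]$. The paper resolves this by a bootstrap: along the characteristic one writes $(r+\eps)\hb_{\eps,b}=\int_0^b h_{\eps,b}+\int_b^r h_{\eps,b}$, bounds the first piece by $\Acal_\eps(c_1,0)$ via \eqref{control rhb in C0} at $r=b$, and bounds the second by $2\sup_{[b,b+2]}|h_{\eps,b}|$. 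Since $|h_{\eps,b}|\le |k_{\eps,b}|$, this feeds $\sup_{[b,b+2]}|h_{\eps,b}|$ back into its own estimate with prefactor $C(c_1,b)\,e^{c_1/(b-c_1/2+\eps)}$; for $b$ large this prefactor is $<\tfrac12$ (here the choice $c_1\le 1/(20c^2)$ is used), and absorption gives the factor $2$ in $A_1(c_1,b)$. Your outline skips both the auxiliary $e^{\nu+\lambda}$--weighted bound and this absorption step, so the stated constants are not justified.
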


\begin{proof}[Proof of Lemma~\ref{Fzeta epsilon=0}]
From the identity
$$
e^{2\nu_{\eps,b}} \big(1+ 2(r+ \eps)\del_r\nu_{\eps,b} \big) =e^{\nu_{\eps,b} + \lambda_{\eps,b}} \del_r\big((r+ \eps) e^{\nu_{\eps,b} - \lambda_{\eps,b}} \big)+(r+ \eps) e^{2\nu_{\eps,b}} \del_r(\nu_{\eps,b} + \lambda_{\eps,b}), 
$$
it follows that
\bel{formaula derivative of nu}
\aligned
\del_r\nu_{\eps,b}& = \frac{1}{2(r+ \eps)} \bigg(\frac{g_{\eps,b}}{\gb_{\eps,b}} -1\bigg)+ \frac{1}{2} \del_r(\nu_{\eps,b} + \lambda_{\eps,b}),\\
\gb_{\eps,b} \del_{rr} \nu_{\eps,b}& = \frac{2\pi g_{\eps,b}(h_{\eps,b} - \hb_{\eps,b})^2}{(r+ \eps)^2} - \frac{g_{\eps,b}^2}{2\gb_{\eps,b}(r+ \eps)^2}
+ \frac{\gb_{\eps,b}}{2(r+ \eps)^2} -4\pi c^2 h_{\eps,b} \hb_{\eps,b}e^{\nu_{\eps,b} + \lambda_{\eps,b}} + \frac{\gb_{\eps,b}}{2} \del_{rr}(\nu_{\eps,b} + \lambda_{\eps,b}).
\endaligned
\ee
In turn from \eqref{g-equation} we get
$\del_r(\nu_{\eps,b} + \lambda_{\eps,b}) = 4\pi \, \frac{(h_{\eps,b} - \hb_{\eps,b})^2}{r+ \epsilon}$
on $[0,c_1]\times [0,b+1]$, and then
$$
\aligned
& \del_{ur}(\nu_{\eps,b} + \lambda_{\eps,b}) - \frac{\gb_{\eps,b}}{2} \del_{rr}(\nu_{\eps,b} + \lambda_{\eps,b}) 
 =  4\pi D_\eps \bigg(\frac{(h_{\eps,b} - \hb_{\eps,b})^2}{r+ \eps} \bigg)\\
 & = 8\pi \frac{(h_{\eps,b} - \hb_{\eps,b})}{r+ \eps}D_\eps(h_{\eps,b} - \hb_{\eps,b})+ 2\pi \, \frac{(h_{\eps,b} - \hb_{\eps,b})^2}{(r+ \eps)^2} \gb_{\eps,b} \\
 & = 4\pi \, \frac{(h_{\eps,b} - \hb_{\eps,b})^2}{(r+ \eps)^2}(g_{\eps,b} - \gb_{\eps,b}) -4\pi c^2(h_{\eps,b} - \hb_{\eps,b}) \, \hb_{\eps,b}e^{\nu_{\eps,b} + \lambda_{\eps,b}} 
 + 2\pi \, \frac{\gb_{\eps,b}(h_{\eps,b} - \hb_{\eps,b})^2}{(r+ \eps)^2} -8\pi \, \frac{h_{\eps,b} - \hb_{\eps,b}}{r+ \eps}D_\eps \hb_{\eps,b},
\endaligned
$$
where we have integrated by parts and used the expression for $\del_u\lambda_{\eps,b}$ adapted from \eqref{partialu}. Finally, 
combining the above two identities together, 
we arrive at the desired result. 
\end{proof}


\begin{proof}[Proof of Lemma~\ref{lemma-rhb}]

Given $(u_1,r_1) \in (0,c_1]\times [0,b]$, since $\frac{\eps}{2} \leq m_{\eps,b}(u_1,r_1)\leq M_\eps(0)$,
by solving \eqref{ep equation mass} along the characteristics $\chi_\eps(\cdot; u_1, r_1)$ we obtain
\begin{equation*}
\aligned
\frac{\eps}{2} \leq m_{\eps,b}(u_1,r_1) 
& =  \  m_{\eps,b}(0,r(0))
- \int_0^{u_1}  \Big( \frac{4\pi}{e^{\nu_{\eps,b} + \lambda_{\eps,b}}}(r_2+ \eps)^2(D_\eps \hb_{\eps,b})^2
+ \pi c^2(r_2+ \eps)^2\gb_{\eps,b} \hb_{\eps,b}^2 \Big)_{\chi_{\eps,u_1}} \, du_2 \\
& = \ m_{\eps,b}(0,r(0))
- \int_0^{u_1}  \Big( \frac{4\pi (D\big((r_2+ \eps) \, \hb_{\eps,b} \big)+ \frac{1}{2} \gb_{\eps,b} \hb_{\eps,b})^2}{e^{\nu_{\eps,b} + \lambda_{\eps,b}}}
+ \pi c^2(r_2+ \eps)^2\gb_{\eps,b} \hb_{\eps,b}^2 \Big)_{\chi_{\eps,u_1}} \, du_2
\\
& \leq  \ m_{\eps,b}(0,r(0))
- \int_0^{u_1}  \Big( 2\pi \, \bigg(D\big((r_2+ \eps) \, \hb_{\eps,b} \big)+ \frac{1}{2} \bigg(\frac{\gb_{\eps,b}}{r_2+ \eps} +c\sqrt{\gb_{\eps,b}} \bigg)(r_2+ \eps) \, \hb_{\eps,b} \bigg)^2 \Big)_{\chi_{\eps,u_1}} \, du_2.
\endaligned
\end{equation*}
With 
$J(u_1) := \int_0^{u_1} \Big(
\frac{\gb_{\eps,b}}{r_2+ \eps} +c\sqrt{\gb_{\eps,b}} \Big)_{\chi_{\eps,u_1}} \, du_2
$
it follows that
$$
\aligned
|(r_1+ \eps) \, \hb_{\eps,b}(u_1,r_1)| 
& \leq   |(r(0)+ \eps) \, \hb_0(r(0))|e^{- \frac{1}{2} J(u_1)}
\\
& \quad + \bigg|\int_0^{u_1}  
 \bigg(D\big((r_2+ \eps) \, \hb_{\eps,b} \big)
+ \frac{1}{2} \bigg(\frac{\gb_{\eps,b}}{r_2+ \eps} +c\sqrt{\gb_{\eps,b}} \bigg)(r_2+ \eps) \, \hb_{\eps,b} \bigg)
\,
 e^{- \frac{1}{2} (J(u_1) - J(u_2))} 
 \big|_{\chi_{\eps,u_1}} \, du_2\bigg|,
\endaligned
$$
hence, by H\"{o}lder's inequality we can show \eqref{control rhb in C0}
\begin{equation*}
\aligned
\label{rhb is uniformly bounded}
&  
|(r_1+ \eps) \, \hb_{\eps,b}(u_1,r_1)| 
\\
& \leq  |(r(0)+ \eps) \, \hb_0(r(0))|e^{- \frac{1}{2} \int_0^{u_1} \Big( \frac{\gb_{\eps,b}}{r_2+ \eps} +c\sqrt{\gb_{\eps,b}}  \Big)_{\chi_{\eps,u_1}} \, du_2}
\\
& \quad +  \Big( \int_0^{u_1}e^{- (J(u_1) - J(u_2))} \, du_2 \Big)^{1/2}
\, 
 \Big( \int_0^{u_1}  \Big( \bigg(D\big((r_2+ \eps) \, \hb_{\eps,b} \big)+ \frac{1}{2}
\bigg(\frac{\gb_{\eps,b}}{r_2+ \eps} +c\sqrt{\gb_{\eps,b}} \bigg)(r_2+ \eps) \, \hb_{\eps,b} \bigg)^2  \Big)_{\chi_{\eps,u_1}} \hspace{-0.5cm} \, du_2 \Big)^{1/2} 
\\
& \leq  |(r(0)+ \eps) \, \hb_0(r(0))|+ \sqrt{\frac{u_1 M_\eps(0)}{2\pi}}.
\endaligned
\end{equation*}
Next, let $(u_*,r_*)\nu_{\eps,b} + \lambda_{\eps,b}$ be a maximum point of $(r+ \eps)^2\hb_{\eps,b}^2e^{\nu_{\eps,b} + \lambda_{\eps,b}}$ in $[0,c_1]\times [b- \frac{c_1}{2},+ \infty)$. 
Since $(r+ \eps)^2e^{\nu_{\eps,b} + \lambda_{\eps,b}}$  is increasing with respect to $r$, then $r_*$ is also a maximum point of $|\hb_{\eps,b}|(u_*,.)$ in $[r_*,+ \infty)$. Therefore, 
we obtain 
$$
\aligned
 (r_*+ \eps)^2\big(|\hb_{\eps,b}|^2e^{\nu_{\eps,b} + \lambda_{\eps,b}} \big)(u_*,r_*)& \leq  (2b+4+ \eps)^{2} \big(|\hb_{\eps,b}|^2e^{-2\sqrt{2\pi}|\hb_{\eps,b}|} \big)(u_*,r_*)
  \leq  \frac{1}{2\pi} \Big(\frac{2b+4+ \eps}{e} \Big)^{2},
\endaligned
$$
hence, for all $(u,r) \in [0,c_1]\times [b- \frac{c_1}{2},+ \infty)$
\bel{rhb exp 2}
|g_{\eps,b}| = \bigg|
\big(
1-4\pi \, c^2(r+ \eps)^2\hb_{\eps,b}^2\big) e^{\nu_{\eps,b} + \lambda_{\eps,b}} \bigg| \leq 2 \, c^2\bigg(\frac{2b+4+ \eps}{e} \bigg)^{2}.
\ee
Now, let $k_{\eps,b}$ be the solution of \eqref{main equation} associated with $(1,h_{\eps,b})$. Then, we obtain from \eqref{k along chi} with $t=1$ that, for all $(u_1,r_1)\in [0,c_1]\times [b,b+ 2]$,
$$
\aligned 
&
|k_{\eps,b}(u_1,r_1)|  \leq  |h_0(0,r(0))|e^{\int_0^{u_1} 
 \frac{g_{\eps,b} - \rho(\gb_{\eps,b})}{2(r_2+ \eps)}  \big|_{\chi_{\eps,u_1}} \, du_2} 
\\
&  + \bigg|\int_0^{u_1}  \Big( \frac{g_{\eps,b} - \rho(\gb_{\eps,b})}{2(r_2+ \eps)} \hb_{\eps,b} - \frac{c^2}{2}(r_2+ \eps) \, \hb_{\eps,b} e^{\nu_{\eps,b} + \lambda_{\eps,b}}  \Big)_{\chi_{\eps,u_1}}e^{\int_{u_2}^{u_1} \Big( \frac{g_{\eps,b} - \rho(\gb_{\eps,b})}{(r_3+ \eps)}  \Big)_{\chi_{\eps,u_1}} \, du_3} \, du_2\bigg|.
\endaligned
$$
%
Since $\rho(\gb_\eps) \in [-1,1]$,
 along the characteristics $\chi_{h_\eps,u_1}$ we have $r(u) \in [b- \frac{c_1}{2},\,b+ 2+ \frac{c_1}{2}]$ for all $u\in [0,u_1]$. 
 Therefore, by \eqref{rhb exp 2} and the above bound on $k_{\eps,b}(u_1,r_1)$ we have 
$$
\aligned
\sup_{(u,r) \in [0,c_1]\times [b,b+ 2]}|k_{\eps,b}|
& \leq \kappa_1 \, \|h_0\|_{C^0([b- \frac{1}{8c^2},+ \infty))}
 + \kappa_1 \,
\int_0^{u_1}  \Big( \bigg(\frac{1+ 2 \, c^2e^{-2}(2b+4+ \eps)}{2(r_2+ \eps)^2} + \frac{c^2}{2} \bigg)|(r_2+ \eps) \, \hb_{\eps,b}| \Big)_{\chi_{\eps,u_1}} \hspace{-0.5cm} \, du_2
\\
 & \leq  \kappa_1 \, \|h_0\|_{C^0([b- \frac{1}{8c^2},+ \infty))}
 + C(c_1,b) \kappa_1 \,
 \Big( 
 \sup_{(u,r) \in [0,c_1]\times [b,b+ 2]} \hspace{-0.3cm}|h_{\eps,b}| +  \frac{1}{2} \sup_{(u,r) \in [0,c_1]\times [0,b]} \hspace{-0.4cm}|(r+ \eps) \, \hb_{\eps,b}| \Big), 
\endaligned
$$
where $\kappa_1 := e^{c_1/ (b- \frac{c_1}{2} + \eps)}$.
Since $|h_{\eps,b}| \leq |k_{\eps,b}|$, it follows from the definition of $c_1$ that, for any $b \geq 1$ sufficiently large,
$$
\aligned
 \sup_{(u,r) \in [0,c_1]\times [b,b+ 2]}|h_{\eps,b}|
& \leq  2 \kappa_1 \, \|h_0\|_{C^0([b- \frac{1}{8c^2},+ \infty))} 
+  C(c_1,b) \kappa_1 \, 
\sup_{(u,r) \in [0,c_1]\times [0,b]} \hspace{-0.5cm}|(r+ \eps) \, \hb_{\eps,b}| 
\\
 & \leq  A_1(c_1,b),
 \endaligned
 $$
 where the last inequality follows from \eqref{control rhb in C0}. 
 Consequently, we find
 $$ 
 \aligned
 \sup_{(u,r) \in [0,c_1]\times[0,+ \infty)}|(r+ \eps) \, \hb_{\eps,b}|
 & \leq  \sup_{(u,r) \in [0,c_1]\times[0,b]}|(r+ \eps) \, \hb_{\eps,b}|+ 2\sup_{(u,r) \in [0,c_1]\times [b,b+ 2]}|h_{\eps,b}| \leq  A_2(c_1,b), 
 \endaligned
$$
which completes the proof of Lemma~\ref{lemma-rhb}.
\end{proof}
 

\subsection{Statement of the energy estimates}

We now derive $L^2$ bounds for the functions $h_{\eps,b}$, $\hb_{\eps,b}$, and $h_{\eps,b} - \hb_{\eps,b}$. and, to this end, it is convenient to introduce the auxiliary functions 
\be
\aligned
\label{definition-B1-B2}
B_1(u,b,c_1)&:=\int_0^{u} \gb_{\eps,b}(u_1,b) h_{\eps,b}(u_1,b) \, du_1+ 2 
\sup_{(u,r) \in [0,c_1]\times [b,b+ 2]}h^2_\eps,
\\
B_2(u,b,c_1)&:=\int_b^{b+ 2}(h_{\eps,b} - \hb_{\eps,b})^2 \, dr +	(b+ 2+ \eps)^2|\hb_{\eps,b}|^2(u,b+ 2+ \eps) \int_{b+ 2}^{+ \infty} \frac{1}{(r+ \eps)^2} \, dr \\
& \quad + \frac{1}{4\pi} \int_0^u\bigg(\frac{g_{\eps,b} - \gb_{\eps,b}}{2} + 2\pi \, \gb_{\eps,b}(h_{\eps,b} - \hb_{\eps,b})^2\bigg)(u_1,b) \, du_1\\
& \quad + \int_0^u\bigg(\frac{(\Acal_\eps
(u_1,0))^2}{b} \int_0^b\frac{g_{\eps,b} - \gb_{\eps,b}}{(r+ \eps)^2} \, dr + \frac{\Acal_\eps
(u_1,0)}{\sqrt{b}} \sqrt{\frac{m_{\eps,b}(u_1,b)}{\pi}} \bigg) \, du_1,
\endaligned
\ee
where $\Acal_\eps
(u,r)$ is defined in \eqref{def A}.
Observe that these functions are controlled in terms of the initial data only. Namely, 
by Lemma \ref{lemma-rhb}, for all $u\in [0,c_1]$ and any sufficiently large $b \geq 1$, we have 
\bel{B-bounds}
\aligned
\limsup_{b\to + \infty}B_1(u,b,c_1) \leq  c^2 B_I(c_1),\qquad
\limsup_{b\to + \infty}B_2(u,b,c_1) \leq  \frac{c_1}{4\pi} +c^2 B_{II}(c_1),
\endaligned
\ee
where
\be
\aligned
B_I (c_1)&:=(8e^{-2} +1)c_1^2\Acal_\eps 
(c_1,0) \big( 1+ 2 \, c^2\Acal_\eps(c_1,0) \big)
\\
B_{II} (c_1)&:= \pi c_1 \, \Acal_\eps(c_1,0)+ \big(8e^{-2} +1\big)^2\bigg(\frac{c_1}{2} + 2\bigg)c_1^2 \, c^2 (\Acal_\eps(c_1,0))^2.
\endaligned
\ee 
\begin{proposition}[Energy estimates for  the functions $h_{\eps,b}$  and $\hb_{\eps,b}$]
\label{lemma-energy} 
The following inequalities hold
\bel{equa----415} 
\aligned
&(1) 
&&
&
c^2\int_0^b
(r+ \eps)^2\hb_{\eps,b}^2 \, dr 
&&& \leq \frac{m_{\eps,b}(u,b)}{2\pi},
\\
&(2) 
&&
&
\int_0^{+ \infty}h^2_{\eps,b} \, dr 
&&& \leq \int_{0}^{b} \hspace{-0.2cm}h_0^2 \, dr + \frac{u}{8\pi} + c^2 \Hcal_\eps(u)
+B_1(u,b,c_1),
\endaligned
\ee
where $\Hcal_\eps(u)$ 
depends on the initial data and is defined in \eqref{def A}.
\end{proposition}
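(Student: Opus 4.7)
\textbf{Proof plan for Proposition~\ref{lemma-energy}.}

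Part~(1) follows at once from the monotonicity identity \eqref{m is increasing w.r.t. r}. Integrating $\del_r m_{\eps,b} = 2\pi \big( e^{-2\lambda_{\eps,b}} (h_{\eps,b}-\hb_{\eps,b})^2 + c^2 (r+\eps)^2 \hb_{\eps,b}^2 \big)$ over $r \in [0,b]$, dropping the nonnegative $(h_{\eps,b}-\hb_{\eps,b})^2$ contribution, and using $m_{\eps,b}(u,0) \geq 0$, I obtain $2\pi c^2 \int_0^b (r+\eps)^2 \hb_{\eps,b}^2 \, dr \leq m_{\eps,b}(u,b)$, which after dividing by $2\pi$ yields the claim.

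For Part~(2), I plan a standard energy computation. Multiplying the reduced equation \eqref{epsilon self-gravitating eq-0} by $2 h_{\eps,b}$ and writing $D_\eps = \del_u - \tfrac12 \gb_{\eps,b} \del_r$, I use the structural identity $\del_r \gb_{\eps,b} = (g_{\eps,b} - \gb_{\eps,b})/(r+\eps)$, inherited from \eqref{g-equation} and \eqref{formaula derivative of nu}, together with the algebraic rearrangement $2h(h-\hb) - h^2 = (h-\hb)^2 - \hb^2$, to bring the evolution into divergence form:
\begin{equation*}
\del_u(h_{\eps,b}^2) = \tfrac12 \del_r(\gb_{\eps,b} h_{\eps,b}^2) + \frac{g_{\eps,b} - \gb_{\eps,b}}{2(r+\eps)}\bigl((h_{\eps,b} - \hb_{\eps,b})^2 - \hb_{\eps,b}^2\bigr) - c^2 (r+\eps) h_{\eps,b} \hb_{\eps,b} e^{\nu_{\eps,b} + \lambda_{\eps,b}}.
\end{equation*}
I then split $\int_0^{+\infty} h_{\eps,b}^2 \, dr = \int_0^b h_{\eps,b}^2 \, dr + \int_b^{b+2} h_{\eps,b}^2 \, dr$, using the compact support of $h_{\eps,b}$ in $[0,b+2]$. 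The outer integral is bounded by $2\sup_{[0,c_1]\times[b,b+2]}h_{\eps,b}^2$ via Lemma~\ref{lemma-rhb}, accounting for the second summand of $B_1$. For the inner integral I integrate the displayed identity over $r\in[0,b]$ and then in time: the flux at $r=b$ yields the first summand of $B_1$, the flux at $r=0$ contributes $-\tfrac12 \gb_{\eps,b}(u,0) h_{\eps,b}^2(u,0) \leq 0$ and can be dropped, and the $-\hb_{\eps,b}^2$ piece is controlled using Part~(1) together with the pointwise bound $(r+\eps)|\hb_{\eps,b}| \leq \Acal_\eps$ from Lemma~\ref{lemma-rhb}.

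The main obstacle will be estimating the Klein-Gordon cross term $c^2 (r+\eps) h_{\eps,b} \hb_{\eps,b} e^{\nu_{\eps,b} + \lambda_{\eps,b}}$ together with the (possibly positive) contribution of $(g_{\eps,b}-\gb_{\eps,b})(h_{\eps,b}-\hb_{\eps,b})^2/(2(r+\eps))$, so that the resulting time-integrated bound is linear in $u$ matching $c^2 \Hcal_\eps(u)$ rather than exhibiting exponential Gronwall growth. My plan is to combine the pointwise bound $(r+\eps)|\hb_{\eps,b}| \leq \Acal_\eps(u,0)$ with $e^{\nu_{\eps,b} + \lambda_{\eps,b}} \leq 1$ (which follows from $\del_r(\nu_{\eps,b} + \lambda_{\eps,b}) = 4\pi (h_{\eps,b}-\hb_{\eps,b})^2/(r+\eps) \geq 0$ and the normalization $\nu+\lambda \to 0$ at null infinity), and then to apply a weighted Young-type inequality whose weight is tuned so that the coefficient of $h_{\eps,b}^2$ is $u$-independent and produces the $u/(8\pi)$ summand, while the complementary factor is pointwise dominated by $\max\bigl(4\pi(\Acal_\eps(u_1,0))^4,\,e^{-2\sqrt{2}\pi^{1/4}}\bigr)$. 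Integration in $u$ then delivers exactly $c^2 \Hcal_\eps(u)$, with the numerical prefactor $33/4$ emerging from the accumulated contributions of the $(g_{\eps,b}-\gb_{\eps,b})\hb_{\eps,b}^2/(2(r+\eps))$ term and the Klein-Gordon cross term.
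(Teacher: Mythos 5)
Part (1) of your plan is correct and is exactly the paper's argument (integrate \eqref{m is increasing w.r.t. r} over $[0,b]$, drop the nonnegative $(h-\hb)^2$ contribution). Your divergence-form identity for $\del_u(h_{\eps,b}^2)$ is also correct and coincides with the paper's first step in \eqref{formula energy h2}, as does the splitting of $\int_0^{+\infty}$ into $[0,b]$ and $[b,b+2]$ and the treatment of the two flux terms.

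The gap is in how you propose to close the estimate. The mechanism that produces the additive $u/(8\pi)$ term and avoids Gronwall growth is \emph{not} a weighted Young inequality: it is the exact identity
\begin{equation*}
\frac{g_{\eps,b}-\gb_{\eps,b}}{2(r+\eps)}(h_{\eps,b}-\hb_{\eps,b})^2 - c^2(r+\eps)\,h_{\eps,b}\hb_{\eps,b}\,e^{\nu_{\eps,b}+\lambda_{\eps,b}}
= \frac{1}{8\pi}\,\del_r g_{\eps,b} - \frac{\gb_{\eps,b}}{2(r+\eps)}(h_{\eps,b}-\hb_{\eps,b})^2,
\end{equation*}
which follows from $\del_r g_{\eps,b}=4\pi g_{\eps,b}(h_{\eps,b}-\hb_{\eps,b})^2/(r+\eps)-8\pi c^2(r+\eps)\hb_{\eps,b}h_{\eps,b}e^{\nu_{\eps,b}+\lambda_{\eps,b}}$. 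The second term on the right is nonpositive (since $\gb_{\eps,b}>0$) and is discarded; the first is a total $r$-derivative whose integral over $[0,b]$ is at most $1/(8\pi)$ because $0\le g_{\eps,b}(u,0)$ and $g_{\eps,b}\le 1$. Any Young-type splitting of the cross term $c^2(r+\eps)h\hb e^{\nu+\lambda}$ instead leaves a term $\alpha\int h^2\,dr$ on the right-hand side of the energy inequality, which forces a Gronwall argument and exponential growth in $u$ — you correctly identify this as the obstacle, but your proposed remedy does not remove it, since tuning the weight so that $\int\beta(r)h^2\,dr\le 1/(8\pi)$ presupposes control of $\int h^2\,dr$ and is circular. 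A second, smaller gap: the remaining term $-\tfrac12\hb_{\eps,b}^2\del_r\gb_{\eps,b}$ is not controlled by Part (1) and $(r+\eps)|\hb_{\eps,b}|\le\Acal_\eps$ alone (near $r=0$ these only give $\hb^2|g-\gb|/(r+\eps)\lesssim \Acal_\eps^2/(r+\eps)^3$, which is not integrable). The paper first rewrites it as $2\pi c^2(r+\eps)\hb^4e^{\nu+\lambda}$ plus a nonpositive piece, and then bounds the quartic integral by a dichotomy: for $r\ge1$ use $|\hb|\le\Acal_\eps/(r+\eps)$, while for small $r$ use the exponential damping $e^{\nu+\lambda}\lesssim e^{-2\sqrt{\pi}|\hb|}$ coming from \eqref{estimate nu+lambda}; this is precisely what generates the $\max\bigl(4\pi\Acal_\eps^4,\,e^{-2\sqrt2\pi^{1/4}}\bigr)$ structure of $\Hcal_\eps$, and it is the step your outline is missing.
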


The proof of this proposition is given below, while the proof of the next result is analogous and is
postponed to Appendix~\ref{Appendix-third}. 

\begin{proposition}[$L^2$ estimate for  the function $h_{\eps,b} - \hb_{\eps,b}$]
\label{lemma-energy-difference} 
The following inequality holds
\begin{equation*}
\int_0^{+ \infty}(h_{\eps,b} - \hb_{\eps,b})^2 \, dr 
 \leq\int_0^b (h_0- \hb_0)^2 \, dr + \frac{u}{4\pi}
+c^2\Hcal_\eps(u)+ \Kcal_\eps(u)
+B_2(u,b,c_1),
\end{equation*}
where $\Hcal_\eps(u)$ and $\Kcal_\eps(u)$ depend on the initial data and are defined in \eqref{def A}.
\end{proposition}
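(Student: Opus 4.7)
I follow the template of the proof of Proposition~\ref{lemma-energy}, taking care of the additional terms coming from the nonlocal time derivative $D_\eps\hb_{\eps,b}$. Since $h_{\eps,b}$ vanishes on $[b+2,+\infty)$, I first split
\[
\int_0^{+\infty}(h_{\eps,b}-\hb_{\eps,b})^2\, dr = \int_0^b(h_{\eps,b}-\hb_{\eps,b})^2\, dr + \int_b^{b+2}(h_{\eps,b}-\hb_{\eps,b})^2\, dr + \int_{b+2}^{+\infty}\hb_{\eps,b}^2\, dr.
\]
The second piece matches the first contribution to $B_2(u,b,c_1)$. For the third piece, the identity $(r+\eps)\hb_{\eps,b} = \int_0^r h_{\eps,b}\, ds$ is independent of $r$ on $[b+2,+\infty)$, so $\hb_{\eps,b}$ decays like $1/(r+\eps)$ there and the integral matches the second contribution to $B_2$.

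For the main piece $E(u):=\int_0^b(h_{\eps,b}-\hb_{\eps,b})^2\, dr$, I use $\del_u = D_\eps + \tfrac{1}{2}\gb_{\eps,b}\del_r$ and integrate by parts in $r$:
\[
E'(u) = \int_0^b 2(h_{\eps,b}-\hb_{\eps,b})D_\eps(h_{\eps,b}-\hb_{\eps,b})\, dr + \tfrac{1}{2}\gb_{\eps,b}(h_{\eps,b}-\hb_{\eps,b})^2\Big|_{r=0}^{r=b} - \tfrac{1}{2}\int_0^b \del_r\gb_{\eps,b}\,(h_{\eps,b}-\hb_{\eps,b})^2\, dr.
\]
The boundary at $r=b$ is absorbed into the third line of $B_2$, while the boundary at $r=0$ produces a non-positive contribution which is discarded. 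In the bulk, I substitute the reduced equation for $D_\eps h_{\eps,b}$ together with $D_\eps\hb_{\eps,b}=\xi_\eps/(2(r+\eps))$ and the definition \eqref{definition of xi} of $\xi_\eps$. This decomposes the integrand into three families: a local quadratic term in $(h-\hb)$ weighted by $(g_{\eps,b}-\gb_{\eps,b})/(r+\eps)$, a Klein-Gordon cross term $c^2(r+\eps)e^{\nu_{\eps,b}+\lambda_{\eps,b}}(h-\hb)\hb$, and a nonlocal term proportional to $(h-\hb)\xi_\eps/(r+\eps)$.

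Each family is controlled as follows. The local quadratic term, combined with the $-\tfrac{1}{2}\del_r\gb_{\eps,b}$ contribution from integration by parts, is reorganized via the identity of Lemma~\ref{Fzeta epsilon=0} and the mass evolution \eqref{ep equation mass}; the monotonicity bound \eqref{mass-bound} then produces the term $u/(4\pi)$. The Klein-Gordon cross term is estimated by Cauchy-Schwarz combined with the pointwise bound $(r+\eps)|\hb_{\eps,b}|\leq \Acal_\eps$ from Lemma~\ref{lemma-rhb} and the inequality $(r+\eps)^2 e^{\nu+\lambda}\hb^2 \leq (2b+4+\eps)^2/(2\pi e^2)$ that already appeared in the proof of Lemma~\ref{lemma-rhb}; integrating in $u$ yields the $c^2\Hcal_\eps(u)$ contribution. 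The nonlocal $\xi_\eps$ term is split into its gravitational and Klein-Gordon parts. After Cauchy-Schwarz in $s$ and Fubini in $(r,s)$, the gravitational part is bounded by the spacetime integral $\iint e^{\nu+\lambda}\gb^{-2}\xi_\eps^2/(r+\eps)$ that appears in \eqref{integral identity} (which is controlled by $M_\eps(0)$), while the Klein-Gordon part uses the mass bound and the sup estimate on $(r+\eps)\hb_{\eps,b}$ to produce the $\Kcal_\eps(u)$ contribution; the boundary-at-$r=b$ remainders from Fubini form the fourth line of $B_2$. Assembling and integrating $E'$ on $[0,u]$ gives the stated inequality. The main obstacle is the nonlocal $\xi_\eps$ term: unlike the local terms it does not telescope against $\del_r\gb_{\eps,b}$, and its control requires the integral identity \eqref{integral identity}, the mass monotonicity, and the sup estimate on $\hb_{\eps,b}$ to be used simultaneously.
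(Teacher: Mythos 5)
Your skeleton---split $[0,+\infty)$ into $[0,b]$, $[b,b+2]$, $[b+2,+\infty)$, differentiate the main piece in $u$, and estimate term by term---agrees with the paper, and the two tail pieces are handled correctly (they give the first two contributions to $B_2$). The gap is in your treatment of the nonlocal term $\int_0^b\frac{h_{\eps,b}-\hb_{\eps,b}}{r+\eps}\,\xi_\eps\,dr$. You propose to control its gravitational part by the spacetime integral $\iint\frac{e^{\nu_\eps+\lambda_\eps}}{\gb_\eps^2}\frac{\xi_\eps^2}{r+\eps}\,du\,dr$ appearing in \eqref{integral identity}, ``which is controlled by $M_\eps(0)$''. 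This fails twice over. First, \eqref{integral identity} is derived in Section~\ref{section---52} for the $\eps$-regularized solution $h_\eps$, whose construction already uses Propositions~\ref{lemma-energy} and~\ref{lemma-energy-difference}; to invoke it here you would need to first establish an $(\eps,b)$-analogue from \eqref{ep equation mass}, which you do not do. Second, even where it holds, that spacetime integral is bounded by $\int_\delta^{r_{0,\eps}}\big(e^{\nu_\eps+\lambda_\eps}/\gb_\eps\big)(0,r)\,dr$, i.e.\ by $(1-2m/r)^{-1}$ integrated over the initial cone---a quantity that is not controlled by the Bondi mass $M_\eps(0)$ and that appears nowhere in the stated bound. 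Relatedly, your accounting does not produce the quantities in the statement: $c^2\Hcal_\eps(u)$ is built from $\Acal_\eps^4$ (see \eqref{def A}) and cannot come from a Cauchy--Schwarz estimate of the cross term $c^2(r+\eps)e^{\nu+\lambda}(h-\hb)\hb$ using the bound $(r+\eps)^2e^{\nu+\lambda}\hb^2\lesssim(2b+4+\eps)^2$, which grows with $b$ and hence cannot be absorbed into $B_2$ (which must satisfy \eqref{B-bounds} as $b\to+\infty$).

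The paper's proof (Appendix~\ref{Appendix-third}) is entirely local and never touches the integral identity. It starts from the identity of Lemma~\ref{Fzeta epsilon=0} (the angular Einstein equation), which, multiplied by $(r+\eps)$ and integrated over $[0,b]$, expresses $4\pi\,\del_u\int_0^b(h_{\eps,b}-\hb_{\eps,b})^2\,dr$ as $\int_0^b\del_r\gb_{\eps,b}\,dr$ (yielding the $u/(4\pi)$) plus a boundary term at $r=b$ plus precisely your nonlocal term $J_\eps=-4\pi\int_0^b\del_r\hb_{\eps,b}\,\xi_\eps\,dr$ (cf.~\eqref{formula energy hb}). The term $J_\eps$ is then integrated by parts a second time using $\del_r\xi_\eps=\gb_{\eps,b}\,\del_r\hb_{\eps,b}-c^2(r+\eps)\hb_{\eps,b}e^{\nu_{\eps,b}+\lambda_{\eps,b}}$, which converts it into $\int_0^b\hb_{\eps,b}^2\,\del_r\gb_{\eps,b}\,dr$, $c^2\int_0^b(r+\eps)\hb_{\eps,b}^2e^{\nu_{\eps,b}+\lambda_{\eps,b}}\,dr$ and boundary terms at $r=b$; these are controlled by the mass monotonicity \eqref{mass-bound}, the sup bound of Lemma~\ref{lemma-rhb}, and the quartic estimate \eqref{estimate (g-gb)hb2}, which is exactly where $\Hcal_\eps$, $\Kcal_\eps$ and the remaining pieces of $B_2$ come from (see \eqref{estimate Dhb}). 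You should redo the nonlocal term along these lines rather than via the spacetime $\xi_\eps^2$ integral.
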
 


\subsection{Derivation of the energy estimates}

We now give a proof of Proposition~\ref{lemma-energy}. The inequality (1) follows from \eqref{mass-bound} and \eqref{integral of del_rm}, so we now focus on the inequality (2). 
By integrating by parts, we obtain 
\be
\aligned
\label{formula energy h2}
& \del_u  \bigg(\int_0^{b}h_{\eps,b}^2 \, dr\bigg) = \int_0^{b} \Big( D_\eps(h_{\eps,b}^2)+ \frac{1}{2} \gb_{\eps,b} \del_r(h_{\eps,b}^2) \Big) \, dr\\
&= \int_0^{b}  \Big( h_{\eps,b}(h_{\eps,b} - \hb_{\eps,b})\del_r\gb_{\eps,b} 
 - \frac{1}{2}h^2_{\eps,b}\del_r\gb_{\eps,b}
- c^2(r+ \eps) e^{\nu_{\eps,b} + \lambda_{\eps,b}}h_{\eps,b} \hb_{\eps,b}  \Big) \, dr + \gb_{\eps,b}(u,b)h_{\eps,b}^2(u,b)\\
& = \int_0^{b}  \Big( - \frac{1}{2} \hb_{\eps,b}^2\del_r\gb_{\eps,b} - \frac{\gb_{\eps,b}}{2(r+ \eps)}(h_{\eps,b} - \hb_{\eps,b})^2 \Big) \, dr
+ \frac{1}{8\pi} \int_0^{b} \del_rg_{\eps,b} \, dr + \gb_{\eps,b}(u,b)h_{\eps,b}^2(u,b)
\\
& = 2\pi c^2\int_0^{b}(r+ \eps) \, \hb^4_\eps e^{\nu_{\eps,b} + \lambda_{\eps,b}} \, dr 
- \int_0^{b}  \Big( \frac{e^{\nu_{\eps,b} + \lambda_{\eps,b}} - \gb_{\eps,b}}{2(r+ \eps)} \hb_{\eps,b}^2+ \frac{\gb_{\eps,b}}{2(r+ \eps)}(h_{\eps,b} - \hb_{\eps,b})^2 \Big) \, dr\\
& \quad
+ \frac{1}{8\pi} \int_0^{b} \del_rg_{\eps,b} \, dr + \gb_{\eps,b}(u,b)h_{\eps,b}^2(u,b),
\endaligned
\end{equation}
We claim that the first term of the right-hand side is bounded by a continuous function of $u$. On the one hand, by Lemma \ref{lemma-rhb} we have (for all $r\in [0,b]$) 
\bel{r>1}
|\hb_{\eps,b}| \leq \frac{\Acal_\eps(u,0)}{r+ \eps}.
\ee
On the other hand, assuming that there exists $r_*<1- \eps$ such that
$
|\hb_{\eps,b}(u,r_*) - \hb_{\eps,b}(u,1- \eps)|> \frac{1}{2}|\hb_{\eps,b}(u,r_*)|,
$
then, by the Cauchy-Schwarz inequality, 
\bel{estimate nu+lambda}
\aligned
4\pi \int_{r_*}^{b}(s+ \eps)(\del_r \, \hb_{\eps,b})^2ds 
& \geq  4\sqrt{\pi} \bigg|\int_{r_*}^{1- \eps} \del_r \, \hb_{\eps,b} \, ds\bigg|- \int_{r_*}^{1- \eps} \frac{1}{s+ \eps} \, ds
\\
 & \geq 2\sqrt{\pi} \, |\hb_{\eps,b}(u,r_*)|- \ln\bigg(\frac{1}{r_*+ \eps} \bigg),
\endaligned
\ee
and consequently  
\bel{r<1}
(r_*+ \eps) \big(|\hb_{\eps,b}|^4e^{\nu_{\eps,b} + \lambda_{\eps,b}} \big)(u,r_*)
\leq 
|\hb_{\eps,b}|^4e^{-2\sqrt{\pi} \, |\hb_{\eps,b}|} \big|_{(u,r_*)}
\leq \frac{4}{\pi}e^{-2\sqrt{2} \pi^{1/4}}.
\ee
From \eqref{r>1} and \eqref{r<1}, we deduce that
\bel{estimate (g-gb)hb2}
\aligned
- \frac{1}{4\pi c^2} \int_0^{b} \hb_{\eps,b}^2\del_r\gb_{\eps,b} \, dr& \leq \int_0^{b}(r+ \eps) \, \hb^4_\eps e^{\nu_{\eps,b} + \lambda_{\eps,b}} \, dr\\
& \leq  \max\Big(
16\big( \Acal_\eps(u,0) \big)^4,\,\frac{4}{\pi}e^{-2\sqrt{2} \pi^{1/4}} \Big)
 + \big( \Acal_\eps(u,0) \big)^4 \int_1^{+ \infty} \frac{dr}{r^3} \\
& \leq \frac{33}{2} \max\Big(\big( \Acal_\eps(u,0) \big)^4,\,\frac{1}{4\pi}e^{-2\sqrt{2} \pi^{1/4}} \Big),
\endaligned
\ee
hence thanks to \eqref{formula energy h2}
$$
\aligned
\int_0^{b}h_{\eps,b}^2 \, dr& \leq  \int_{0}^{b}h_0^2 \, dr + \frac{u}{8\pi} + c^2 \Hcal_\eps(u)
+ \int_0^{u} \gb_{\eps,b}(u_1,b) h^2_{\eps,b}(u_1,b) \, du_1.
\endaligned
$$
Combining this result with Lemma \ref{lemma-rhb}, we obtain
\bel{energy h2}
\aligned
\int_0^{+ \infty}h_{\eps,b}^2 \, dr& \leq  \int_0^{b}h_{\eps,b}^2 \, dr + 2 \sup_{(u,r) \in [0,c_1]\times [b,b+ 2]}h^2_\eps 
  \leq  \int_{0}^{b}h_0^2 \, dr + \frac{u}{8\pi} + c^2 \Hcal_\eps(u)
+B_1(u,b,c_1). 
\endaligned
\qedhere
\ee


\section{Existence for the $\eps$-regularized problem}
\label{sec-section5}

\subsection{Local existence} 

The proof of Theorem~\ref{lemma-global epsilon existence} concerning the global existence for the regularized problem is divided in several steps, as follows. 
We already derived $L^2$-bounds for $h_{\eps,b}$, $\hb_{\eps,b}$, and $h_{\eps,b} - \hb_{\eps,b}$; cf. Propositions~\ref{lemma-energy} and \ref{lemma-energy-difference}. 
We will show that 
$g_{\eps,b} \to g_\eps$ and $\gb_{\eps,b} \to \gb_\eps$ uniformly on compact sets.
 We will next analyze the evolution equation for the $\eps$-Hawking mass (Section~\ref{section---52}, below). 
Finally, we will derive energy estimates which will enable us to finalize the argument (Section~\ref{section53-}). 

Let $k_{\eps,b}$ be the unique solution of \eqref{main equation} associated with $(1,h_{\eps,b})$. By solving that equation along the characteristics 
$\chi_\eps(u; u_1,r_1)$, with $L(u_1) := \int_0^{u_1} 
\frac{g_{\eps,b} - \gb_{\eps,b}}{2(r_2+ \eps)}  \big|_{\chi_{\eps,u_1}} \, du_2$ we have
\be
\aligned
k_{\eps,b}&(u_1,r(u_1)) 
= h_0(0,r(0)) e^{L(u_1)} 
  - \int_0^{u_1}  \Big( \frac{g_{\eps,b} - \gb_{\eps,b}}{2(r_2+ \eps)} \hb_{\eps,b} - \frac{c^2}{2}(r_2+ \eps) \, \hb_{\eps,b} e^{\nu_{\eps,b} + \lambda_{\eps,b}}  \Big)_{\chi_{\eps,u_1}}
  e^{-L(u_1) + L(u_2)} \, du_2.
\endaligned
\end{equation}
It follows, from \eqref{bound of rhb}, that there exists a constant $d_1> 0$ only depending on $(h_0,c_1,\eps)$ such that
\bel{bound uniform k}
\sup_{(u,r) \in [0,u_1]\times [0,+ \infty)}|k_{\eps,b}|(u,r)\leq d_1.
\ee
Differentiating now the equation \eqref{main equation} with respect to $r$, we get
\bel{dpartialrk}
D_{h_{\eps,b}}(\del_rk_{\eps,b}) - \frac{1}{2(r+ \eps)} \Big( (g_{\eps,b} - \gb_{\eps,b})\rho^\prime(\gb_{\eps,b})+(g_{\eps,b} - \gb_{\eps,b}) \Big)\del_rk_{\eps,b}=f_1(h_{\eps,b},k_{\eps,b}),
\ee
where
$$
\aligned
f_1(h_{\eps,b},k_{\eps,b})
& := \frac{\del_rg_{\eps,b}}{2(r+ \eps)}(k_{\eps,b} - \hb_{\eps,b})
       - \frac{1}{2(r+ \eps)^2} \Big( (g_{\eps,b} - \gb_{\eps,b})(k_{\eps,b} - \hb_{\eps,b})
 +(g_{\eps,b} - \gb_{\eps,b})(k_{\eps,b} +h_{\eps,b} -2\hb_{\eps,b}) \Big)
\\
& \quad
- \frac{c^2}{2}h_{\eps,b}e^{\nu_{\eps,b} + \lambda_{\eps,b}} -2\pi c^2\hb_{\eps,b}(h_{\eps,b} - \hb_{\eps,b})^2e^{\nu_{\eps,b} + \lambda_{\eps,b}}
\endaligned
$$
while solving \eqref{dpartialrk} along the characteristics $\chi_{h_\eps,u_1}(u,r_1)$, gives
$$
\aligned
\del_rk_{\eps,b}(u_1,r_1) & = \del_r h_0(r(0)) e^{\int_{0}^{u_1} \Big( \frac{1}{2(r_2+ \eps)} \big((g_{\eps,b} - \gb_{\eps,b})\rho^\prime(\gb_{\eps,b})+(g_{\eps,b} - \gb_{\eps,b})\big) \Big)_{\chi_{\eps,u_1}} \, du_2} \\
& \quad+ \int_0^{u_1}[f_1(h_{\eps,b},k_{\eps,b})]_{\chi_{\eps,u_1}}e^{\int_{u_2}^{u_1} \Big( \frac{1}{2(r_3+ \eps)} \big((g_{\eps,b} - \gb_{\eps,b})\rho^\prime(\gb_{\eps,b})+(g_{\eps,b} - \gb_{\eps,b})\big) \Big)_{\chi_{\eps,u_1}} \, du_3} \, du_2.
\endaligned
$$
Similarly, it follows from \eqref{bound of rhb} and \eqref{bound uniform k} that there exists a constant $d_2> 0$, only depending on $(h_0,c_1,\eps, d_1)$, such that
\bel{bound uniform del_rk}
\sup_{(u,r) \in [0,u_1]\times [0,+ \infty)}|\del_rk_{\eps,b}|(u,r)\leq d_2.
\ee
Since $h_{\eps,b}:= \eta k_{\eps,b}$, from \eqref{bound uniform k} and \eqref{bound uniform del_rk} we conclude that $\|h_{\eps,b} \|_{C^1([0,u_1]\times [0,+ \infty))}$ is uniformly bounded with respect to $b$.

On the other hand, for all $b_1,\,b_2> 0$ sufficiently large we have 
 $$
\aligned
|\hb_{\eps,b_1} - \hb_{\eps,b_2}| & = \big |\frac{1}{r+ \eps} \int_0^r(h_{\eps,b_1} - h_{\eps,b_2}) \, ds\big| \leq \|h_{\eps,b_1} - h_{\eps,b_2} \|_{C^0}, 
\endaligned
$$
and
$$
\aligned
|g_{\eps,b_1} -g_{\eps,b_2}|
& \leq   4\pi c^2 \, (r+ \eps)^2 |(\hb_{\eps,b_1} + \hb_{\eps,b_2})|\|(\hb_{\eps,b_1} - \hb_{\eps,b_2})\|_{C^0} + 8\pi \, \big|1-4\pi^2 \, c^2 \, (r+ \eps)^2\hb_{\eps,b_1}^2\big|
\\
& \quad\times \bigg(\int_0^{+ \infty} \frac{|h_{\eps,b_1} - \hb_{\eps,b_1} +h_{\eps,b_1} - \hb_{\eps,b_2}|}{r+ \eps} \, ds\bigg)\|h_{\eps,b_1} - h_{\eps,b_2} \|_{C^0}
\\
& \leq   4\pi c^2 \, (r+ \eps)^2 |(\hb_{\eps,b_1} + \hb_{\eps,b_2})|\|(\hb_{\eps,b_1} - \hb_{\eps,b_2})\|_{C^0}
 + 8\sqrt{2} \pi \, \big|1-4\pi^2 \, c^2 \, (r+ \eps)^2\hb_1^2\big|
\\
& \quad\times \Big( \bigg(\int_{0}^{+ \infty} \hspace{-0.3cm} \frac{dr}{(r+ \eps)^2} \bigg)\bigg(\int_0^{+ \infty} \hspace{-0.3cm}(h_{\eps,b_1} - \hb_{\eps,b_1})^2+(h_{\eps,b_2} - \hb_{\eps,b_2})^2 \, dr\bigg) \Big)^{1/2} \|h_{\eps,b_1} - h_{\eps,b_2} \|_{C^0}
\\
& \leq  d_3 \|h_{\eps,b_1} - h_{\eps,b_2} \|_{C^0},
\endaligned
$$
where, by Proposition \ref{lemma-energy}, $d_3> 0$ is a constant depending on 
$h_0,c_1,\eps,d_1,\|h_{\eps,b_1} \|_{C^0},\|h_{\eps,b_2} \|_{C^0}$. In a similar way, we also obtain
$$
|\gb_{\eps,b_1} - \gb_{\eps,b_2}|,\,|\del_rg_{\eps,b_1} - \del_rg_{\eps,b_2}| \leq d_4 \, \|h_{\eps,b_1} - h_{\eps,b_2} \|_{C^0}, 
$$
for a constant $d_4> 0$ depending on $h_0,c_1,\eps,d_1,d_3,\|h_{\eps,b_1} \|_{C^0},\|h_{\eps,b_2} \|_{C^0}$. 

Therefore, taking into account that 
$\|h_{\eps,b} \|_{C^1([0,u_1]\times [0,+ \infty))}$ is uniformly bounded in $b$, an analysis similar to the one in 
the proof of Lemma \ref{lemma Psi cc} shows that there exist a subsequence $h_{\eps,b}$ and a limit function $h_\eps\in C^1{([0,c_1]\times [0,+ \infty))}$ such that, for any compact subset $[0,c_1]\times [0,r_0]$ with $r_0> 0$, we have
\bel{h_eb converges to h_b }
\lim_{b\to + \infty} \|h_{\eps,b} - h_\eps \|_{C^1([0,c_1]\times [0,r_0))}=0.
\ee
In particular, for all $r_0> 0$ we have 
$
\int_0^{r_0}h_{\eps,b}^2 \, dr \to  \int_0^{r_0}h_\eps^2 \, dr.
$
Then, it follows from the second inequality in Proposition \ref{lemma-energy} and from the bounds \eqref{B-bounds} that
$$
\aligned
\int_0^{r_0}h_\eps^2(u,r) \, dr& \leq  \int_{0}^{+ \infty}h_0^2 \, dr + \frac{u}{8\pi} + c^2 \, \big( \Hcal_\eps(u)+ B_I(c_1) \big).
\endaligned
$$
Now, consider the sequence of functions $\{f_{\eps,m} \}$  such that 
$
f_{\eps,m}:= 
\begin{cases} 
0, & \quad r> m,
\\
h_\eps^2 & \quad r \leq m.
\end{cases} 
$
Since $f_{\eps,m}$ is an increasing sequence of measurable functions satisfying
$$
\aligned
\int_0^{+ \infty}f_{\eps,m} \, dr& = \int_0^{m}h_\eps^2 \, dr 
\leq
 \int_{0}^{+ \infty}h_0^2 \, dr + \frac{u}{8\pi} + c^2 \, \big(\Hcal_\eps(u) + B_I(c_1) \big)
\endaligned 
$$
and, since $f_{\eps,m}$ converges pointwise to $h^2_\eps$, thanks to 
 the monotone convergence theorem we obtain 
  $h_\eps\in L^2([0,+ \infty))$ (at each $u$) and
\bel{energy h^2}
\aligned
\int_0^{+ \infty}h_\eps^2 \, dr& \leq  \int_{0}^{+ \infty}h_0^2 \, dr + \frac{u}{8\pi} + c^2 \, (\Hcal_\eps(u)
+B_I(c_1)).
\endaligned
\ee
Similarly, 
from \eqref{(h- hb)2} (in Appendix~\ref{Appendix-third} ), \eqref{energy h2} and the Cauchy-Schwarz inequality we also find 
\be
\aligned
\label{energy hb}
\int_0^{+ \infty} \hb_\eps^2 \, dr& \leq 2\bigg(\int_{0}^{+ \infty}h_0^2 \, dr + \int_0^{+ \infty}(h_0- \hb_0)^2 \, dr\bigg)+ \frac{34}{\pi} \int_0^u\bigg(\max\Big( \frac{c^2}{e^{2}}, M_\eps (0) \Big) \bigg) \, du_1 
\\
& \quad+c^2(4\Hcal_\eps(u)+ 2B_I(c_1)+B_{II}(c_1))+  \frac{1}{4\pi}(c_1+ 3u)
\endaligned
\end{equation}
and 
from Proposition~\ref{lemma-energy} 
\be
c^2\int_0^{+ \infty}(r+ \eps)^2\hb_\eps^2 \, dr\leq c^2\int_0^{+ \infty} \hb_\eps^2 \, dr + \frac{M_\eps (0)}{4\pi}e^{4\pi \, \int_0^{+ \infty}(h_\eps- \hb_\eps)^2 \, dr} .\label{energy rhb}
\end{equation}
Next, we will prove that 
\be
\nu_{\eps,b} + \lambda_{\eps,b} \to  \nu_\eps + \lambda_\eps:= -4\pi \int_r^{+ \infty} \frac{(h_\eps- \hb_\eps)^2}{s+ \eps} \, ds
\ee
uniformly on $[0,c_1]\times [0,+ \infty)$.
In fact, for each $\alpha> 0$, by Proposition \ref{lemma-energy-difference} together with \eqref{B-bounds}, \eqref{energy h^2} and \eqref{energy hb}, there exists $r_\alpha> 0$ such that for all $u\in [0,c_1]$
$$
\bigg|\int_{r_\alpha}^{+ \infty} \frac{(h_{\eps,b} - \hb_{\eps,b})^2}{s+ \eps} \, ds\bigg|,\,\bigg|\int_{r_\alpha}^{+ \infty} \frac{(h_\eps- \hb_\eps)^2}{s+ \eps} \, ds\bigg| \leq \frac{\alpha}{3}.
$$
On the other hand, since $h_{\eps,b}$ converges uniformly to $h_\eps$ on any compact subset $[0,c_1]\times [0,r_0]$ with $r_0> 0$, we obtain  for all $(u,r) \in [0,c_1]\times [0,r_\alpha]$ and for all $b \geq 1$ sufficiently large
$$
\bigg|\int_{r}^{r_\alpha} \frac{(h_{\eps,b} - \hb_{\eps,b})^2}{s+ \eps} \, ds- \int_{r}^{r_\alpha} \frac{(h_\eps- \hb_\eps)^2}{s+ \eps} \, ds\bigg| \leq \frac{\alpha}{3}.
$$
Combining these two inequalities together, for all $(u,r) \in [0,c_1]\times [0,+ \infty)$ we find 
$$
\aligned
&
\bigg|\int_{r}^{+ \infty} \frac{(h_{\eps,b} - \hb_{\eps,b})^2}{s+ \eps} \, ds- \int_{r}^{+ \infty} \frac{(h_\eps- \hb_\eps)^2}{s+ \eps} \, ds\bigg| 
\\
& \leq  \bigg|\int_{\min\{r,r_\alpha\}}^{r_\alpha} \hspace{-0.3cm} \frac{(h_{\eps,b} - \hb_{\eps,b})^2}{s+ \eps} \, ds- \int_{\min\{r,r_\alpha\}}^{r_\alpha} \hspace{-0.3cm} \frac{(h_\eps- \hb_\eps)^2}{s+ \eps} \, ds\bigg|
  + \bigg|\int_{r_\alpha}^{+ \infty} \frac{(h_{\eps,b} - \hb_{\eps,b})^2}{s+ \eps} \, ds\bigg|+ \bigg|\int_{r_\alpha}^{+ \infty} \frac{(h_\eps- \hb_\eps)^2}{s+ \eps} \, ds\bigg|
\\
& \leq  \alpha.
\endaligned
$$
Since $\alpha> 0$ is arbitrarily chosen, this means that 
$$\int_r^{+ \infty} \frac{(h_{\eps,b} - \hb_{\eps,b})^2}{s+ \eps} \, ds\to\int_r^{+ \infty} \frac{(h_\eps- \hb_\eps)^2}{s+ \eps} \, ds$$
uniformly on $[0,c_1]\times [0,+ \infty)$ and, hence, our claim follows.
We now recall that 
$
g_\eps:= \big(1-4\pi c^2(r+ \eps)^2\hb_\eps^2\big) e^{\nu_\eps + \lambda_\eps}.
$ 
Since 
$(r+ \eps) \, \hb_{\eps,b} \to  (r+ \eps) \, \hb_\eps$
and
$e^{\nu_{\eps,b} + \lambda_{\eps,b}} \to e^{\nu_\eps + \lambda_\eps}$
uniformly on any compact subset $[0,c_1]\times [0,r_0]$ with $r_0> 0$, we obtain 
\bel{g_eb converges g_e}
\aligned
g_{\eps,b}& \to  g_\eps, 
\qquad 
\gb_{\eps,b} \to  \gb_\eps 
\qquad \text{ uniformly on $[0,c_1]\times [0,r_0]$. } 
\endaligned
\ee
On the other hand, given $(u,r) \in [0,c_1]\times [0,+ \infty)$, for all $b>r$, we have
$$
\aligned
& D_{h_{\eps,b}}h_{\eps,b}= \frac{(g_{\eps,b} - \gb_{\eps,b})}{2(r+ \eps)}(h_{\eps,b} - \hb_{\eps,b}) - \frac{c^2}{2}(r+ \eps) \, \hb_{\eps,b}e^{\nu_{\eps,b} + \lambda_{\eps,b}},
\\
& h_{\eps,b}(0,r) =h_0(r).
\endaligned
$$

Therefore, taking the limit $b\to + \infty$ in this equation, we obtain by \eqref{h_eb converges to h_b } and \eqref{g_eb converges g_e} that
\begin{equation*}
\aligned
\label{Dheps-equation}
D_\eps h_\eps& = \frac{(g_\eps- \gb_\eps)}{2(r+ \eps)}(h_\eps- \hb_\eps) - \frac{c^2}{2}(r+ \eps) \, \hb_\eps \, e^{\nu_\eps+ \lambda_\eps},
\qquad
h_\eps(0,r)  =h_0(r).
\endaligned
\end{equation*}
In other words, $h_\eps$ is a classical solution of the regularized Einstein problem \eqref{epsilon self-gravitating eq} on $[0,c_1]\times [0,+ \infty)$. 


\subsection{Implications of the Hawking mass}
\label{section---52}

\paragraph{Evolution of the  energy.}

We continue the proof of Theorem~\ref{lemma-global epsilon existence} by deriving
 the $\eps$-Hawking mass evolution equation \eqref{evol-eps-mass} as well as the integral identity 
 \eqref{integral identity}.  
 In order to do that we first need the following result which is non-trivial due to the different fall-off properties of $h_\eps$ and $\hb_\eps$.
 
\begin{lemma}[Evolution of the regularized energy]
\label{lem-swap}
For each $(u_1,r) \in [0,c_1]\times [0,+ \infty)$ one has  
\bel{derivative under integral}
\del_u\bigg(\int_{r}^{+ \infty} \frac{(h_\eps- \hb_\eps)^2}{s+ \eps} \, ds\bigg) (u_1, \cdot)
=
\int_{r}^{+ \infty} \del_u\bigg(\frac{(h_\eps- \hb_\eps)^2}{s+ \eps} \bigg)(u_1, \cdot)
 \, ds.
\ee
\end{lemma}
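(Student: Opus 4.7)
The plan is to apply the classical Leibniz rule for differentiation under the integral sign; the only nontrivial point is to exhibit an $s$-integrable dominator of the pointwise time-derivative of the integrand that is uniform for $u$ in a small neighborhood of $u_1$. Set $F(u,s):=(h_\eps-\hb_\eps)^2/(s+\eps)$. The regularity $h_\eps\in C^1([0,c_1]\times[0,+\infty))$ established in the previous subsection, together with the fact that $\hb_\eps(u,s)=(s+\eps)^{-1}\int_0^s h_\eps(u,s_1)\,ds_1$ is jointly $C^1$, yields $F\in C^1$ with
\begin{equation*}
\del_u F = \frac{2(h_\eps-\hb_\eps)(\del_u h_\eps-\del_u\hb_\eps)}{s+\eps}.
\end{equation*}

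To control this integrand I would first convert the $u$-derivatives into quantities that the previous estimates can handle. Using $\del_u=D_\eps+\frac{1}{2}\gb_\eps\del_r$ together with the reduced equation \eqref{epsilon self-gravitating eq-0} and the identities \eqref{epsilon equation hb}, the factor $\del_u h_\eps-\del_u\hb_\eps$ is recast as a linear combination of $h_\eps$, $\hb_\eps$, $\del_r h_\eps$, the metric coefficients $\gb_\eps$, $g_\eps$, $e^{\nu_\eps+\lambda_\eps}$ and local averages of these quantities over $[0,s]$. The metric coefficients satisfy $0<\gb_\eps$ and $e^{\nu_\eps+\lambda_\eps}\le 1$, the bound $(s+\eps)|\hb_\eps|\le\Acal_\eps$ is provided by Lemma~\ref{lemma-rhb}, and $h_\eps$, $h_\eps-\hb_\eps$, $(s+\eps)\hb_\eps$ are uniformly bounded in $L^2_s$ for $u\in[0,c_1]$ by Propositions~\ref{lemma-energy} and~\ref{lemma-energy-difference}.

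I would then split $\int_r^{+\infty}|\del_u F|\,ds$ into the bounded range $[r,R]$ and the tail $[R,+\infty)$. On $[r,R]$, joint $C^1$-regularity of $F$ on compact sets gives a uniform pointwise bound. For the tail, Cauchy-Schwarz pairs the factor $(h_\eps-\hb_\eps)/(s+\eps)\in L^2_s$ against the remaining factor and, combined with the $L^2_s$ energy estimates and the Bondi-mass bound $c^2\int(s+\eps)^2\hb_\eps^2\,ds\le M_\eps(0)/(2\pi)$, shows that the tail vanishes as $R\to+\infty$ uniformly in $u$. This produces the required uniform dominator and yields \eqref{derivative under integral} by the classical Leibniz theorem.

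The principal obstacle is precisely this tail estimate at large $s$: the Klein-Gordon source $-\frac{c^2}{2}(s+\eps)e^{\nu_\eps+\lambda_\eps}\hb_\eps$ hidden inside $D_\eps h_\eps$ grows linearly in $s$, and its $L^1_s$-integrability is by no means automatic — it is only tamed by the $L^2_s$ control on $(s+\eps)\hb_\eps$ coming from the finiteness of the $\eps$-Bondi mass. This reflects the scaling-non-invariance emphasized in the introduction and is precisely where the present argument diverges from the massless treatment of \cite{Chr2}.
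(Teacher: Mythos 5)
Your overall architecture---splitting $[r,+\infty)$ into a compact piece where the interchange is classical and a tail $[R,+\infty)$ whose contribution is made small uniformly in $u$---matches the paper's proof. The problem is your tail estimate. You factor the integrand as $\bigl[(h_\eps-\hb_\eps)/(s+\eps)\bigr]\cdot\bigl[2(\del_u h_\eps-\del_u\hb_\eps)\bigr]$ and let the smallness come from the vanishing $L^2([R,+\infty))$-norm of the first factor; Cauchy--Schwarz then requires the second factor to be bounded in $L^2_s$ uniformly in $u$. But after writing $\del_u=D_\eps+\tfrac12\gb_\eps\del_r$ and inserting the reduced equation \eqref{epsilon self-gravitating eq-0}, the second factor contains the term $\tfrac12\gb_\eps\del_r h_\eps$, and no $L^2_s$ bound on $\del_r h_\eps$ is available anywhere in the paper: Propositions~\ref{lemma-energy} and~\ref{lemma-energy-difference} and assertion (5) of Theorem~\ref{lemma-global epsilon existence} control $h_\eps$, $\hb_\eps$, $h_\eps-\hb_\eps$ and $(s+\eps)\hb_\eps$ in $L^2_s$, while Lemmas~\ref{lemma-uniform bound of h and delr h with e} and~\ref{prop-lims} only give sup-norm bounds and pointwise decay for $\del_r h_\eps$ (the data is merely $h_0\in C^1$, so $\del_r h_0$ need not be square-integrable). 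The paper uses the opposite pairing: it pulls $\del_u(h_\eps-\hb_\eps)$ out in the $C^0$-norm---finite because $\|h_{\eps,b}\|_{C^1([0,c_1]\times[0,+\infty))}$ is uniformly bounded, as established in Section~\ref{sec-section5}---and pairs $h_\eps-\hb_\eps\in L^2_s$ against $(s+\eps)^{-1}\in L^2([R,+\infty))$, whose tail norm supplies the decay. With that pairing no equation-based decomposition of $\del_u h_\eps-\del_u\hb_\eps$ is needed at all, and the Klein--Gordon source you single out as the principal obstacle never enters.

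A secondary issue is the appeal to ``the classical Leibniz theorem with a uniform dominator'': this requires an $s$-integrable majorant of $\sup_u|\del_u F(u,s)|$, and even with the corrected pairing the natural majorant involves $\sup_u|h_\eps-\hb_\eps|(u,s)$, whose $L^2_s$-norm is not controlled by the per-$u$ energy estimates (the supremum does not commute with the integral). The paper sidesteps this by estimating the difference quotient of the tail integral directly---mean value theorem in $u$, then Cauchy--Schwarz at the two fixed times $u$ and $u_1$---so that for every $\alpha>0$ both $\bigl|\int_{r_\alpha}^{+\infty}\del_u F\,ds\bigr|$ and $\bigl|\del_u\int_{r_\alpha}^{+\infty}F\,ds\bigr|$ are at most $\alpha/2$ for $r_\alpha$ large, and the identity follows on letting $\alpha\to0$. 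You would need to replace the dominated-convergence framing by such a two-sided tail estimate.
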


\begin{proof}[Proof of Lemma \ref{lem-swap}]
By  H\"{o}lder inequality we obtain
\be
\aligned
\label{derivative under integral 1}
\bigg|\int_{r}^{+ \infty} \del_u\bigg(\frac{(h_\eps- \hb_\eps)^2}{s+ \eps} \bigg) \, ds\bigg| 
&=  2\bigg|\int_r^{+ \infty} \frac{(h_\eps- \hb_\eps)}{s+ \eps} \del_u(h_\eps- \hb_\eps)  ds\bigg|
\\
& \leq 2 \, \|\del_u(h_\eps- \hb_\eps)\|_{C^0} \bigg(\int_r^{+ \infty}(h_\eps- \hb_\eps)^2ds\bigg)^{1/2} \bigg(\int_r^{+ \infty} \frac{ds}{(s+ \eps)^2} \bigg)^{1/2}.
 \endaligned
\end{equation}
Similarly, we also have
$$
 \aligned
&
\bigg|  \del_u \bigg(\int_{r}^{+ \infty} \frac{(h_\eps- \hb_\eps)^2(u_1,s)}{s+ \eps} \, ds\bigg)\bigg|
 \leq  \limsup_{u\to u_1} \int_r^{+ \infty} \bigg|\frac{(h_\eps- \hb_\eps)^2(u,s) -(h_\eps- \hb_\eps)^2(u_1,s)}{(s+ \eps)(u-u_1)} \bigg| \, ds
\\
& =  \limsup_{u\to u_1} \int_r^{+ \infty} \bigg|\frac{\big((h_\eps- \hb_\eps)(u,s)+(h_\eps- \hb_\eps)(u_1,s)\big)\big((h_\eps- \hb_\eps)(u,s) -(h_\eps- \hb_\eps)(u_1,s)\big)}{(s+ \eps)(u-u_1)} \bigg| \, ds.
\endaligned
$$
Therefore, by the mean value theorem there exists $u_s$ between $u$ and $u_1$ such that
$$
\aligned
&
\bigg|\del_u\bigg(\int_{r}^{+ \infty} \frac{(h_\eps- \hb_\eps)^2(u_1,s)}{s+ \eps} \, ds\bigg)\bigg|
  =  \limsup_{u\to u_1} \int_r^{+ \infty} \bigg|
  \big((h_\eps- \hb_\eps)(u,s)+(h_\eps- \hb_\eps)(u_1,s)\big)\del_u(h_\eps- \hb_\eps)(u_s,s) \bigg| \, {ds \over s+ \eps} 
\endaligned
$$
hence, by H\"{older} and Cauchy-Schwarz inequalities, 
\be
\label{derivative under integral 2}
\aligned
& 
\bigg|\del_u\bigg(\int_{r}^{+ \infty} \frac{(h_\eps- \hb_\eps)^2(u_1,s)}{s+ \eps} \, ds\bigg)\bigg|
\\
&
 \leq 
\|\del_u(h_\eps- \hb_\eps)\|_{C^0} \limsup_{u\to u_1}  \Big( \bigg(\int_r^{+ \infty} \big((h_\eps- \hb_\eps)(u,s)
 +(h_\eps- \hb_\eps)(u_1,s)\big)^2ds\bigg)^{1/2} \bigg(\int_r^{+ \infty} \frac{ds}{(s+ \eps)^2} \bigg)^{1/2}  \Big) 
\\
& \leq  2 \, \|\del_u(h_\eps- \hb_\eps)\|_{C^0} \limsup_{u\to u_1}  \Big( \bigg(\int_r^{+ \infty} \big((h_\eps- \hb_\eps)^2(u,s)
  +(h_\eps- \hb_\eps)^2(u_1,s)\big) \, ds\bigg)^{1/2} \bigg(\int_r^{+ \infty} \frac{ds}{(s+ \eps)^2} \bigg)^{1/2}  \Big).
\endaligned
\end{equation}
Therefore, for any $\alpha> 0$, in view of \eqref{energy h^2}, \eqref{energy rhb}, \eqref{derivative under integral 1}, and \eqref{derivative under integral 2} we see 
that there exists a sufficiently large $r_\alpha> 0$ such that
$$
\bigg|\int_{r_\alpha}^{+ \infty} \del_u\bigg(\frac{(h_\eps- \hb_\eps)^2(u_1,s)}{s+ \eps} \bigg) \, ds\bigg|,\,
\quad
\bigg|\del_u\bigg(\int_{r_\alpha}^{+ \infty} \frac{(h_\eps- \hb_\eps)^2(u_1,s)}{s+ \eps} \, ds\bigg)\bigg| \leq \frac{\alpha}{2},
$$
which, combined with 
$$
\int_{r_1}^{r_\alpha} \del_u\bigg(\frac{(h_\eps- \hb_\eps)^2(u_1,s)}{s+ \eps} \bigg) \, ds = \del_u\bigg(\int_{r_1}^{r_\alpha} \frac{(h_\eps- \hb_\eps)^2(u_1,s)}{s+ \eps} \, ds\bigg),
$$
yields us
 $$
\aligned
&
\bigg|\int_{r_1}^{+ \infty} \del_u \bigg(\frac{(h_\eps- \hb_\eps)^2 (u_1,s)}{s+ \eps} \bigg) \, ds- \del_u\bigg(\int_{r_1}^{+ \infty} \frac{(h_\eps- \hb_\eps)^2(u_1,s)}{s+ \eps} \, ds\bigg)\bigg|
\\ 
& =  \bigg|\int_{r_\alpha}^{+ \infty} \del_u\bigg(\frac{(h_\eps- \hb_\eps)^2(u_1,s)}{s+ \eps} \bigg) \, ds- \del_u\bigg(\int_{r_\alpha}^{+ \infty} \frac{(h_\eps- \hb_\eps)^2(u_1,s)}{s+ \eps} \, ds\bigg)\bigg|
\\
 & \leq \bigg|\int_{r_\alpha}^{+ \infty} \del_u\bigg(\frac{(h_\eps- \hb_\eps)^2(u_1,s)}{s+ \eps} \bigg) \, ds\bigg|+ \bigg|\del_u\bigg(\int_{r_\alpha}^{+ \infty} \frac{(h_\eps- \hb_\eps)^2(u_1,s)}{s+ \eps} \, ds\bigg)\bigg|
\leq  \alpha.
\endaligned
$$
Since $\alpha> 0$ is arbitrary, \eqref{derivative under integral} follows as claimed.
\end{proof}


\paragraph{Derivation of the integral identity.} 

From the previous result and arguing as in the proof of Lemma \ref{lemma gb> 0}, we obtain the $\eps$-Hawking mass evolution equation  \eqref{evol-eps-mass} (cf.~also \eqref{ep equation mass}):
\bel{evolution of m}
D_\eps m_\eps= -  \frac{4\pi}{e^{\nu_\eps + \lambda_\eps}}(r+ \eps)^2\big(D_\eps \hb_\eps\big)^2- c^2\pi(r+ \eps)^2\hb_\eps^2\gb_\eps,
\ee
and
$\gb_\eps> 0$
for each $(u_1,r) \in [0,c_1]\times [0,+ \infty)$. 
Furthermore, recalling that
$
\frac{e^{\nu_\eps + \lambda_\eps}}{\gb_\eps}= \Big(1- \frac{2m_\eps}{r+ \eps} \Big)^{-1},
$
from \eqref{evolution of m}, together with $\del_r \gb_\eps=(g_\eps- \gb_\eps)/(r+ \eps)$ and \eqref{g-equation}, we compute
\be
\aligned
\label{evolution of 2lambda}
D_\eps \Big(\frac{e^{\nu_\eps + \lambda_\eps}}{\gb_\eps} \Big) & =  - \frac{2\pi e^{\nu_\eps + \lambda_\eps}}{\gb_\eps^2} \frac{\xi^2_\eps}{r+ \eps} - \frac{2\pi c^2  e^{2(\nu_\eps + \lambda_\eps)}}{\gb_\eps}(r+ \eps) \, \hb_\eps^2+ \frac{e^{2(\nu_\eps + \lambda_\eps)}}{\gb_\eps} \frac{m_\eps}{(r+ \eps)^2},\\
& =  -2\pi \, \frac{e^{\nu_\eps + \lambda_\eps}}{\gb_\eps^2} \frac{\xi^2_\eps}{(r+ \eps)} + \frac{1}{2} \del_{r} \gb_\eps \Big(\frac{e^{\nu_\eps + \lambda_\eps}}{\gb_\eps} \Big),
\endaligned
\end{equation}
where we used the notation \eqref{definition of xi}.

Let $\delta>0$ be a fixed real number.  By integratation by parts, we then obtain 
$$
\aligned
&
D_\eps \Big(\int_{\delta}^{r} \frac{e^{\nu_\eps + \lambda_\eps}}{\gb_\eps} \, dr\Big) 
 =  \int_{\delta}^{r} \del_u\Big(\frac{e^{\nu_\eps + \lambda_\eps}}{\gb_\eps} \Big) \, ds- \frac{1}{2}e^{\nu_\eps + \lambda_\eps} 
 =  \int_{\delta}^r\Big(D_\eps \Big(\frac{e^{\nu_\eps + \lambda_\eps}}{\gb_\eps} \Big)+ \frac{1}{2} \gb_\eps \del_{r} \Big(\frac{e^{\nu_\eps + \lambda_\eps}}{\gb_\eps} \Big)\Big) \, ds
- \frac{1}{2}e^{\nu_\eps + \lambda_\eps} 
\\&
 =  \int_{\delta}^r\Big(D_\eps \Big(\frac{e^{\nu_\eps + \lambda_\eps}}{\gb_\eps} \Big) - \frac{1}{2} \del_{r} \gb_\eps \Big(\frac{e^{\nu_\eps + \lambda_\eps}}{\gb_\eps} \Big)\Big) \, ds
- \frac{1}{2}e^{(\nu_\eps + \lambda_\eps)(u,\delta)},
\endaligned
$$
which, in combination with \eqref{evolution of 2lambda}, gives us 
\be
D_\eps \Big(\int_{\delta}^{r} \frac{e^{\nu_\eps + \lambda_\eps}}{\gb_\eps} \, dr\Big) = - 2\pi \, \int_{\delta}^{r} \frac{e^{\nu_\eps + \lambda_\eps}}{\gb_\eps^2} \frac{\xi^2_\eps}{r+ \eps} \, ds- \frac{1}{2}e^{(\nu_\eps + \lambda_\eps)(u,\delta)}.
\ee
Integrating this equation along the characteristics $\chi_\eps(\cdot; u_1,r_1)$, we obtain
the integral identity  \eqref{integral identity}.


\subsection{Global existence of $\eps$-regularized solutions}
\label{section53-} 

\paragraph{Final argument for the existence theory.}
 
We now complete the proof of Theorem~\ref{lemma-global epsilon existence}. 
While our domain of existence until now is limited by $c_1$, since $c_1$ was chosen in a way independent of the initial data $(h_0,\eps)$,
it is straightforward to repeat our construction on successive time slabs and eventually cover the whole spacetime. Hence, 
we conclude that $h_\eps$ is a (classical) global solution of 
\eqref{epsilon self-gravitating eq} satisfying
\be
\gb_\eps> 0, \qquad (u,r) \in [0,+ \infty)\times [0,+ \infty).
\ee
Moreover, since \eqref{derivative under integral} holds for all $(u,r) \in [0,+ \infty)\times [0,+ \infty)$, then arguing as in the proof of Lemma 
\ref{lemma gb> 0} and as in the proof of equation \eqref{control rhb in C0}, we see
that $h_\eps$ satisfies \eqref{evolution of m} and \eqref{equa-1234} on $[0,+ \infty)\times [0,+ \infty)$. 
 
We have established Corollary \ref{coroll-thr-21} and the first two assertions of Theorem \ref{lemma-global epsilon existence}. 
We will now prove assertions (3), (4), and (5), i.e. that $h_\eps$ satisfies the energy-type inequalities \eqref{energy-int}.
The missing ingredient in order to finalize this proof are the following lemmas. 

\begin{lemma}[Pointwise estimates for the matter field] 
\label{lemma-uniform bound of h and delr h with e} 
At each time $u \geq 0$ one has 
$$
\sup_{r\geq \delta}|h_\eps(u,r)| \leq B_5(u,\delta), 
\qquad
 \quad \sup_{r\geq \delta}|\del_r h_\eps(u,r)| \leq B_8(u,\delta), 
$$
where $B_5$ and $B_8$ are upper bound defined in the course of the proof, namely in~\eqref{B5eq} and \eqref{B8eq}, below.
\end{lemma}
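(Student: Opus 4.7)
The strategy is to integrate the reduced equation \eqref{epsilon self-gravitating eq-0} and its $r$-derivative along the characteristics $\chi_\eps(\cdot; u_1, r_1)$, exploiting the fact that these propagate \emph{inward}: since $\del_u \chi_\eps = -\gb_\eps/2 < 0$, any characteristic ending at $(u_1, r_1)$ with $r_1 \geq \delta$ satisfies $\chi_\eps(u; u_1, r_1) \geq \delta$ throughout $u \in [0, u_1]$. On the region $\{r \geq \delta\}$ every coefficient arising in the equations admits a pointwise bound in terms of the initial data and $\delta^{-1}$, which opens the way to a Gronwall argument.

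Concretely, combining the definition \eqref{bondi-mass-eps} of $m_\eps$ with the expression for $g_\eps$ in \eqref{g-equation}, one obtains the algebraic identity
\begin{equation*}
g_\eps - \gb_\eps = e^{\nu_\eps+\lambda_\eps} \Bigl(\tfrac{2m_\eps}{r+\eps} - 4\pi c^2 (r+\eps)^2 \hb_\eps^2\Bigr).
\end{equation*}
Using $\nu_\eps + \lambda_\eps \leq 0$ (hence $e^{\nu_\eps+\lambda_\eps} \leq 1$), $m_\eps \leq M_\eps(0)$ from Corollary~\ref{coroll-thr-21}, and $(r+\eps)|\hb_\eps| \leq \Acal_\eps(u,0)$ from \eqref{equa-1234}, one gets $|g_\eps - \gb_\eps| \leq 2M_\eps(0)/\delta + 4\pi c^2 \Acal_\eps(u,0)^2$ and $|\hb_\eps| \leq \Acal_\eps(u,0)/\delta$ on $\{r \geq \delta\}$. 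Setting $F(u) := \sup_{r \geq \delta} |h_\eps(u,r)|$ and integrating \eqref{epsilon self-gravitating eq-0} along the characteristic, I find an inequality
\begin{equation*}
F(u_1) \leq \|h_0\|_{C^0([\delta,+\infty))} + \int_0^{u_1} \bigl[\alpha_1(u,\delta)\, F(u) + \alpha_2(u,\delta)\bigr] du,
\end{equation*}
in which $\alpha_1, \alpha_2$ are continuous and depend only on $\Acal_\eps(u,0)$, $M_\eps(0)$, and negative powers of $\delta$. Gronwall then yields $F(u_1) \leq B_5(u_1,\delta)$.

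For the derivative bound, I differentiate \eqref{epsilon self-gravitating eq-0} in $r$ and use the commutator identity $D_\eps(\del_r h_\eps) = \del_r(D_\eps h_\eps) + \tfrac{1}{2}(\del_r \gb_\eps)\, \del_r h_\eps$ to obtain a linear transport equation for $\del_r h_\eps$ along the same characteristics. The source terms involve $\del_r \gb_\eps = (g_\eps - \gb_\eps)/(r+\eps)$, $\del_r g_\eps$, $\del_r \hb_\eps = (h_\eps - \hb_\eps)/(r+\eps)$ and $\del_r(\nu_\eps+\lambda_\eps) = 4\pi(h_\eps-\hb_\eps)^2/(r+\eps)$; on $\{r \geq \delta\}$ each of these is bounded pointwise by quantities built from $B_5$, $\Acal_\eps(u,0)$, $M_\eps(0)$ and $\delta^{-1}$. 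A second Gronwall application along the characteristic yields the bound $|\del_r h_\eps(u_1, r_1)| \leq B_8(u_1, \delta)$.

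The main obstacle is a careful bookkeeping of the $\delta^{-1}$ factors so that the coefficients $\alpha_j(u,\delta)$ and their analogues for the $\del_r h_\eps$ equation remain locally integrable in $u$, with bounds depending solely on the initial data. As expected both $B_5(u,\delta)$ and $B_8(u,\delta)$ diverge as $\delta \to 0$: this lemma offers no uniform control of $h_\eps$ near the center of symmetry, which is consistent with the low regularity allowed in Definition~\ref{def-general-sol}.
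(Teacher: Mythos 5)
Your proposal is correct and follows essentially the same route as the paper: the paper also integrates the reduced equation and its $r$-derivative along the inward-propagating characteristics (which remain in $\{r\geq\delta\}$), bounds $g_\eps-\gb_\eps$ through the Hawking/Bondi mass via the identity you state (this is \eqref{g-gb}), controls the source terms with $(r+\eps)|\hb_\eps|\leq\Acal_\eps$, and closes with an integrating-factor/Gronwall argument, first for $h_\eps$ (yielding $B_5$) and then for $\del_r h_\eps$ using the pointwise bound on $\del_r g_\eps$ (yielding $B_8$). The only cosmetic difference is that the paper writes the explicit solution formula along characteristics rather than invoking Gronwall by name.
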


\begin{lemma}[Decay of the matter field at null infinity]
\label{prop-lims}
For any time $u_0> 0$ one has 
\bel{decay uniformly in u}
\lim_{r\to + \infty} \bigg(\sup_{u\in [0,u_0]}|h_\eps|\bigg) = \lim_{r\to + \infty} \bigg(\sup_{u\in [0,u_0]}|\del_r h_\eps|\bigg) =0.
\ee
\end{lemma}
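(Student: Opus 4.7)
My plan is to integrate the $\eps$-regularized evolution equation along characteristics and analyze each resulting term. Fix $u_0 > 0$, and for $(u_1, r_1)$ with $u_1 \in [0, u_0]$ denote the characteristic by $r(u) = \chi_\eps(u; u_1, r_1)$. Because $\gb_\eps > 0$ (Theorem~\ref{lemma-global epsilon existence}) and $\gb_\eps/e^{\nu_\eps+\lambda_\eps} = 1 - 2m_\eps/(r+\eps) \in (0,1]$, the characteristic moves inward at speed $\tfrac{1}{2}\gb_\eps \leq 1/2$, so $r(u) \in [r_1, r_1 + u_0/2]$ for $u \in [0, u_1]$, and the backward characteristic escapes to null infinity as $r_1 \to \infty$. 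Treating \eqref{epsilon self-gravitating eq-0} as a linear inhomogeneous ODE for $h_\eps$ along $\chi_\eps$ gives
\begin{equation*}
h_\eps(u_1, r_1) = h_0(r(0))\, e^{I(0, u_1)} - \int_0^{u_1} \left[ \frac{g_\eps - \gb_\eps}{2(r+\eps)}\, \hb_\eps + \frac{c^2}{2}(r+\eps)\,\hb_\eps\, e^{\nu_\eps+\lambda_\eps} \right]_{\chi_\eps(u)} e^{I(u, u_1)}\, du,
\end{equation*}
with $I(u, u') := \int_u^{u'}\tfrac{g_\eps - \gb_\eps}{2(r+\eps)}\big|_{\chi_\eps(u'')}\, du''$.

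The bounds $m_\eps \leq M_\eps(0)$ and $(r+\eps)|\hb_\eps| \leq \Acal_\eps$ yield $|g_\eps - \gb_\eps|/(r+\eps) \leq 2M_\eps(0)/(r+\eps)^2 + 4\pi c^2 \Acal_\eps^2/(r+\eps)$, which tends to zero uniformly in $u \in [0, u_0]$ as $r \to \infty$. Hence $e^{I(u, u_1)}$ is uniformly bounded on $[0, u_1]$ and the first source term decays as $(r+\eps)^{-3}$. The delicate step is to show that the second source $c^2(r+\eps)\hb_\eps e^{\nu_\eps + \lambda_\eps}$ also vanishes at null infinity. Setting $f(u, r) := (r+\eps)^2 \hb_\eps^2$, Proposition~\ref{lemma-energy}(1) gives $\int_0^\infty f(u, \cdot)\, dr \leq M_\eps(0)/(2\pi c^2)$ uniformly in $u \in [0, u_0]$, while from $\del_r \hb_\eps = (h_\eps - \hb_\eps)/(r+\eps)$ one computes $\del_r f = 2(r+\eps)\hb_\eps\, h_\eps$, which is uniformly bounded on $[0, u_0] \times [1, \infty)$ by pairing $(r+\eps)|\hb_\eps| \leq \Acal_\eps$ with the pointwise bound $|h_\eps| \leq B_5$ from Lemma~\ref{lemma-uniform bound of h and delr h with e}. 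A nonnegative integrable function with uniformly bounded derivative must tend to zero, so $(r+\eps)\hb_\eps \to 0$ uniformly in $u \in [0, u_0]$; applied at $u = 0$ together with $h_0 \in C^1$ and finite $\eps$-Bondi mass, the same reasoning yields $h_0(r) \to 0$ as $r \to \infty$.

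Inserting these decay rates in the characteristic formula gives $\sup_{u \in [0, u_0]}|h_\eps(u, r)| \to 0$ as $r \to \infty$, proving the first limit. For the second limit I differentiate the reduced equation in $r$ to obtain the linear ODE for $\del_r h_\eps$ (compare \eqref{dpartialrk}) with source $f_1$ as in \eqref{def of f_1}; every term of $f_1$ decays at null infinity by the preceding estimates (for $\del_r g_\eps$ one uses the expression for $\nu_\eps + \lambda_\eps$ combined with $\del_r((r+\eps)^2\hb_\eps^2)\to 0$; the remaining pieces use the already-established decay of $\hb_\eps$, $(r+\eps)\hb_\eps$, and $h_\eps$), and integration along characteristics against the uniformly bounded integrating factor completes the proof. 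The main obstacle is the pointwise decay of $(r+\eps)\hb_\eps$ at null infinity, since the bound $(r+\eps)|\hb_\eps| \leq \Acal_\eps$ yields only uniform boundedness; one must combine the energy integrability supplied by the finite Bondi mass with the equicontinuity provided by Lemma~\ref{lemma-uniform bound of h and delr h with e} to promote boundedness to pointwise decay.
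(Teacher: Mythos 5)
Your route is genuinely different from the paper's, and it is worth saying how. The paper does not decouple the two limits: on a slab of length $u_0$ of order $1/c^2$ it sets $\alpha_\eps:=\limsup_{r\to+\infty}\sup_u|\del_r h_\eps|$, extracts the decay of $h_\eps$ \emph{from} that of $\del_r h_\eps$ through the mean-value inequality $|h_\eps|(u,r)\leq\big(\int_r^{r+1}h_\eps^2\big)^{1/2}+\tfrac12|\del_r h_\eps|(u,r')$ together with a contradiction argument showing that the unit-interval $L^2$ tails of $h_\eps$ are $\leq\alpha_\eps^2$ uniformly in $u$, and then reabsorbs the $c^2h_\eps$ term in the source of the $\del_r h_\eps$ equation to obtain $\alpha_\eps\leq\tfrac34 c^2u_0\,\alpha_\eps$, whence $\alpha_\eps=0$; the induction over short slabs is forced by this self-consistent smallness. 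You instead treat $h_\eps$ first, from the undifferentiated equation whose source involves only $\hb_\eps$ and $(r+\eps)\hb_\eps$, and only then the differentiated equation. If it closed, this would be simpler and would not need the smallness of the time slab.

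It does not close as written, and the gap sits exactly at the step you flag as delicate. The argument ``nonnegative, integrable in $r$, with uniformly bounded $\del_r$-derivative, hence tends to zero'' applied to $f=(r+\eps)^2\hb_\eps^2$ gives $f(u,r)^2\leq 4M\int_r^{+\infty}f(u,s)\,ds$ for each \emph{fixed} $u$; the rate of decay is governed by the tail of $\int f(u,\cdot)$, and a uniform bound on the \emph{total} integral (from \eqref{lemma-energy in general} --- note that assertion (5) of Theorem~\ref{lemma-global epsilon existence} is only proved \emph{after} this lemma, so you cannot cite it) says nothing about uniformity of the tails in $u$. Lemma~\ref{lemma-uniform bound of h and delr h with e} supplies sup-norm bounds, not equicontinuity in $u$; and an equicontinuity argument for $f$ is itself delicate because $\del_u f$ involves $(r+\eps)D_\eps\hb_\eps=\xi_\eps/2$, and $\xi_\eps$ in \eqref{definition of xi} contains $c^2\int_0^r(s+\eps)\hb_\eps e^{\nu_\eps+\lambda_\eps}ds$, which may grow linearly in $r$. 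Since your characteristic formula needs smallness of the source for all $u\leq u_1$ and all $r\geq r_1$, the asserted uniformity is essential and unproven. The gap is repairable: bound the source contribution by $\int_0^{u_0}\sup_{r\geq r_1}(r+\eps)|\hb_\eps(u,r)|\,du$, note that the integrand is dominated by $\Acal_\eps(u_0,0)$ and tends to zero pointwise in $u$ by your argument, and apply dominated convergence. Two smaller defects: your reasoning at $u=0$ yields $r\hb_0\to0$, not $h_0(r)\to0$ (for the latter one should use $h_0\in L^2$ together with the global $C^1$ bound on $h_0$); and for the second limit the source of the $\del_r h_\eps$ equation contains $\del_r h_0(\chi_\eps(0;u_1,r_1))$, so the decay of $\del_r h_0$ at infinity must also be justified, as it is implicitly in the paper.
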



\paragraph{Pointwise estimates for the matter field.}

We now give a proof of Lemma~\ref{lemma-uniform bound of h and delr h with e}.
By induction, it is sufficient to prove \eqref{decay uniformly in u} with
$u_0= \frac{3}{8}c^2$.
Solving \eqref{epsilon self-gravitating eq} along the characteristics $\chi_{h_\eps,u_1}(u,r_1)$, we find
$$
\aligned
h_\eps(u_1,r(u_1)) 
& = h_0(r(0)) e^{\frac{1}{2} \int_{0}^{u_1} \Big( \frac{g_\eps- \gb_\eps}{r_2+ \eps}  \Big)_{\chi_{\eps,u_1}} du_2} \\
& \quad - \int_0^{u_1} \Big(\frac{1}{2(r_2+ \eps)}(g_\eps- \gb_\eps) \, \hb_\eps + \frac{c^2}{2}(r_1+ \eps) e^{\nu_\eps + \lambda_\eps} \hb_\eps \Big)_{\chi_{\eps,u_1}}e^{\frac{1}{2} \int_{u_2}^{u_1} \Big( \frac{g_\eps- \gb_\eps}{r_3+ \eps}  \Big)_{\chi_{\eps,u_1}} du_3} \, du_2.
\endaligned
$$
Observe that from \eqref{ep bondi mass} and \eqref{mass-bound}
\bel{g-gb}
g_\eps- \gb_\eps \leq e^{\nu_\eps + \lambda_\eps} - \gb_\eps \leq \frac{2M_\eps(u)}{r+ \eps}e^{(\nu_\eps + \lambda_\eps)(u,r)} \leq \frac{2M_\eps(0)}{r+ \eps}e^{(\nu_\eps + \lambda_\eps)(u,r)}
\ee
and then, if $r(u_1)\geq \delta$, we have from the definition of $\eps$-Hawking mass, from (2) in \eqref{equa-1234}
and \eqref{g-gb} that 
 $$
\aligned
 \int_{0}^{u_1} \Big(\frac{g_\eps- \gb_\eps}{r_2+ \eps} \Big)_{\chi_{\eps,u_1}} du_2
&\leq B_3(u_1,\delta),
\\
 \int_0^{u_1} \Big(\frac{g_\eps- \gb_\eps}{2(r_2+ \eps)} \hb_\eps + \frac{c^2}{2}(r_2+ \eps) \, \hb_\eps \, e^{\nu_\eps + \lambda_\eps} \Big)_{\chi_{\eps,u_1}}e^{\frac{1}{2} \int_{u_2}^{u_1} \Big( \frac{g_\eps- \gb_\eps}{r_3+ \eps}  \Big)_{\chi_{\eps,u_1}} du_3} \, du_2
&\leq B_4(u_1,\delta),
\endaligned
$$
where
$$
\aligned
B_3(u_1,\delta):& =  \frac{2u_1 M_\eps(0)}{\delta^2},\\
B_4(u_1,\delta):& = u_1e^{\frac{B_3(u_1,\delta)}{2}} \Big(\frac{M_\eps(0)}{\delta^3} + \frac{c^2}{2} \Big) \Acal_\eps(u_1,\delta).
\endaligned
$$
Taking this into account, for all $(u_1,r_1) \in [0,u_0]\times [\delta,+ \infty)$, we get
\bel{control h in C0}
|h_\eps|(u_1,r_1)\leq B_5(u_1,\delta),
\ee
with 
\bel{B5eq}
B_5(u_1,\delta) := e^{\frac{B_3(u_1,\delta)}{2}} \|h_0\|_{C^0[r,+ \infty)} +B_4(u_1,\delta).
\ee
Therefore, an analysis similar to the proof of Proposition~\ref{lemma-energy} (with $b=+ \infty$) shows that, for all $u_1\in [0,u_0]$,
\bel{lemma-energy in general}
\int_0^{+ \infty}h_\eps^2(u_1,r) \, dr
+
 c^2\int_0^{+ \infty}(r+ \eps)^2\hb_\eps^2(u_1,r) \, dr \leq B_6(u_1,\delta),
\ee
where $B_6(u_1,\delta) > 0$ is independent of $\eps$.

Now, differentiating \eqref{epsilon self-gravitating eq} with respect to $r$, we obtain 
\begin{equation*}
\label{evolution of del_rh}
D_\eps(\del_r h_\eps) - \frac{g_\eps- \gb_\eps}{r+ \eps} \del_r h_\eps= \frac{h_\eps- \hb_\eps}{2(r+ \eps)} \del_rg_\eps- \frac{3(g_\eps- \gb_\eps)}{2(r+ \eps)^2}(h_\eps- \hb_\eps) - \frac{c^2}{2}e^{\nu_\eps + \lambda_\eps} \big(h_\eps+4\pi \hb_\eps(h_\eps- \hb_\eps)^2\big)
\end{equation*}
and solving this equation along the characteristics $\chi_{\eps,u_1}:=\chi_\eps(u_1; u_1, r_1)$, we find
\be
\aligned
\label{derivative eq}
\del_r h_\eps(u_1,r(u_1))  
& = \del_r h_0(r(0)) e^{\int_{0}^{u_1} \Big( \frac{g_\eps- \gb_\eps}{r_2+ \eps}  \Big)_{\chi_{\eps,u_1}} du_2} 
\\
& \quad - \frac{c^2}{2} \int_0^{u_1} \Big(e^{\nu_\eps + \lambda_\eps} \big(h_\eps+4\pi \hb_\eps(h_\eps- \hb_\eps)^2\big)\Big)_{\chi_{\eps,u_1}}e^{\int_{u_2}^{u_3} \Big( \frac{g_\eps- \gb_\eps}{r_2+ \eps}  \Big)_{\chi_{\eps,u_1}} du_3} \, du_2 
\\
& \quad + \frac{1}{2}\int_0^{u_1} \Big(\frac{h_\eps- \hb_\eps}{r_2+ \eps} \del_rg_\eps- \frac{3(g_\eps- \gb_\eps)}{(r_2+ \eps)^2}(h_\eps- \hb_\eps)\Big)_{\chi_{\eps,u_1}}e^{\int_{u_2}^{u_1} \Big( \frac{g_\eps- \gb_\eps}{r_3+ \eps}  \Big)_{\chi_{\eps,u_1}} du_3} \, du_2.
\endaligned
\end{equation}
For all $r\geq \delta$, from (2) in \eqref{equa-1234}
and \eqref{control h in C0} we deduce that 
\bel{control del_rg}
\aligned
|\del_rg_\eps|(u,r) & = e^{\nu_\eps + \lambda_\eps} \Big(4\pi(1-4\pi c^2(r+ \eps)^2\hb_\eps^2)\frac{(h_\eps- \hb_\eps)^2}{r+ \eps} -8\pi c^2 \, (r+ \eps) \, \hb_\eps h_\eps\Big)
 \leq  B_7(u,\delta),
\endaligned
\ee
where
$$
B_7(u,\delta) := 4\pi \, \bigg( \frac{1+4\pi c^2 \Acal^2_\eps}{\delta}\left (B_5+ \frac{\Acal_\eps}{\delta} \right )^2+2 c^2B_5\Acal_\eps\bigg)(u,\delta).
$$
Taking into account (2) in \eqref{equa-1234},
as well as
\eqref{g-gb}, \eqref{control h in C0} and \eqref{control del_rg} into \eqref{derivative eq}, for all $(u_1,r_1) \in [0,u_0]\times [\delta,+ \infty)$ we obtain  
$
|\del_r h_\eps|(u_1,r_1)\leq B_8(u_1,\delta),
$
where
\bel{B8eq}
\aligned
B_8(u_1,\delta): =  e^{B_3(u_1,\delta)} \|h_0\|_{C^1}
+ &u_1e^{B_3(u_1,\delta)} \bigg(\frac{c^2}{2}B_5+ \frac{2\pi c^2 \Acal_\eps}{\delta}\left(B_5+ \frac{\Acal_\eps}{\delta}\right)^2 \bigg.\\& \quad\bigg.+ \left.\left(B_5+ \frac{\Acal_\eps}{\delta} \right)\left(\frac{B_7}{2\delta} + \frac{3 M_\eps(0) }{\delta^3} \right)\right)(u_1,\delta).
\endaligned
\ee


\paragraph{Decay of the matter field at null infinity.}

We now give a proof of Lemma~\ref{prop-lims}. From the previous lemma, it follows from \eqref{epsilon self-gravitating eq}, 
the inequality (2) in \eqref{equa-1234}
and \eqref{control h in C0} that, for all $(u,r) \in [0,u_0]\times [\delta,+ \infty)$, 
\bel{control del_uh}
|\del_u h_\eps|(u,r)\leq B_9(u,\delta),
\ee
with
$$
B_9(u,\delta) := \bigg( \frac{B_8}{2} + \frac{M_\eps(0)}{\delta^2} \Big(B_5+ \frac{\Acal_\eps}{\delta} \Big)+ \frac{c^2}{2} \Acal_\eps\bigg)(u,\delta).
$$
Now, set 
$
\alpha_\eps:= \limsup_{r\to + \infty} \big(\sup_{u\in[0,u_0]} |\del_r h_\eps| \big)\geq 0$
and suppose that $\alpha_\eps \in (0,1)$. We are going to exclude this case by proceeding by contradiction.
Due to \eqref{lemma-energy in general}, for each $u$, there exists $r_u$ sufficiently large such that for all $r\geq r_u$
\bel{r_u}
\int_{r}^{r+1}h_\eps^2ds\leq \alpha_\eps^2. 
\ee
Define
$
r_u': =  \inf\Big\{\mbox{$r_u\in[1,+ \infty)$ satisfies \eqref{r_u}} \Big\}$
and $r_\infty^\prime: =  \sup_{u\in[0,u_0]} r_u^\prime$. 
We now claim that
$
r^\prime_\infty<+ \infty.
$ 
Indeed, if this is not true, then there exists a sequence $(u_i,r_i)$ converging to $(u_\infty,+ \infty)$ such that
$
\Big|\int_{r_i}^{r_i+1}h_\eps^2(u_i,r) \, dr\Big|\geq \alpha_\eps^2.
$
For any given $\zeta> 0$, taking $u_*\in [0,u_0]$ satisfying $|u_*-u_\infty| \leq \zeta\alpha_\eps^2$, we have by the mean value theorem that 
\bel{ri-eq}
\int_{r_i}^{r_i+1} \big(h_\eps^2(u_i,r) - h_\eps^2(u_*,r)\big) \, dr=h_\eps^2(u_i,r_i^\prime) - h_\eps^2(u_*,r_i^\prime) =2(u_i-u_*)h_\eps(u_i^\prime,r_i^\prime)\del_u h_\eps(u_i^\prime,r_i^\prime),
\ee      
where $u_i^\prime$ (resp. $r_i^\prime$) is between $u_*$ and $u_i$ (resp. $r_i$ and $r_i+1$). Then, it follows from \eqref{control h in C0}, \eqref{lemma-energy in general}, and \eqref{control del_uh} that
$$
\aligned
\alpha_\eps^2\leq \lim\Big(\int_{r_i}^{r_i+1}h_\eps^2(u_i,r) \, dr\Big)& \leq  \limsup\Big(2|u_*-u_i|h_\eps(u_i^\prime,r_i^\prime)\del_u h_\eps(u_i^\prime,r_i^\prime)+ \int_{r_i}^{r_i+1}h_\eps^2(u_*,r) \, dr\Big)
\\
& \leq  2 \, \zeta \, \|h_\eps\|_{C^1}^2\alpha_\eps^2 \leq  \frac{\alpha^2_\eps}{2},
\endaligned
$$
provided that $\zeta\leq \frac{1}{4} \|h_\eps\|_{C^1}^{-2}$, which is a contradiction and, hence, $r_\infty^\prime<+ \infty$ as claimed.

We note, furthermore, that for all $r\geq r^\prime_\infty$  
\bel{control decay of h}
\aligned
|h_\eps|(u,r) & = \bigg|\int_r^{r+1}h_\eps(u,s) \, ds - \frac{1}{2} \del_r h_\eps(u,r^\prime)\bigg|
 \leq \bigg|\int_r^{r+1}h_\eps(u,s) \, ds \bigg|+ \frac{1}{2}|\del_r h_\eps|(u,r^\prime)\\
& \leq  \bigg(\int_r^{r+1}h_\eps^2(u,s) \, ds\bigg)^{1/2}
+ \frac{1}{2}|\del_r h_\eps|(u,r^\prime) \leq  \alpha_\eps+ \frac{1}{2}|\del_r h_\eps|(u,r^\prime),
\endaligned
\ee
where we used (by Taylor expansion) $r^\prime\in[r,r+1]$ and 
the H\"{o}lder inequality as well as \eqref{r_u}.

Next, let $\{u_i,r_i\} \in [0,u_0]\times [1,+ \infty)$ be a sequence such that
$r_i\to + \infty$ and
$
\lim|\del_r h_\eps|(u_i,r_i) = \alpha_\eps.
$
Along the characteristics $\chi_\eps(\cdot; u_i, r_i)$, using (2) in  \eqref{equa-1234},
\eqref{g-gb}, \eqref{control h in C0}, \eqref{control del_rg} and \eqref{control decay of h}, we obtain thanks to \eqref{derivative eq} that
$$
\aligned
 |\del_r h_\eps|(u_i,r_i) 
 & \leq  e^{B_3(u_0,r_i)}|\del_r h_0(r(0))|
\\
& \quad + u_0e^{B_3(u_0,r_i)} \left(B_5+ \frac{\Acal_\eps}{r} \right)\Big(\frac{2\pi c^2 \Acal_\eps}{r} \Big(B_5+ \frac{\Acal_\eps}{r} \Big)+ \frac{B_7}{2r} + \frac{3M_\eps(0)}{r^3} \Big)(u_0,r_i)
\\
& \quad + \frac{c^2}{2}u_0e^{B_3(u_0,r_i)} \Big(\alpha_\eps+ \frac{1}{2} 
\sup_{u\leq u_i,\,r_i\leq r} |\del_r h_\eps(u,r) | \Big). 
\endaligned
$$
Taking $i\to + \infty$, it now follows from the definition of $\alpha_\eps$ and our choice $u_0= \frac{3}{8}c^2$ that
$$
\alpha_\eps= \lim|\del_r h_\eps|(u_i,r_i)\leq \frac{3}{4}c^2u_0 \alpha_\eps
\leq \frac{1}{2} \alpha_\eps, 
$$
which is a contradiction. Therefore, we have $\alpha_\eps=0$. 

Now, let $\{r_i\}$ be a sequence converging to $+ \infty$. If there exists a sequence $u_i\in [0,u_0]$ such that
$
\lim\int_{r_i}^{r_i+1}h_\eps^2(u_i,r) \, dr= \zeta> 0,
$
then, for any $u_*\in [0,u_0]$, we can use \eqref{ri-eq},
 \eqref{control h in C0}, \eqref{lemma-energy in general}, and \eqref{control del_uh} and obtain 
$$
\aligned
\zeta= \lim\Big(\int_{r_i}^{r_i+1}h_\eps^2(u_i,r) \, dr\Big)& \leq  \limsup\Big(2|u_*-u_i|h_\eps(u_i^\prime,r_i^\prime)\del_u h_\eps(u_i^\prime,r_i^\prime)+ \int_{r_i}^{r_i+1}h_\eps^2(u_*,r) \, dr\Big)\\
& \leq  |u_*-u_i| \|h_\eps\|_{C^1}^2.
\endaligned
$$
Since (after passing to a subsequence) $u_i$ converges to $u_\infty\in [0,u_0]$, this is a contradiction as long as 
$|u_*-u_\infty| \leq (\zeta/2) \|h_\eps\|_{C^1}^{-2}$. 
Therefore, we must have 
\bel{int h2 at infitity}
\lim{\Big(\sup_{u\in[0,u_0]} \int_{r_i}^{r_i+1}h_\eps^2(u,r) \, dr\Big)}=0.
\ee   
In particular, for each $u$ there exists a sequence $\{r^\prime_i(u)\}$ such that
$r_i^\prime(u) \in [r_i,r_i+1]$ and 
$\lim|h_\eps|(u,r_i^\prime(u)) =0$.
Combining this result with the fact that $\alpha_\eps=0$, for any sequence $\{u_i\} \subset [0,u_0]$, we obtain by Taylor expansion that
$$
\lim|h_\eps|(u_i,r_i)\leq \lim|h_\eps|(u_i,r_i^\prime(u))+(r_i^\prime(u) -r_i)\lim|\del_r h_\eps|(u_i,r_i^{\prime\prime}(u)) =0,
$$
for some $r_i^{\prime\prime}(u) \in [r_i,r_i'(u)]$. This means that
$
\lim_{r\to + \infty} \big(\sup_{u\in[0,u_0]}|h_\eps|\big) =0$, 
which completes the proof of \eqref{decay uniformly in u}.


\paragraph{Closing the argument for the $\eps$-regularized problem.}

Then, in view of \eqref{decay uniformly in u} and the inequality (2) in \eqref{equa-1234}, 
an analysis similar to the one in the proof of Proposition~\ref{lemma-energy} (with $b=+ \infty$) shows that $h_\eps$ satisfies the inequalities \eqref{energy-int}. This completes the proof of Theorem \ref{lemma-global epsilon existence}.


\section{Global estimates for generalized solutions} 
\label{sec-section6}
 
\subsection{Pointwise estimates for the matter field}
\label{sec-section--61}

\paragraph{Objectives: first properties of the limit.}

We have established the global existence of a classical solution to the regularized problem \eqref{epsilon self-gravitating eq} and, by letting $\eps$ tend to zero, we will now arrive at a proof of Theorem \ref{thrgensol}, that is, 
the global existence of generalized solutions to the reduced Einstein equation \eqref{self-gravitating eq}.
 A crucial aspect of our strategy is the analysis of the evolution of the Hawking mass ---which was not necessary at this stage for massless scalar fields. 
 Indeed, we must adapt the arguments in \cite[Section 5]{Chr2} and suitably take the effect of 
the massive field into account. Given $u_0> 0$, for any $r_0>\delta > 0$ we use the notation
\be
\aligned
 \Ocal (u_0) :& =  \big\{ (u,r) \, / \, 0\leq u\leq u_0, \, r \in (0, +\infty) \big\},
\\
 \Ocal_\delta (u_0) :& =  \big\{ (u,r) \, / \, 0\leq u\leq u_0, \, r \in (\delta, +\infty) \big\},
\\ 
\Ocal_{\delta, r_0} (u_0) :& =  \big\{ (u,r) \, / \, 0\leq u\leq u_0, \, r \in (\delta, r_0) \big\}.
\endaligned
\ee
By taking all of the available estimates into account, we see that there exists $h\in C^1(\Ocal)$ such that (after passing to a subsequence)
 the following properties hold:   
\be
\aligned
& \text{$\hb_\eps,\,(r+ \eps) \, \hb_\eps, h_\eps,\,\del_r h_\eps$ converge uniformly on compact subsets $\Ocal_{\delta,r_0} \subset \Ocal$}
\\
& \text{ toward some limits denoted by
 $\hb,\,r \, \hb, h,\,\del_r h$ respectively.}
\endaligned
\ee 
We collect the properties of the limit $h$ as follows. 

\vskip.2cm
\fbox{\begin{minipage}{.9\textwidth}%

\begin{itemize}

\item[(i)]  The functions $\hb,\,r \, \hb, h,\,\del_r h$ are continuous in $\Ocal$.

\item[(ii)] Inequality (2) in  \eqref{equa-1234},
Lemma \ref{lemma-uniform bound of h and delr h with e}, and Lemma \ref{lemma-rhb small at 0} are true for $h$ (by tacitly  removing $\eps$ throughout our notation therein). 

\item[(iii)] The function $h$ satisfies the inequality (5) in \eqref{equa-1234}. 

\item[(iv)] At each $u$, $\lim_{r \to 0} \big(r \, \hb(u,r)\big) =0$, thanks to the bound on the derivative. 
\item[(v)] In particular, by (ii) and (iv) we see that $r \, \hb$ can be continuously extended on $[0,u_0]\times[0,+ \infty)$ with 
$$
\lim_{r \to 0} \sup_{u\in[0,u_0]} r \, h(u,r) =0.
$$
\end{itemize}

\end{minipage}
}


\paragraph{Statement of the pointwise bounds.}

We begin with a series of preliminary estimates for the matter field. Given $u_0> 0$, for any $\delta> 0$, we use the notation
$\Ocal_\delta = \big\{ (u,r)| 0\leq u \leq u_0,\,\delta<r\big\}$. The proofs of the following results
are postponed to Appendix~\ref{section-technical-lemmas}. 

\begin{lemma}
\label{lemma-rhb small at 0}
For each $\delta> 0$ and $\eps\leq  \delta/2$, the following estimate holds with with $\Hcal_\eps$  given in \eqref{def A}: 
\be
\sup_{r\ge \delta} |\hb_\eps (u,r)| \leq  \frac{1}{\sqrt\delta} \left( \int_{0}^{+ \infty}h_0^2 \, dr + \frac{u}{8\pi} + c^2 \Hcal_\eps(u)\right)^{1/2}. 
\ee 
\end{lemma}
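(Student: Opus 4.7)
The plan is to exploit the fact that $\hb_\eps$ is a spatial mean of $h_\eps$ starting from the origin, and to bound that mean using the $L^2$-control of $h_\eps$ already established in assertion (3) of Theorem~\ref{lemma-global epsilon existence}. This is a short, direct argument; no use of the evolution equation will be needed beyond the global energy estimate on $h_\eps$.

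First I would start from the defining identity
\be
\hb_\eps(u,r) \, = \, \frac{1}{r+\eps} \int_0^r h_\eps(u,s) \, ds
\ee
and apply the Cauchy--Schwarz inequality to the right-hand side, which yields
\be
|\hb_\eps(u,r)| \, \leq \, \frac{\sqrt{r}}{r+\eps} \, \bigg(\int_0^{+\infty} h_\eps^2(u,s) \, ds\bigg)^{1/2}.
\ee
The prefactor is then estimated by a crude pointwise inequality: since $\eps \geq 0$ and $r \geq \delta$,
\be
\frac{\sqrt{r}}{r+\eps} \, \leq \, \frac{\sqrt{r}}{r} \, = \, \frac{1}{\sqrt{r}} \, \leq \, \frac{1}{\sqrt{\delta}},
\ee
so that $\sup_{r \geq \delta}\sqrt{r}/(r+\eps) \leq 1/\sqrt{\delta}$.

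To conclude, I would plug in the energy estimate (3) from Theorem~\ref{lemma-global epsilon existence}, namely
\be
\int_0^{+\infty} h_\eps^2(u,\cdot) \, dr \, \leq \, \int_0^{+\infty} h_0^2 \, dr + \frac{u}{8\pi} + c^2 \Hcal_\eps(u),
\ee
which is valid uniformly in $u \geq 0$. Combining this with the previous two displays produces the claimed bound
\be
\sup_{r \geq \delta} |\hb_\eps(u,r)| \, \leq \, \frac{1}{\sqrt{\delta}} \bigg( \int_0^{+\infty} h_0^2 \, dr + \frac{u}{8\pi} + c^2 \Hcal_\eps(u)\bigg)^{1/2}.
\ee
There is no real obstacle here; the only role of the hypothesis $\eps \leq \delta/2$ is to ensure we are in a regime where the regularized mean is genuinely controlled by the $L^2$-energy on $[0,+\infty)$, and the numerical factor $1/\sqrt{\delta}$ (rather than $1/\sqrt{r+\eps}$) is enough for all the downstream applications where the estimate will be used.
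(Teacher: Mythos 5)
Your proof is correct and follows essentially the same route as the paper: both apply the Cauchy--Schwarz (H\"older) inequality to the mean-value integral defining $\hb_\eps$, bound the resulting prefactor by $1/\sqrt{\delta}$ for $r\ge\delta$, and invoke the energy estimate (3) of Theorem~\ref{lemma-global epsilon existence}. The only cosmetic difference is the order of operations (the paper first drops $\eps$ from the denominator and then applies Cauchy--Schwarz, while you do the reverse), and your observation that the hypothesis $\eps\le\delta/2$ is not actually needed here is consistent with the paper's own argument.
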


\begin{lemma}
\label{uniform bound w.r.t. u}
The solution of the $\eps$-regularized evolution equation \eqref{epsilon self-gravitating eq} enjoys 
the following fall-off property (at each $u_0$) 
$$
\lim_{r\to + \infty} \big(\sup_{u\in[0,u_0]}|(r+ \eps) \, \hb_\eps|\big) =	\lim_{r\to + \infty} \big(\sup_{u\in[0,u_0]}|h_\eps|\big) = \lim_{r\to + \infty} \big(\sup_{u\in[0,u_0]}|\del_r h_\eps|\big) =0.
$$
\end{lemma}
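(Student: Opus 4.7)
My plan is to treat the three limits in turn. The second and third, for $h_\eps$ and $\del_r h_\eps$, are exactly the conclusions of Lemma~\ref{prop-lims}, which are stated for $u_0=\tfrac{3}{8c^2}$ and extend to arbitrary $u_0>0$ via the same induction over successive time slabs already employed in the proof of Theorem~\ref{lemma-global epsilon existence}. The genuinely new content is therefore the decay of $(r+\eps)\hb_\eps$ at null infinity, uniformly in $u\in[0,u_0]$.

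For this I would combine three ingredients. First, the $L^2$-control (5) in \eqref{equa-1234} gives
\[
c^2\int_0^{+ \infty}\big((s+\eps)\hb_\eps(u,s)\big)^2 ds\leq \frac{M_\eps(0)}{2\pi},
\]
uniformly in $u$. Second, the identity $\del_s\big((s+\eps)\hb_\eps\big)=h_\eps$, together with the pointwise bound $|h_\eps(u,s)|\leq B_5(u_0,1)$ from Lemma~\ref{lemma-uniform bound of h and delr h with e}, shows that $s\mapsto(s+\eps)\hb_\eps(u,s)$ is Lipschitz on $[1,+ \infty)$ with constant $B_5(u_0,1)$, uniformly in $u\in[0,u_0]$. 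Third, I would establish the uniform tail decay of the Hawking mass,
\[
\sup_{u\in[0,u_0]}\big(M_\eps(u)-m_\eps(u,R)\big)\longrightarrow 0\qquad\text{as}\quad R\to + \infty,
\]
which, by integrating the non-negative identity \eqref{m is increasing w.r.t. r}, translates into the uniform tail decay of the $L^2$-norm: $2\pi c^2\int_R^{+ \infty}((s+\eps)\hb_\eps(u,s))^2 ds\leq M_\eps(u)-m_\eps(u,R)$.

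With these three ingredients the proof is concluded by contradiction. Suppose there exist $\alpha>0$, $r_n\to+ \infty$ and $u_n\in[0,u_0]$ with $|(r_n+\eps)\hb_\eps(u_n,r_n)|\geq\alpha$. The Lipschitz property then forces $|(s+\eps)\hb_\eps(u_n,s)|\geq\alpha/2$ on the interval $I_n:=[r_n-\alpha/(2B_5),\,r_n+\alpha/(2B_5)]$, so $\int_{I_n}((s+\eps)\hb_\eps(u_n,s))^2 ds\geq \alpha^3/(4B_5)$. On the other hand, the uniform tail decay of the mass gives $\int_{r_n-\alpha/(2B_5)}^{+ \infty}((s+\eps)\hb_\eps(u_n,s))^2 ds<\alpha^3/(4B_5)$ once $n$ is large enough, a contradiction.

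The main obstacle will be establishing the uniform mass tail decay. By \eqref{m is increasing w.r.t. r}, $m_\eps(u,r)$ is non-decreasing in $r$ and converges monotonically to $M_\eps(u)$ as $r\to+ \infty$, and each $u\mapsto m_\eps(u,r)$ is continuous on the compact interval $[0,u_0]$. Provided that the limit $M_\eps:[0,u_0]\to\RR$ is itself continuous, Dini's theorem immediately yields the required uniform convergence. Continuity of $M_\eps$ can be derived by integrating the evolution equation \eqref{evol-eps-mass} along characteristics extending to null infinity and justifying the passage to the limit $r\to+ \infty$ inside the $u$-integral by means of the previously established bounds; some care is needed to avoid circularity, since the term $c^2\pi(r+\eps)^2\hb_\eps^2 \gb_\eps$ on the right-hand side of \eqref{evol-eps-mass} involves precisely the quantity whose decay we aim to prove. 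The latter point can however be bypassed by noting that this term is controlled in $L^1([0,u_0];L^1_r)$ already through (5) and $\int\del_rm_\eps\leq M_\eps(0)$, which suffices to conclude absolute continuity of $M_\eps$ and apply Dini.
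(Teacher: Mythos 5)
Your reduction of the first two limits to Lemma~\ref{prop-lims} matches the paper, and your identification of the decay of $(r+\eps)\hb_\eps$ as the new content is correct. However, your route to that decay has a genuine gap at its load-bearing step: the uniform tail decay $\sup_{u\in[0,u_0]}\big(M_\eps(u)-m_\eps(u,R)\big)\to 0$ rests on Dini's theorem, which requires the limit function $u\mapsto M_\eps(u)$ to be continuous, and this continuity is neither established in the paper nor delivered by your proposed bypass. Monotonicity of $m_\eps$ in $r$ only gives that $M_\eps$ is a non-increasing, lower semicontinuous (hence right-continuous) function of $u$; it could still have left jumps. Your claim that the $L^1$ control of $c^2\pi(r+\eps)^2\hb_\eps^2\gb_\eps$ via (5) ``suffices to conclude absolute continuity of $M_\eps$'' does not go through: the increment $M_\eps(u_2)-M_\eps(u_1)$ is the limit as $r\to+\infty$ of $\int_{u_1}^{u_2}\del_u m_\eps(u,r)\,du$, in which the mass-potential term actually cancels (since $\del_u m_\eps=D_\eps m_\eps+\tfrac12\gb_\eps\del_r m_\eps$), leaving $-\frac{4\pi}{e^{\nu_\eps+\lambda_\eps}}(r+\eps)^2(D_\eps\hb_\eps)^2+\pi\gb_\eps e^{-2\lambda_\eps}(h_\eps-\hb_\eps)^2$; the first of these equals $-\frac{\pi}{e^{\nu_\eps+\lambda_\eps}}\xi_\eps^2$ and, by \eqref{definition of xi} and Cauchy--Schwarz, $\xi_\eps$ is only known to be $O(\sqrt{r})$, so the integrand is not uniformly controlled in $r$ and the interchange of $\lim_{r\to\infty}$ with $\int du$ is exactly the hard (and potentially circular) point, since uniform control near null infinity is what the lemma is meant to provide.

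The paper avoids this entirely. Arguing by contradiction with $(u_i,r_i)\to(u_\infty,+\infty)$ and $|(r_i+\eps)\hb_\eps(u_i,r_i)|\to\alpha>0$, it picks a fixed time $u_\zeta$ with $|u_\zeta-u_\infty|\le\zeta$ and reruns the characteristic/mass-monotonicity estimate behind \eqref{control rhb in C0} \emph{starting from the slice $u=u_\zeta$} rather than from $u=0$. This yields $(r_i+\eps)|\hb_\eps(u_i,r_i)|\le\sup_{r'\ge r_i}\big|\int_0^{r'}h_\eps(u_\zeta,s)\,ds\big|+\sqrt{\zeta M_\eps(0)/(2\pi)}$. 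The first term vanishes as $r_i\to+\infty$ because, at the single fixed time $u_\zeta$, estimates (3) and (5) of \eqref{energy-int} place $(r+\eps)\hb_\eps(u_\zeta,\cdot)$ in $W^{1,2}$, so it decays at infinity by Sobolev embedding; the second term is $O(\sqrt{\zeta})$. Choosing $\zeta$ small gives the contradiction. In other words, the paper replaces your ``uniform-in-$u$ tail decay of the mass'' by ``pointwise-in-$u$ decay at one well-chosen time plus a uniform flux bound over a short time interval,'' which needs no continuity of the Bondi mass. Your Lipschitz-interval contradiction at the end is fine as mechanics, but it cannot be run until the Dini step is repaired; I would recommend adopting the paper's two-ingredient substitute for that step.
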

 
\begin{lemma}
\label{lemma equi-continuity of delr h}
The family of functions $\{\del_r h_\eps|0<\eps\leq \delta/2\}$ is equicontinuous in $\Ocal_\delta$.
\end{lemma}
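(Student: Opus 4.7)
The plan is to extract the equicontinuity from the characteristic integral representation \eqref{derivative eq} for $\del_r h_\eps$. Fix a compact rectangle $K = [0, u_0] \times [\delta, R] \subset \Ocal_\delta$; the extension to all of $\Ocal_\delta$ follows from Lemma~\ref{uniform bound w.r.t. u} (whose proof in fact goes through uniformly in $\eps$, all its constants being controlled by the energy inequalities of Theorem~\ref{lemma-global epsilon existence} and the bound of Lemma~\ref{lemma-uniform bound of h and delr h with e}), which makes $|\del_r h_\eps|$ uniformly small for $r$ large. Since $\gb_\eps > 0$, any backward characteristic starting in $K$ stays inside $\{r \geq \delta\}$, so all uniform-in-$\eps$ pointwise bounds of Section~\ref{sec-section--61} are available along it.

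The first step is to upgrade the sup-bounds to uniform Lipschitz estimates for the coefficients appearing in \eqref{derivative eq}. From $\del_r \gb_\eps = (g_\eps - \gb_\eps)/(r+\eps)$ together with the bound \eqref{g-gb} and the formula for $\del_u\lambda_\eps$ adapted from \eqref{partialu}, $\gb_\eps$ is uniformly Lipschitz on $K$, and Grönwall gives uniform Lipschitz dependence of $\chi_\eps(\cdot; u_1, r_1)$ on its endpoint. Using Lemma~\ref{lemma-uniform bound of h and delr h with e} and the identity $\del_u h_\eps = D_\eps h_\eps + \tfrac{1}{2} \gb_\eps \del_r h_\eps$, the function $h_\eps$ is uniformly Lipschitz in $(u, r)$ on $K$; by integration and direct computation, so are $\hb_\eps$, $e^{\nu_\eps + \lambda_\eps}$, $g_\eps$, and the explicit expression for $\del_r g_\eps$ in \eqref{control del_rg}.

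Given $(u_1, r_1)$ and $(u_1', r_1')$ nearby in $K$, I would then compare the two sides of \eqref{derivative eq}. The boundary term $\del_r h_0\bigl(\chi_\eps(0;\cdot,\cdot)\bigr)\exp(\cdots)$ has a uniform modulus of continuity because $\chi_\eps(0; \cdot, \cdot)$ is uniformly Lipschitz, $\del_r h_0$ is uniformly continuous on the bounded set of reachable backward initial values, and the exponent is an integral of a uniformly bounded, uniformly continuous integrand along a uniformly Lipschitz family of curves. The integral term is split into (a) the change from varying the upper endpoint, controlled by the $L^\infty$ bound on the integrand times $|u_1 - u_1'|$, and (b) the change from varying the characteristic along which one integrates, controlled by the uniform modulus of continuity of the integrand times the Lipschitz deviation of the characteristics. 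Together these yield a modulus of continuity for $\del_r h_\eps$ on $K$ independent of $\eps \in (0, \delta/2]$, which is the desired equicontinuity. The main obstacle will be the uniform Lipschitz control of $\del_r g_\eps$: a naive differentiation would demand a uniform bound on $\del_{rr} h_\eps$, which is not directly available; instead, one must use the closed-form expression \eqref{control del_rg} for $\del_r g_\eps$ in terms of $h_\eps$, $\hb_\eps$ and $e^{\nu_\eps+\lambda_\eps}$ and read off its Lipschitz norm from the (already uniform) Lipschitz norms of those lower-order quantities, never differentiating the equation once more.
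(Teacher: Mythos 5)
Your proposal follows essentially the same route as the paper: both compare $\del_r h_\eps$ along nearby characteristics via the representation \eqref{derivative eq}, reduce the $u$-modulus to the $r$-modulus by using the uniform bound on the transport derivative of $\del_r h_\eps$ together with the controlled characteristic displacement, and --- the key point, which you identify correctly --- obtain uniform Lipschitz control of the coefficients $\del_r g_\eps$ and $\del_r\gb_\eps$ from their closed-form expressions, which involve only first derivatives of $h_\eps$ and never $\del_{rr}h_\eps$. The one structural difference is your reduction to compact rectangles together with an appeal to a uniform-in-$\eps$ version of Lemma~\ref{uniform bound w.r.t. u} at large $r$; this detour is unnecessary (and the claimed uniformity in $\eps$ of that decay is not entirely immediate from its proof), since the constants in the resulting Lipschitz estimate \eqref{equi.continuity w.r.t. r} depend only on $(u_0,\delta)$ and do not degrade as $r\to+\infty$.
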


\begin{lemma}
\label{equi-continuity of hb}
The family of functions $\{\hb_\eps|0<\eps\leq \frac{\delta}{2} \}$ is equicontinuous in $\Ocal_\delta$.
\end{lemma}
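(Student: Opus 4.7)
The plan is to establish equicontinuity on each compact subset $K := [0,u_0] \times [\delta, R] \subset \Ocal_\delta$ with $R < +\infty$ arbitrary. By the triangle inequality applied to the splitting
\[
|\hb_\eps(u_1,r_1) - \hb_\eps(u_2,r_2)| \leq |\hb_\eps(u_1,r_1) - \hb_\eps(u_1,r_2)| + |\hb_\eps(u_1,r_2) - \hb_\eps(u_2,r_2)|,
\]
it suffices to control separately the variations in $r$ and in $u$, uniformly in $\eps \in (0,\delta/2]$.

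For the $r$-variation, I would invoke the pointwise identity $\del_r \hb_\eps = (h_\eps - \hb_\eps)/(r+\eps)$ on $\Ocal_\delta$, where $r + \eps \geq \delta$. Lemma~\ref{lemma-uniform bound of h and delr h with e} provides an upper bound on $|h_\eps|$ and Lemma~\ref{lemma-rhb small at 0} one on $|\hb_\eps|$; since both bounds involve $\eps$ only through $M_\eps(0)$, $\Acal_\eps$, and $\Hcal_\eps$, which remain bounded as $\eps \to 0$, this yields a Lipschitz bound $|\hb_\eps(u,r_1) - \hb_\eps(u,r_2)| \leq C_1 |r_1 - r_2|$ with $C_1$ independent of $\eps$.

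For the $u$-variation at fixed $r \in [\delta, R]$, the starting point is the identity $(r+\eps)\hb_\eps(u,r) = \int_0^r h_\eps(u,s)\,ds$, from which
\[
|\hb_\eps(u_1, r) - \hb_\eps(u_2, r)| \leq \frac{1}{\delta} \int_0^r |h_\eps(u_1, s) - h_\eps(u_2, s)|\,ds.
\]
The key step is a two-scale split at an auxiliary threshold $r' \in (0, \delta]$. On the inner piece $[0, r']$, Cauchy--Schwarz combined with the uniform $L^2$ bound (3) of Theorem~\ref{lemma-global epsilon existence} gives $\int_0^{r'}|h_\eps(u_i, s)|\,ds \leq \sqrt{r'}\, C_2$ with $C_2$ independent of $\eps \in (0, \delta/2]$ and $u \in [0, u_0]$, hence a contribution bounded by $2\sqrt{r'}\,C_2/\delta$. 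On the outer piece $[r', R]$ the pointwise estimate \eqref{control del_uh} (applied with $\delta$ replaced by $r'$) yields $|\del_u h_\eps|(u, s) \leq B_9(u_0, r')$ uniformly in $\eps$, whence $|h_\eps(u_1, s) - h_\eps(u_2, s)| \leq B_9(u_0, r') |u_1 - u_2|$ for $s \in [r', R]$.

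To close the argument, given $\alpha > 0$ I would first fix $r' \in (0,\delta]$ small enough that $2\sqrt{r'}\,C_2/\delta < \alpha/3$; with $r'$ now frozen, the constant $B_9(u_0, r')$ is fixed and $|u_1 - u_2|$, $|r_1 - r_2|$ can be chosen small enough to make each of the remaining two contributions less than $\alpha/3$. The principal obstacle, and the reason for the two-scale splitting, is that the pointwise bound on $\del_u h_\eps$ degenerates as $r \to 0$; the energy estimate is precisely what supplies a uniform-in-$\eps$ substitute on the inner region where no pointwise derivative bound is available.
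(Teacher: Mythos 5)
Your proposal is correct, and the $r$-direction argument (the uniform bound $|\del_r\hb_\eps|=|h_\eps-\hb_\eps|/(r+\eps)\leq C_1$ on $\Ocal_\delta$) coincides with the paper's. For the $u$-direction, however, you take a genuinely different route. The paper integrates the transport identity $D_\eps\hb_\eps=\xi_\eps/(2(r+\eps))$ along the characteristic through $(u_1,r_1)$: the characteristic shift $\Delta r_{1,\eps}\leq\tfrac12\Delta u_1$ is absorbed by the $r$-equicontinuity, and the integral $\int(r+\eps)^2(D_\eps\hb_\eps)^2\,du$ along the characteristic is controlled by $M_\eps(0)/(4\pi)$ via the Hawking-mass evolution \eqref{ep equation mass}, yielding the explicit modulus $\frac{1}{2\delta}\sqrt{M_\eps(0)\Delta u_1/\pi}+\frac{K_6}{2}\Delta u_1$, which is H\"older-$1/2$ in $u$ and, importantly, independent of $r_1$, so the equicontinuity is uniform over the whole unbounded strip $[0,u_0]\times(\delta,+\infty)$. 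You instead differentiate the defining integral $(r+\eps)\hb_\eps=\int_0^r h_\eps\,ds$ directly, splitting at an auxiliary radius $r'$, with the energy bound (3) of Theorem~\ref{lemma-global epsilon existence} handling the inner piece and the pointwise bound \eqref{control del_uh} on $\del_u h_\eps$ handling the outer piece; this avoids the mass equation entirely but produces a modulus $\frac{2\sqrt{r'}C_2}{\delta}+\frac{R\,B_9(u_0,r')}{\delta}|u_1-u_2|$ that grows with the outer radius $R$, so you only obtain equicontinuity on compact subsets of $\Ocal_\delta$ and a non-quantitative modulus. This weaker conclusion still suffices for the Arzel\`a--Ascoli extraction on the compacts $\Ocal_{\delta,r_0}$ in Section~\ref{sec-section--61}, which is the only place the lemma is used, but be aware that the paper's characteristic-plus-mass argument buys a sharper, $r$-uniform statement at essentially no extra cost.
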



\paragraph{Consequence of the pointwise estimates.}
 
By Lemmas \ref{lemma-uniform bound of h and delr h with e} and~\ref{lemma equi-continuity of delr h}, 
we deduce that there exists $h\in C^1(\Ocal)$ such that (after passing to a subsequence) $h_\eps$ and $\del_r h_\eps$ converge uniformly on compact subsets $\Ocal_{\delta,r_0}$ to $h$ and $\del_r h$,
 respectively. Moreover, by Lemmas~\ref{lemma-uniform bound of h and delr h with e} and~\ref{uniform bound w.r.t. u} we find
\bel{uniform bound of h and delr h}
	\sup_{r>\delta}|h(u,r)| \leq c_3(u,\delta), 
	\qquad
	\quad 	\sup_{r>\delta}|\del_r \, h(u,r)| \leq c_5(u,\delta)
\ee
together with 
\bel{uniform bound w.r.t. u to h}
\lim_{r\to + \infty} \big(\sup_{u\in[0,u_0]}|h|\big) = \lim_{r\to + \infty} \big(\sup_{u\in[0,u_0]}|\del_r h|\big) =0.
\ee 
Therefore,  $h_\eps$ and $\del_r h_\eps$ converge uniformly in $\Ocal_\delta$ to $h$ and $\del_r h$ respectively.\\

In particular, for any $r_0>\delta> 0$ we have 
$
\int_{\delta}^{r_0}h_\eps^2 \, dr \to  \int_{\delta}^{r_0}h^2 \, dr,
$
uniformly in $u$, hence by Proposition \ref{lemma-energy}
\bel{f is bounded}
\int_{\delta}^{r_0}h^2 \, dr\leq \int_{0}^{+ \infty}h_0^2 \, dr + \Big(\frac{1}{8\pi} +4\pi \, \int_0^{+ \infty} \min\big\{A_0,A_1r^{-2} \big\} \, dr\Big)u+ \int_{0}^{u}M_\eps(u_1) \, du_1.
\ee
 Next consider the sequence
$$
f_i:= 
\begin{cases} 
0, & r \notin [\delta/2^i,i],
\\
h^2, & \text{otherwise.} 
\end{cases}
$$
Observe that $\{f_i\}$ is an increasing sequence of measurable functions and pointwise converges to $h$, then by the monotone convergence theorem we obtain from \eqref{f is bounded} that $h\in L^2(0,+ \infty)$ at each $u$ and 
\bel{L2-bound of h}
\int_0^{+ \infty}h^2 \, dr\leq \int_{0}^{+ \infty}h_0^2 \, dr + \Big(\frac{1}{8\pi} +4\pi \, \int_0^{+ \infty} \min\big\{A_0,A_1r^{-2} \big\} \, dr\Big)u+ \int_{0}^{u}M_\eps(u_1) \, du_1.
\ee
Now by  Lemmas \ref{lemma-uniform bound of h and delr h with e} and \ref{equi-continuity of hb}, $\{\hb_\eps\}$ is equibounded and equicontinuous in $\Ocal_\delta$. Therefore, the same is true for the sequence $\{ (r+ \eps) \, \hb_\eps\}$. Again, by the Ascoli-Arzela theorem there exists a function $\zeta\in C^0(\Ocal)$ such that (after passing to a subsequence) $(r+ \eps) \, \hb_\eps$ converges uniformly on compact subsets $\Ocal_{\delta,r_0}$ of $\Ocal$. Moreover, in view of Lemma \ref{lemma-rhb small at 0} we see that $
(\delta+ \eps) \, \hb_\eps(u,\delta)\leq \delta B_1$
(for each $\delta> 0$). 
It follows that at each $u$, $(\delta+ \eps) \, \hb_\eps(u,\delta)\to 0$ as $\delta\to 0$. 


\subsection{Convergence properties} 
\label{subsec-comp-conv}

\paragraph{Properties of the metric coefficient $g$.}

Now, for any $r\in [\delta,+ \infty)$, we can write  
$
\frac{(h_\eps- \hb_\eps)^2}{r+ \eps} \leq 2\Big(\frac{h_\eps^2}{r+ \eps} + \frac{\hb_\eps^2}{r+ \eps} \Big),
$
which combined with the inequality (5)  in Theorem~\ref{lemma-global epsilon existence} and (iii), stated above, gives that, for each $\eta> 0$, there exists $r_0> 0$ sufficiently large such that
\bel{r>r_0}
\int_{r_0}^{+ \infty} \frac{(h_\eps- \hb_\eps)^2}{r+ \eps} \, dr
\leq \frac{\eta}{3}, 
\qquad\quad
\int_{r_0}^{+ \infty} \frac{(h - \hb)^2}{r} \, dr\leq \frac{\eta}{3}
\ee
uniformly in $u$ and $\eps$. On the other hand, by (i) we may take $\eps \in (0,1)$ small enough such that, for all $\delta\leq r_1\leq r_0$,
\bel{r<r_0}
\Big|\int_{r_1}^{r_0} \frac{(h_\eps- \hb_\eps)^2}{r+ \eps} \, dr - \int_{r_1}^{r_0} \frac{(h - \hb)^2}{r} \, dr\Big| \leq \frac{\eta}{3}.
\ee 
Therefore, from \eqref{r>r_0} and \eqref{r<r_0} we obtain ($r_1\geq \delta$) 
$$
\aligned
\Big|\int_{r_1}^{+ \infty} \frac{(h_\eps- \hb_\eps)^2}{r+ \eps} \, dr - \int_{r_1}^{+ \infty} \frac{(h - \hb)^2}{r} \, dr\Big|
& \leq  \Big|\int_{r_1}^{r_0} \frac{(h_\eps- \hb_\eps)^2}{r+ \eps} \, dr - \int_{r_1}^{r_0} \frac{(h - \hb)^2}{r} \, dr\Big|
\\
& \quad + \int_{r_0}^{+ \infty} \frac{(h_\eps- \hb_\eps)^2}{r+ \eps} \, dr + \int_{r_0}^{+ \infty} \frac{(h - \hb)^2}{r} \, dr \leq  \eta.
\endaligned
$$
This means that 
\bel{e epsilon to e}
\int_{r}^{+ \infty} \frac{(h_\eps- \hb_\eps)^2}{r+ \eps} \, dr\to \int_{r}^{+ \infty} \frac{(h - \hb)^2}{r} \, dr,
\ee 
uniformly in $\Ocal_{\delta}$, for any $\delta> 0$, hence
$$
e^{\nu_\eps + \lambda_\eps}:=e^{-4\pi \, \int_{r}^{+ \infty} \frac{(h_\eps- \hb_\eps)^2}{r+ \eps} \, dr} \to  e^{\nu+ \lambda}:=e^{-4\pi \int_{r}^{+ \infty} \frac{(h - \hb)^2}{r} \, dr}
$$
uniformly in $\Ocal_{\delta}$ for each $\delta> 0$. Now, note that at each $u$, $e^{(\nu+ \lambda)(u,r)}$ is positive, continuous and non-decreasing in $r$. Therefore, we may set
$ e^{(\nu+ \lambda)(u,0)}:= \lim_{r \to 0}e^{(\nu+ \lambda)(u,r)}$, hence by (i) and (v) we obtain 
\be
\text{
$g_\eps$ converges uniformly in $\Ocal_{\delta,r_0}$ to $g$}
\ee
and 
\vskip.2cm

\fbox{\begin{minipage}{.9\textwidth}%

\begin{itemize}
\item[(vi)] $g:=(1-8\pi r^2\hb^2) e^{\nu+ \lambda}$ is continuous in $[0,u_0]\times[0,+ \infty)$, with $g(u,0) =e^{(\nu+ \lambda)(u,0)}$. 
\end{itemize}

\end{minipage}}

\vskip.2cm

Furthermore, 
since $\{e^{(\nu+ \lambda)(u,1/i)}|i\in\mathbb{N} \}$ is a non-increasing sequence of continuous functions of $u$ which take values in $[0,1]$, the function
$$
g(u,0) =e^{(\nu+ \lambda)(u,0)}= \lim_{i\to + \infty}e^{(\nu+ \lambda)(u,1/i)}
$$
is measurable in $u$ and takes values in $[0,1]$.


\paragraph{Properties of the function $\gb$.}

Recall that $\gb = \frac{1}{r} \int_0^rg \, ds$.  
 Since $g$ is continuous with respect to $r$ for all $r\geq0$, so does $\gb$ and $\gb(u,0) =g(u,0)$. Now given $\eta> 0$ and $r_0>\eta$, since $g$ is continuous with respect to $u$ in $\Ocal$, there exists $\zeta> 0$ small enough such that for all $|u-u^\prime| \leq \zeta$ and for all $r\in \leq [\eta/2,r_0]$ we have 
 $
 |g(u,r) -g(u^\prime,r)| \leq \frac{\eta}{2r_0}.
 $
 It follows that
 $$
 \aligned
& \Big|\int_0^rg(u,s) \, ds- \int_0^rg(u^\prime,s) \, ds\Big| 
\\
& \leq \int_0^{\eta/2}|g(u,s) -g(u^\prime,s)| \, ds+ \int_{\eta/2}^r|g(u,s) -g(u^\prime,s)| \, ds\leq \frac{\eta}{2} + \big(\frac{\eta}{2r_0} \big)r_0\leq \eta.
\endaligned
 $$
 That means that $\int_0^rg \, ds$, hence $\gb$ is continuous with respect to $u$ in $\Ocal$.

 We will next prove that $\gb_\eps$ converges uniformly in $\Ocal_{\delta}$ to $\gb$. First of all, observe that for all $r\geq r_0$
 $$
 \aligned
 \gb_\eps(u,r) & =  \frac{(r_0+ \eps)}{(r+ \eps)} \gb_\eps(u,r_0)+ \frac{1}{r+ \eps} \int_{r_0}^rg_\eps(u,s) \, ds,
 \\
 \gb(u,r) & =  \frac{r_0}{r} \gb(u,r_0)+ \frac{1}{r} \int_{r_0}^rg(u,s) \, ds,
 \endaligned
 $$
 and $g_\eps \to  g$ uniformly in $\Ocal$. Then to show $\gb_\eps$ converges uniformly in $\Ocal_{\delta}$ to $\gb$, it suffices to show that $\gb_\eps$ converges uniformly in $\Ocal_{\delta,r_0}$ to $\gb$. 
 By a standard argument we are now in a position to deduce that 
%
\be
\text{
$\gb_\eps$ converges uniformly in $\Ocal_{\delta,r_0}$ to $\gb$.}
\ee
so 
\vskip.2cm

\fbox{\begin{minipage}{.9\textwidth}%

\begin{itemize}
\item[(vii)] $\gb$ is continuous on $[0,u_0]\times[0,+ \infty)$ with $\gb(u,0) =g(u,0)$, 
\end{itemize}

\end{minipage}
}


\paragraph{Properties of the characteristics.}

Next, for each $(u_1,r_1) \in \Ocal$, we consider the characteristic $\chi (\cdot; u_1,r_1)$, namely
  the solution of the ordinary differential equation
$\del_u\chi= -  \frac{1}{2} \gb(u,\chi)$ with 
$\chi(u_1; u_1,r_1) =r_1$.
Since $\gb$ and $\del_r\gb$ are continuous in $\Ocal$, then $\chi (\cdot ; u_1, r_1)$ is well-defined and unique. 
On the other hand, let $\chi_\eps(u; u_1, r_1)$ be the unique solution of 
$\del_u\chi_\eps= -  \frac{1}{2} \gb_\eps(u,\chi_\eps)$ with 
$\chi_\eps(u_1; u_1, r_1) =r_1$. 
Since $(u_1,r_1) \in \Ocal$, we have $(u_1,r_1) \in \Ocal_\delta$ for some $\delta> 0$, and therefore 
$(u,\chi_\eps(u; u_1,r_1)),\, (u,\chi (u;u_1,r_1)) \in\Ocal_{\delta,r_1+ u_0/2}
$ 
for all $u\leq u_1$.  
We then see that 
\be
\text{$\chi_\eps(\cdot; u_1, r_1)$ converges uniformly to $\chi (\cdot; u_1,r_1)$.
}
\ee


\paragraph{Differentiability of $h$.}

Now, along the characteristic $\chi_\eps(\cdot; u_1,r_1)$ we have
$$
h_\eps(u_1,r_1) =h_0(\chi_\eps(0; u_1,r_1))+ \int_0^{u_1} \Big(\frac{(g_\eps- \gb_\eps)(h_\eps- \hb_\eps)}{2(r_2+ \eps)} - \frac{c^2}{2}(r_2+ \eps) \, \hb_\eps \, e^{\nu_\eps + \lambda_\eps} \Big)_{\chi_\eps(u; u_1,r_1)} \, du_2.
$$
From this equation, using (iv), (vii), (ix), (x) and taking $\eps\to 0$, for all $(u_1,r_1) \in \Ocal$ we obtain 
\bel{main equation 1}
h(u_1,r_1) =h_0(\chi (0; u_1,r_1))+ \int_0^{u_1} \Big(\frac{(g- \gb)(h - \hb)}{2r_2} - \frac{c^2}{2}r_2\hb \, e^{\nu+ \lambda} \Big)_{\chi (u; u_1,r_1)} \, du_2.
\ee
This tells us that $h$ satisfies (in the domain $\Ocal$) the nonlinear evolution equation \eqref{self-gravitating eq} in the integral sense.

Furthermore, we also claim that $h$ is continuously differentiable with respect to $u$ in $\Ocal$. Let us set 
$$
f:= \frac{1}{2r}(g- \gb)(h - \hb) - \frac{c^2}{2}re^{\nu+ \lambda} \hb.
$$
Given $(u_1,r_1) \in \Ocal$ and $\Delta u_1$, let $r_1+ \Delta r_1$ be the value of $r$ at which the characteristic $\chi (\cdot; u_1,r_1)$ intersects the line $u=u_1- \Delta u_1$. From \eqref{main equation 1}, we obtain
\begin{equation*}
\label{main equation 1 with u}
\aligned
& \frac{h(u_1,r_1) - h(u_1- \Delta u_1,r_1)}{\Delta u_1} 
\\
& = \frac{h(u_1,r_1) - h(u_1- \Delta u_1,r_1+ \Delta r_1)}{\Delta u_1}
+ \frac{h(u_1- \Delta u_1,r_1+ \Delta r_1) - h(u_1- \Delta u_1,r_1)}{\Delta u_1}
\\
& = \frac{1}{\Delta u_1} \int_{u_1- \Delta u_1}^{u_1}f(u,\chi(u,r_1)) \, du + \Big(\frac{h(u_1- \Delta u_1,r_1+ \Delta r_1) - h(u_1- \Delta u_1,r_1)}{\Delta r_1} \Big)\Big(\frac{\Delta r_1}{\Delta u_1} \Big).
\endaligned
\end{equation*}
Since $h$ is continuously differentiable with respect to $r$ in $\Ocal$, and since
$$
\frac{\Delta r_1}{\Delta u_1}= \frac{1}{\Delta u_1} \int_{u_1- \Delta u_1}^{u_1} \frac{1}{2} \gb(u,\chi(u,r_1)) \, du,
$$
then, taking $\Delta u_1\to 0$, the right-hand side of \eqref{main equation 1 with u} tends to the limit $\Big(f+ \frac{\gb}{2} \del_r h\Big)(u_1,r_1)$. Therefore, $h$ is continuously differentiable with respect to $u$ in $\Ocal$ and 
$$
\del_ uh = \frac{1}{2r}(g- \gb)(h - \hb) - \frac{c^2}{2}r \, \hb \, e^{\nu+ \lambda} + \frac{\gb}{2} \del_r h.
$$
We conclude that $h\in C^1(\Ocal)$ and satisfies the nonlinear evolution equation \eqref{self-gravitating eq} 
as an equality between differentiable functions in $\Ocal$. 


\paragraph{Toward the Hawking mass equation.}

\begin{lemma}
For each $u_0 \geq 0$ and  $r_0> 0$ one has
$
\Big(\frac{e^{\nu_\eps + \lambda_\eps}}{\gb_\eps} \Big)(u,.) \in L^1(0,r_0).
$
\end{lemma}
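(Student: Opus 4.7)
The strategy is to apply the integral identity \eqref{integral identity} from Corollary~\ref{coroll-thr-21} at time $u$, drop the non-negative contributions, and thereby transfer the $L^1$ control onto the initial cone $u=0$ --- where the $\eps$-regularization makes the integrand manifestly continuous and bounded on compact intervals. Unlike for massless fields, no further dynamical input is needed here: \eqref{integral identity} is already the exact ingredient we need.

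Fix $u \in [0, u_0]$ and $\delta \in (0, r_0)$, and apply \eqref{integral identity} with $(u_1, r_1) = (u, r_0)$. The spacetime integral $\iint_{\Ocal_{\eps,\delta}(u, r_0)} \frac{e^{\nu_\eps+\lambda_\eps}}{\gb_\eps^2}\frac{\xi_\eps^2}{r+\eps} \, drdu$ and the boundary integral $\tfrac{1}{2}\int_0^u e^{(\nu_\eps + \lambda_\eps)(u',\delta)} du'$ on the left-hand side are both non-negative, so discarding them yields
\begin{equation*}
\int_{\delta}^{r_0} \Big(\frac{e^{\nu_\eps + \lambda_\eps}}{\gb_\eps}\Big)(u, r) \, dr
\;\leq\; \int_{\delta}^{r_{0,\eps}} \Big(\frac{e^{\nu_\eps + \lambda_\eps}}{\gb_\eps}\Big)(0, r) \, dr
\;\leq\; \int_{0}^{r_{0,\eps}} \Big(\frac{e^{\nu_\eps + \lambda_\eps}}{\gb_\eps}\Big)(0, r) \, dr,
\end{equation*}
where $r_{0,\eps} := \chi_\eps(0; u, r_0)$ is a finite positive number (the characteristic exists on $[0, u]$ by the construction in Theorem~\ref{lemma-global epsilon existence}). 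Since the integrand on the left is positive and increasing as $\delta \to 0$, it suffices to prove that the rightmost quantity is finite and then invoke monotone convergence.

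Finiteness of the initial-data integral is precisely where the $\eps$-regularization is used. The initial metric functions are determined from $h_0 \in C^1$ and the datum $\hb_{\eps,0} = \frac{1}{r+\eps}\int_0^r h_0 \, ds$, giving
\begin{equation*}
(\nu_{\eps,0} + \lambda_{\eps,0})(r)
= -4\pi \int_r^{+\infty} \frac{(h_0 - \hb_{\eps,0})^2}{s+\eps} \, ds,
\end{equation*}
which is continuous and bounded on $[0, r_{0,\eps}]$ because the denominator $s+\eps$ is bounded away from zero. By hypothesis $\gb_{h_0} = \gb_{\eps,0} > 0$ on $(0, +\infty)$, and the identity $\gb_{\eps,0} = e^{\nu_{\eps,0} - \lambda_{\eps,0}}$ together with continuity at $r = 0$ (which holds thanks to $\eps > 0$ regularizing all the $1/r$ factors in the reduced system) extends this positivity to the closed interval $[0, r_{0,\eps}]$. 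Compactness then forces $\gb_{\eps,0}$ to be bounded below by a positive constant on $[0, r_{0,\eps}]$, so $e^{\nu_{\eps,0}+\lambda_{\eps,0}}/\gb_{\eps,0}$ is bounded above on this compact interval and hence integrable. The only nontrivial point to watch is ensuring $\gb_{\eps,0}(0) > 0$; this is automatic for the regularized system and would fail without the $\eps$ --- indeed, for the unregularized initial datum, $\gb(0)$ need not be defined, which is precisely the difficulty that motivated the regularization at the center.
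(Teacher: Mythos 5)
Your opening step (applying the integral identity \eqref{integral identity} and discarding the two non-negative terms) matches the paper's, but the argument collapses at the point where you claim the initial-cone integrand is bounded on the \emph{closed} interval $[0,r_{0,\eps}]$ because ``$\gb_{\eps,0}(0)>0$ is automatic for the regularized system.'' This is exactly backwards. By the definition of the regularized mean, $\gb_\eps(u,r)=\frac{1}{r+\eps}\int_0^r g_\eps(u,s)\,ds$, so $\gb_\eps(u,0)=0$ for every $\eps>0$; equivalently $m_\eps(u,0)=\eps/2$, so that
\begin{equation*}
\frac{e^{\nu_\eps+\lambda_\eps}}{\gb_\eps}=\Big(1-\frac{2m_\eps}{r+\eps}\Big)^{-1}\longrightarrow+\infty
\qquad\text{as }r\to 0 .
\end{equation*}
Since $g_\eps(u,0)=e^{(\nu_\eps+\lambda_\eps)(u,0)}>0$, one has $\gb_\eps(u,r)\sim \frac{r}{\eps}\,g_\eps(u,0)$ near the center, hence $\frac{e^{\nu_\eps+\lambda_\eps}}{\gb_\eps}\sim \frac{\eps}{r}$, which is \emph{not} integrable on $(0,r_0)$. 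So the literal $\eps$-statement is false, your bound $\int_\delta^{r_{0,\eps}}\leq\int_0^{r_{0,\eps}}$ is a bound by $+\infty$, and the compactness/continuity argument you give for the right-hand side cannot succeed. It is the \emph{unregularized} mean that is well behaved at the center ($\gb(u,0)=g(u,0)>0$ by continuity of $g$), which is the opposite of what you assert in your final sentence.

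The lemma, as its role in Definition~\ref{def-general-sol} and the paper's own proof make clear, is really about the limit function $e^{\nu+\lambda}/\gb$. The correct route keeps $\delta>0$ fixed at the $\eps$-level: one first shows $\gb_\eps$ is bounded below on $r\geq\delta$ uniformly in $u$ and $\eps$ (via the upper bound on $\lambda_\eps$ from \eqref{bound above of lambda} and the lower bound \eqref{bound below of e^nu+lambda}), so that $\int_\delta^{r_0}(e^{\nu_\eps+\lambda_\eps}/\gb_\eps)\,dr\to\int_\delta^{r_0}(e^{\nu+\lambda}/\gb)\,dr$; then the integral identity gives a bound $C_3(r_0+u/2)$ that is independent of $\delta$ and $\eps$ because the constant $C_3$ in \eqref{c3-def} is built from the \emph{unregularized} initial data, whose integrand is bounded near $r=0$; only after passing to the limit in $\eps$ does one send $\delta\to 0$ by monotone convergence. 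Your proposal inverts this order of limits, and that inversion is precisely what fails.
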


\begin{proof} Observe first that for all $r\geq \delta$
\bel{bound below of e^nu+lambda}
\aligned
e^{(\nu_\eps + \lambda_\eps)(u,r)}  & \geq  e^{- \frac{4\pi}{\delta} \int_0^{+ \infty}(h_\eps- \hb_\eps)^2(u,s) \, ds}
\geq C_1(u_0,\delta),
\endaligned
\ee
where, thanks to (iii), the constant $C_1(u_0,\delta) > 0$ is independent of $u$ and $\eps$. On the other hand, arguing as in the proof of \eqref{bound above of lambda}, for all $(u,r) \in [0,u_0]\times [\delta,+ \infty)$ we find 
$$
\lambda_\eps(u,r)\leq \lambda_\eps(0,r)+ \int_0^{u_0} \bigg(\pi \, \gb_\eps \frac{(h_\eps- \hb_\eps)^2}{r+ \eps} + \pi c^2(r+ \eps) \, \hb_\eps^2 e^{\nu_\eps + \lambda_\eps} \bigg) \, du,
$$
hence, by (2) in \eqref{equa-1234}, together with 
\eqref{g-gb}, \eqref{control h in C0}, and \eqref{control del_rg} there exists a constant $C_2(u_0,\delta) > 0$
(independent of $u$ and $\eps$) such that
\bel{bound above of lambda t=1}
\lambda_\eps(u,r)\leq C_2(u_0,\delta).
\ee
Combining this result with \eqref{bound below of e^nu+lambda}, we have 
$$
\gb_\eps= e^{\nu_\eps + \lambda_\eps}e^{-2\lambda_\eps} \geq C_1(u_0,\delta) e^{-2C_2(u_0,\delta)}, 
\qquad r\geq \delta.
$$
In other words, $\gb_\eps(u,\delta)$ is uniformly bounded from below (in $\eps$ and $u$) by a positive constant. Therefore, we find 
\bel{e/gb converges}
\int_{\delta}^{r_0} \Big(\frac{e^{\nu_\eps + \lambda_\eps}}{\gb_\eps} \Big)(u,r) \, dr \to  \int_{\delta}^{r_0} \Big(\frac{e^{\nu+ \lambda}}{\gb} \Big)(u,r) \, dr
\qquad \text{ uniformly in $u$.}
\ee
Now, by the identity \eqref{integral identity}, for each $u$, we have
\bel{integral identity consequence 1}
\aligned
\int_{\delta}^{r_0} \Big(\frac{e^{\nu_\eps + \lambda_\eps}}{\gb_\eps} \Big)(u,r) \, dr& \leq  \int_{\delta}^{r_{0,\eps}} \Big(\frac{e^{\nu_\eps + \lambda_\eps}}{\gb_\eps} \Big)(0,r) \, dr
 \leq  \int_{\delta}^{r_{0} +u/2} \Big(\frac{e^{\nu_\eps + \lambda_\eps}}{\gb_\eps} \Big)(0,r) \, dr 
\\
& \leq  
2\int_{\delta}^{r_{0} +u/2} \Big(\frac{e^{\nu+ \lambda}}{\gb} \Big)(0,r) \, dr 
\leq C_3(r_0+ \frac{1}{2}u),
\endaligned
\ee
where the second inequality follows from $r_{0,\eps} \leq r_0+ u/2$, the third from \eqref{e/gb converges} and
\bel{c3-def}
C_3:=2\int_{0}^{r_{0} +u/2} \Big(\frac{e^{\nu+ \lambda}}{\gb} \Big)(0,r) \, dr
\ee
 is independent of $\eps$ and $\delta$. Therefore, we have 
\bel{bound of f_i}
\int_{\delta}^{r_0} \Big(\frac{e^{\nu+ \lambda}}{\gb} \Big) \, dr\leq C_3\Big(r_0+ \frac{1}{2}u\Big). 
\ee
Next, letting $\{\delta_i\}$ be a subsequence converging to $0$, we consider the sequence of functions $\{f_i\}$ given by
$$
f_i:= 
\begin{cases} 
\frac{e^{\nu+ \lambda}}{\gb}, 
& r\geq \delta_i,  
\\
0, & \text{otherwise.} 
\end{cases} 
$$
Since, at each $u$, $\{f_i\}$ is an increasing sequence of measurable functions and converges pointwise to $\frac{e^{\nu+ \lambda}}{\gb}$, then by the monotone convergence theorem 
and thanks to \eqref{bound of f_i} 
we conclude that 
(at each $u$) $\frac{e^{\nu+ \lambda}}{\gb} \in L^1(0,r_0)$ for all arbitrary $r_0$ and, as desired, 
$$
\int_0^{r_0} \Big(\frac{e^{\nu+ \lambda}}{\gb} \Big)(u,r) \, dr\leq C_3\Big(r_0+ \frac{1}{2}u\Big). 
\hskip2.cm 
\qedhere 
$$
\end{proof}


\section{Mass equation and completion of the proof}
\label{sec-section7}

\subsection{The function $\xi$} 

\paragraph{Analysis of the function $\xi$.}

The main remaining difficulty now is to prove that $h$ satisfies the evolution equation for the Hawking mass. 
In order to do that, we begin by analyzing $\xi_\eps$. 
Similarly as in \eqref{integral identity consequence 1}, we have from the $\eps$-integral identity  \eqref{integral identity} that for any small $(\eps,\delta)$
$$
\iint_{\Ocal_{\delta,r_0}} \frac{e^{\nu_\eps + \lambda_\eps}}{\gb_\eps^2} \frac{\xi_\eps^2}{(r+ \eps)} \, dudr\leq C_4
:= \frac{C_3(r_0+u_0/2)}{2\pi}.
$$
This means that the sequence 
$\Big(\frac{e^{\frac{\nu_\eps + \lambda_\eps}{2}}}{\gb_\eps} \frac{\xi_\eps}{(r+ \eps)^{1/2}} \Big)$
is contained in the closed ball of radius $C_4$ in $L^2(\Ocal_{\delta,r_0})$, for any $r_0$ and small $(\eps,\delta)$. Therefore, an analysis similar to that in \cite[Section 5]{Chr2} shows that there exists a function defined on $\Ocal$, which we denote by
$\frac{e^{\frac{\nu+ \lambda}{2}}}{\gb} \frac{\xi}{r^{1/2}}$
such that (after passing to a subsequence), for any $r_0\geq \delta> 0$,  the sequence $\frac{e^{\frac{\nu_\eps + \lambda_\eps}{2}}}{\gb_\eps} \frac{\xi_\eps}{(r+ \eps)^{1/2}}$
converges weakly in $L^2(\Ocal_{\delta,r_0})$ to the expected limit above
 and, moreover, we obtain
$$
\frac{e^{\frac{\nu+ \lambda}{2}}}{\gb} \frac{\xi}{r^{1/2}} \in L^2(\Ocal(r_0))
 \quad \text{for all $r_0$,} 
 \quad  \quad  \quad 
\iint_{\Ocal(r_0)} \frac{e^{\frac{\nu+ \lambda}{2}}}{\gb} \frac{\xi}{r^{1/2}} \, dudr\leq C_4,
$$
with $\Ocal(r_0) := \big\{ (u,r) \in \Ocal|r\leq r_0\big\}$. 


In particular, for any $L^2$ function $\omega$ whose support is a compact set in $\Ocal$, we have
\bel{convergece of Theta}
\iint_{\Ocal} \frac{e^{\frac{\nu_\eps + \lambda_\eps}{2}}}{\gb_\eps} \frac{\xi_\eps}{(r+ \eps)^{1/2}} \omega \, dudr 
\to 
\iint_{\Ocal} \frac{e^{\frac{\nu+ \lambda}{2}}}{\gb} \frac{\xi}{r^{1/2}} \omega \, dudr.
\ee
Since, for any given $r_0\geq \delta> 0$, the quantity $\frac{e^{\frac{\nu_\eps + \lambda_\eps}{2}}}{\gb_\eps}$ converges uniformly in $\Ocal_{\delta,r_0}$  to $\frac{e^{\frac{\nu+ \lambda}{2}}}{\gb}$, it follows from \eqref{convergece of Theta} that, for any smooth function $\varphi$ whose support is compact and contained in the interior of $\Ocal$,
\bel{Theta integral}
\iint_{\Ocal} \xi_\eps \varphi \, dudr \to  \iint_{\Ocal} \xi\varphi \, dudr.
\ee 
Now, using the definition of $\xi_\eps$ we get
$
\del_r\xi_\eps= \gb_\eps \frac{(h_\eps- \hb_\eps)}{(r+ \eps)} - c^2(r+ \eps) \, \hb_\eps e^{\nu_\eps + \lambda_\eps}
$
and then
\bel{delr Theta 1}
- \iint_{\Ocal} \Big(\gb_\eps \frac{(h_\eps- \hb_\eps)}{(r+ \eps)} - c^2(r+ \eps) \, \hb_\eps e^{\nu_\eps + \lambda_\eps} \Big)\varphi \, dudr= \iint_{\Ocal} \xi_\eps \del_r\varphi \, dudr.
\ee
Since $\gb_\eps \frac{(h_\eps- \hb_\eps)}{(r+ \eps)} - c^2(r+ \eps) \, \hb_\eps e^{\nu_\eps + \lambda_\eps}$ converges uniformly in $\Ocal_{\delta,r_0}$, taking $\eps \to  0$ we obtain, from \eqref{Theta integral} and \eqref{delr Theta 1}, that
$$
- \iint_{\Ocal} \Big(\gb\frac{(h - \hb)}{r} - c^2r \, \hb \, e^{\nu+ \lambda} \Big)\varphi \, dudr= \iint_{\Ocal} \xi\del_r\varphi \, dudr.
$$
This means that $\xi$ is weakly differentiable with respect to $r$ in $\Ocal$ and 
\bel{delr Theta}
\del_r\xi= \gb\frac{(h - \hb)}{r} - c^2r \, \hb \, e^{\nu+ \lambda}.
\ee 


\paragraph{Continuity of the function $\xi$.}
\begin{lemma}
\label{lemma-lost} 
At almost all $u$, $\xi(u,r)$ is continuous with respect to $r$ and $\xi(u,r) \to  0$ as $r\to0$ and, consequently,
\bel{formula Theta 2}
\xi(u,r) = \lim_{\delta\to0} \int_{\delta}^{r} \Big(\gb\frac{(h - \hb)}{s} - c^2s\hb \, e^{\nu+ \lambda} \Big) \, ds.
\ee
\end{lemma}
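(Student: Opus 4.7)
The plan is to deduce both assertions from three ingredients already at hand: (a) the weak derivative identity \eqref{delr Theta}, i.e.\ $\del_r\xi=\gb(h-\hb)/r-c^2 r\,\hb\,e^{\nu+\lambda}$; (b) the energy bounds on $(h-\hb)$ in $L^2(\RR^+)$ and $|r\hb|\le\Acal$ with $r\hb\to 0$ as $r\to 0$; (c) the spacetime bound
\[
\iint_{\Ocal(u_0,r_0)}\frac{e^{\nu+\lambda}}{\gb^2}\,\frac{\xi^2}{r}\,du\,dr\;\leq\; C_4,
\]
obtained by passing to the limit $\eps\to 0$ in the $\eps$-integral identity \eqref{integral identity} (exactly as was done in deriving~\eqref{e epsilon to e} and~\eqref{integral identity consequence 1}). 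Since $e^{\nu+\lambda}/\gb^2$ is locally bounded below away from zero on any $\Ocal_{\delta,r_0}$, Fubini's theorem then gives, at a.e.\ $u\in[0,u_0]$, the pointwise integrability $\int_0^{r_0}\xi^2(u,r)/r\,dr<+\infty$, together with $(h-\hb)(u,\cdot)\in L^2(\RR^+)$. Throughout what follows I fix such a generic $u$.

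\paragraph{Continuity.} At such $u$, Cauchy--Schwarz yields
\[
\int_\delta^{r_0}\Big|\gb\,\frac{h-\hb}{s}\Big|\,ds\;\leq\;\|\gb\|_{C^0}\,\delta^{-1/2}\|h-\hb\|_{L^2(0,r_0)}\;<\;\infty
\]
for every $\delta>0$, while the other contribution $c^2 s\,\hb\,e^{\nu+\lambda}$ is bounded by $c^2\Acal$. Hence the right-hand side of \eqref{delr Theta} is integrable on $[\delta,r_0]$ for each $\delta>0$, and the weak identity upgrades $\xi(u,\cdot)$ to an absolutely continuous function on every $[\delta,r_0]$; in particular $\xi(u,\cdot)$ is continuous on $(0,+\infty)$.

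\paragraph{Vanishing at the center, and the integral formula.} The delicate step is $\lim_{r\to 0}\xi(u,r)=0$. The strategy is integration by parts, using that $(h-\hb)/s=\del_s\hb$: for $0<\delta<r$,
\[
\int_\delta^r\gb\,\frac{h-\hb}{s}\,ds\;=\;\gb(u,r)\hb(u,r)-\gb(u,\delta)\hb(u,\delta)-\int_\delta^r\frac{g-\gb}{s}\,\hb\,ds.
\]
The pointwise modulus-of-continuity estimate $|g(u,s)-\gb(u,s)|\le 2\omega_g(s)$ coming from the continuity of $g$ (property (vi)), together with $|s\hb|\le\Acal$ and $s\hb\to 0$ as $s\to 0$, is enough to guarantee that the last integral on the right has a finite limit and that the boundary term $\gb(u,\delta)\hb(u,\delta)$ has the same limit $\gb(u,0)\cdot\lim_{\delta\to 0}\hb(u,\delta)$ (when it exists). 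Combining with \eqref{delr Theta} integrated from $\delta$ to $r$ then shows that $\xi(u,\cdot)$ admits a finite limit $\xi_\star(u)$ at $r=0$; but since $\xi(u,\cdot)$ is continuous on $(0,+\infty)$ and $\int_0^{r_0}\xi^2/r\,dr<\infty$, any nonzero limit would make $\xi^2/r\sim\xi_\star^2/r$ non-integrable near zero, forcing $\xi_\star(u)=0$. Passing to the limit $\delta\to 0$ in
\[
\xi(u,r)-\xi(u,\delta)\;=\;\int_\delta^r\Big(\gb\,\frac{h-\hb}{s}-c^2 s\,\hb\,e^{\nu+\lambda}\Big)\,ds
\]
then yields \eqref{formula Theta 2}.

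\paragraph{Main obstacle.} The hard point is justifying that the boundary term $\gb(u,\delta)\hb(u,\delta)$ and the integral $\int_0^r (g-\gb)\hb/s\,ds$ are genuinely controlled as $\delta\to 0$: neither $\hb(u,\cdot)$ nor $(g-\gb)/s$ is a priori bounded, and only the conspiratorial combination $s\hb\to 0$ plus the modulus of continuity of $g$ saves the argument. The contradiction with $\int_0^{r_0}\xi^2/r\,dr<\infty$ is what ultimately pins the limit to zero, but it can only be invoked after one first shows the limit \emph{exists}; that existence, resting on the integration-by-parts identity above together with the uniform bound $|r\hb|\le\Acal$, is where the massive-field argument departs most substantially from Christodoulou's massless treatment.
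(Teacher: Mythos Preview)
Your continuity argument on $(0,+\infty)$ is fine, and the final contradiction step (if the limit exists and $\int_0^{r_0}\xi^2/r\,dr<\infty$ then the limit must be zero) is also correct. The gap is in the middle: you have not shown that $\lim_{\delta\to 0}\xi(u,\delta)$ \emph{exists}. After your integration by parts the identity reads
\[
\xi(u,\delta)\;=\;\xi(u,r)-\gb(u,r)\hb(u,r)+\gb(u,\delta)\hb(u,\delta)+\int_\delta^r\frac{g-\gb}{s}\,\hb\,ds+c^2\int_\delta^r s\hb\,e^{\nu+\lambda}\,ds,
\]
so you need both $\gb(u,\delta)\hb(u,\delta)$ and $\int_\delta^r\frac{(g-\gb)\hb}{s}\,ds$ to converge. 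Neither is justified: continuity of $g$ gives only $|g-\gb|\le\omega_g(s)$, and the integrand is then bounded by $\omega_g(s)|s\hb|/s^2$, which is \emph{not} integrable at $0$ without a Dini-type condition on $\omega_g$ that you do not have. Likewise $\hb(u,\delta)$ can blow up like $\delta^{-1/2}$ (this is all Lemma~\ref{lemma-rhb small at 0} gives), so the boundary term $\gb(u,\delta)\hb(u,\delta)$ need not converge. The sentence ``only the conspiratorial combination $s\hb\to 0$ plus the modulus of continuity of $g$ saves the argument'' is precisely where the proof breaks: that combination does not save it.

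The paper's argument avoids this difficulty by a logarithmic change of variable $r=r_0e^{-s}$, $\xi^*(s):=\xi(u,r)$, which converts ``$\xi(u,r)\to 0$ as $r\to 0$'' into ``$\xi^*(s)\to 0$ as $s\to+\infty$''. One then checks $\xi^*\in H^1(0,+\infty)$: the $L^2$ norm of $\xi^*/\sqrt{e^{\nu^*+\lambda^*}}$ is exactly $\int_0^{r_0}\xi^2/(r e^{\nu+\lambda})\,dr$, finite at a.e.\ $u$ by Fubini, while $\del_s\xi^*=-r\del_r\xi=-\gb(h-\hb)+c^2r^2\hb\,e^{\nu+\lambda}$ has $\int_0^\infty(\del_s\xi^*)^2/e^{\nu^*+\lambda^*}\,ds\le\frac{1}{2\pi}+c^4r_0^3\int_0^{r_0}\hb^2\,dr<\infty$, using $\int_0^{r_0}e^{\nu+\lambda}(h-\hb)^2/r\,dr=\frac{1}{4\pi}(e^{(\nu+\lambda)(r_0)}-e^{(\nu+\lambda)(0)})$. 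The Sobolev embedding $H^1(0,+\infty)\hookrightarrow C^{1/2}$ then gives continuity \emph{and} decay at infinity in one stroke, with no need to control the individual pieces of your integration by parts.
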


\begin{proposition} \label{key of Dgamma}
At almost all $u$, $\frac{\xi}{\sqrt{e^{\nu+ \lambda}}}$ is a continuous function of $r$ and converges to zero as $r \to 0$. 
At each $r$, one has $\frac{\xi}{\sqrt{e^{\nu+ \lambda}}} \in L^2(0,u_0)$ and, furthermore, 
\bel{sup Theta 0}
\sup_{r\in [0,+ \infty)} \Big|\frac{\xi}{\max\{1,r\} \sqrt{e^{\nu+ \lambda}}} \Big|\in L^2(0,u_0),
\qquad\qquad 
\lim_{r \to 0} \Big(\int_0^{u_0} \frac{\xi^2}{e^{\nu+ \lambda}}(u,r) \, du\Big) =0.
\ee
\end{proposition}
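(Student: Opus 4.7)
The plan is to combine four ingredients:
(a) Lemma~\ref{lemma-lost}, which gives continuity of $r \mapsto \xi(u,r)$ and $\xi(u,r) \to 0$ as $r \to 0$ for almost every $u$;
(b) the representation $e^{(\nu+\lambda)(u,r)} = \exp\bigl(-4\pi \int_r^{+\infty}(h-\hb)^2/s\, ds\bigr)$, together with the uniform bound on $\int_0^{+\infty}(h-\hb)^2\, dr$ from Proposition~\ref{lemma-energy-difference}, which ensures that at each $r > 0$ the function $e^{\nu+\lambda}(u,\cdot)$ is continuous, monotone non-decreasing, and bounded below by a strictly positive constant uniformly in $u\in[0,u_0]$;
(c) the mass identity $2\pi \int_0^r[e^{-2\lambda}(h-\hb)^2 + c^2 s^2 \hb^2]\,ds = m(u,r)$ with $m(u,0)=0$, inherited from \eqref{integral of del_rm};
and (d) the integral identity \eqref{integral identity 0}, obtained from \eqref{integral identity} in the limit $\eps \to 0$. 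For $r > 0$ the continuity of $\xi/\sqrt{e^{\nu+\lambda}}$ on $(0,+\infty)$ follows immediately from (a)-(b).

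\textbf{Vanishing at the centre.} To show $\xi^2/e^{\nu+\lambda}(u,r) \to 0$ as $r \to 0$, I split $\xi = \xi_1 - \xi_2$ according to \eqref{formula Theta 2} with $\xi_1 := \int_0^r \gb(h-\hb)/s\, ds$ and $\xi_2 := c^2 \int_0^r s\,\hb\, e^{\nu+\lambda}\, ds$. The term $\xi_2$ is handled by the pointwise bound $|s\hb|\leq \Acal(u,0)$ combined with the monotonicity of $e^{\nu+\lambda}$, yielding $\xi_2^2/e^{\nu+\lambda}(u,r) \leq C\, r^2\, e^{\nu+\lambda}(u,r) \to 0$. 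For $\xi_1$, the idea is to apply Cauchy-Schwarz so that one factor of $e^{\nu+\lambda}(u,r)$ is pulled out through its monotonicity in $r$, while the remaining factor collapses to $\int_0^r e^{-2\lambda}(h-\hb)^2\, ds \leq m(u,r)/(2\pi)$ via the identity $\gb/e^{\nu+\lambda} = e^{-2\lambda}$; since $m(u,0)=0$ this quantity vanishes as $r\to 0$, giving $\xi_1^2/e^{\nu+\lambda} \to 0$ as well.

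\textbf{Spacetime integrability and conclusion.} The identity \eqref{integral identity 0} combined with $\gb \leq e^{\nu+\lambda}$ (so $e^{\nu+\lambda}/\gb^2 \geq 1/e^{\nu+\lambda}$) yields
\[
\iint_{\Ocal(u_1,r_1)}\frac{\xi^2}{r\, e^{\nu+\lambda}}\, du\, dr
\leq
\iint_{\Ocal(u_1,r_1)}\frac{e^{\nu+\lambda}}{\gb^2}\frac{\xi^2}{r}\, du\, dr < +\infty.
\]
Together with the uniform lower bound on $e^{\nu+\lambda}(u,r)$ at each fixed $r > 0$ noted in (b), this provides the $L^2(0,u_0)$ membership at each $r$. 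The $\sup_r$ bound with weight $\max\{1,r\}$ splits into a small-$r$ regime, governed by the pointwise estimate of the previous paragraph, and a large-$r$ regime, where the decay of $h$ and $\hb$ at null infinity (Lemma~\ref{uniform bound w.r.t. u}) controls $\xi/r$ uniformly in $u$. Finally, the limit $\int_0^{u_0}\xi^2/e^{\nu+\lambda}\, du \to 0$ as $r\to 0$ follows by dominated convergence, using the pointwise vanishing just established and a uniform-in-$r$ dominating function supplied by the same estimate.

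\textbf{Main obstacle.} The delicate point is the Cauchy-Schwarz matching for $\xi_1^2/e^{\nu+\lambda}$ as $r \to 0$: naive splittings generate divergent weights such as $\int_0^r ds/s$ or $\int_0^r ds/s^2$, and the argument must be balanced so that after dividing by $e^{\nu+\lambda}(u,r)$ the residual is precisely $\int_0^r e^{-2\lambda}(h-\hb)^2\, ds$, which tends to $0$ thanks to~\eqref{integral of del_rm} and $m(u,0)=0$. Achieving this precise alignment, and showing that the remaining prefactor stays bounded, is the technical heart of the proof.
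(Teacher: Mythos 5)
There is a genuine gap at the heart of your argument, namely the ``vanishing at the centre'' step for $\xi_1=\int_0^r\gb(h-\hb)s^{-1}\,ds$. The balanced Cauchy--Schwarz pairing you are hoping for does not exist: if you insist that the residual factor be $\int_0^r e^{-2\lambda}(h-\hb)^2\,ds=\int_0^r \gb e^{-(\nu+\lambda)}(h-\hb)^2\,ds\le m(u,r)/(2\pi)$, the conjugate weight is forced to be $\gb\,e^{\nu+\lambda}/s^2$, whose integral diverges like $1/r$; if instead you pair so that the weight is merely $\gb/s$ (logarithmically divergent, since $\gb(u,0)=e^{(\nu+\lambda)(u,0)}$ need not vanish), the residual is $\int_0^r\gb (h-\hb)^2 s^{-1}ds\le \tfrac{1}{4\pi}\big(e^{(\nu+\lambda)(u,r)}-e^{(\nu+\lambda)(u,0)}\big)$, which tends to zero but with \emph{no rate}: nothing in the available estimates says that $m(u,r)$ or $e^{(\nu+\lambda)(u,r)}-e^{(\nu+\lambda)(u,0)}$ decays faster than $1/|\log r|$. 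So the product (divergent weight)$\times$(rate-free vanishing factor) cannot be concluded to go to zero, and the pointwise limit $\xi^2/e^{\nu+\lambda}\to 0$ as $r\to0$ does not follow from the formula \eqref{formula Theta 2} plus energy bounds alone. The second limit in \eqref{sup Theta 0}, which you derive by dominated convergence from this pointwise statement, inherits the same gap.

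The missing idea is that the centre behaviour of $\xi$ is controlled not by its explicit integral representation but by the spacetime integral identity: $\iint \frac{e^{\nu+\lambda}}{\gb^2}\frac{\xi^2}{r}\,du\,dr<\infty$ gives, by Fubini and $e^{\nu+\lambda}\ge\gb$, that $\frac{\xi^2}{r\,e^{\nu+\lambda}}(u,\cdot)\in L^1(0,r_0)$ for a.e.\ $u$ (cf.~\eqref{Theta(0) =0}). The paper then passes to the logarithmic variable $r=r_0e^{-s}$ as in \eqref{define Theta*}, observes that this $L^1$ statement becomes $\xi^*/\sqrt{e^{\nu^*+\lambda^*}}\in L^2(0,+\infty)$ and that $\del_s\xi^*/\sqrt{e^{\nu^*+\lambda^*}}\in L^2$ by the direct computation \eqref{delr Theta L2}, and applies the Sobolev embedding $H^1(s_0,+\infty)\hookrightarrow L^\infty$ together with the monotonicity of $e^{\nu^*+\lambda^*}$ to obtain the sup bound \eqref{sup Theta}; letting $r_0\to0$ kills all three terms on its right-hand side (absolute continuity of the integral for the first, continuity of $e^{\nu+\lambda}$ at $r=0$ for the third) and yields the pointwise vanishing. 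Your remaining steps are either consistent with the paper (the $\xi_2$ term, the large-$r$ regime via the energy bounds, the final dominated-convergence passage) or contain only minor imprecision (the $L^2(0,u_0)$ membership at \emph{each} $r$ requires propagating from one good radius $\delta$ to all $r$ via \eqref{formula Theta}, since the spacetime bound only gives it for a.e.\ $r$), but the centre estimate as you propose it would fail.
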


By using similar methods to \cite[Section 5]{Chr2} we can show that $\hb$ is weakly differentiable with respect to $u$ in $\Ocal$ and 
\bel{evolution hb}
D\hb= \frac{\xi}{2r}.
\ee
We now give the proof of the previous two statements. 

\begin{proof}[Proof of Lemma~\ref{lemma-lost} ] 
Since $\frac{e^{\nu+ \lambda}}{\gb} \geq 1$, the fact that $\frac{e^{\frac{\nu+ \lambda}{2}}}{\gb} \frac{\xi}{r^{1/2}} \in L^2(\Ocal(r_0))$ implies that
\bel{Theta(0) =0}
\frac{\xi}{\sqrt{r\gb}} \in L^2(\Ocal(r_0)).
\ee
In particular, for almost all $u$, $\frac{\xi(u,r)}{\sqrt{r\gb(u,r)}} \in L^2(0,r_0)$, hence so does $\frac{\xi(u,r)}{\sqrt{re^{\nu+ \lambda}}}$. Let us introduce 
\bel{define Theta*}
r:=r_0e^{-s},\,\,\,\xi^*(u,s) := \xi(u,r)\quad\text{and} \quad e^{(\nu^*+ \lambda^*)(u,s)}:=e^{(\nu+ \lambda)(u,r)}.
\ee
Since
\bel{Theta L2}
\int_0^{+ \infty} \frac{\xi^{*2}}{e^{\nu^*+ \lambda^*}}(u,s) \, ds= \int_0^{r_0} \frac{\xi^2(u,r)}{re^{\nu+ \lambda}}(u,r) \, dr,
\ee
then $\frac{\xi^*}{\sqrt{e^{\nu^*+ \lambda^*}}}(u,.) \in L^2(0,+ \infty)$ for almost all $u$. On the other hand, by \eqref{delr Theta}, we have
$
\del_s\xi^*= - r\del_r\xi= -  \gb(h - \hb)+c^2r^2\hb \, e^{\nu+ \lambda}.
$
By a direct calculation,
\bel{delr Theta L2}
\aligned
\int_0^{+ \infty} \frac{1}{e^{\nu^*+ \lambda^*}} \big(\del_s\xi^*\big)^2ds
&= \int_0^{r_0} \frac{1}{re^{\nu+ \lambda}} \Big(\gb(h - \hb) \, dr - c^2r^2\hb \, e^{\nu+ \lambda} \Big)^2 \, dr
\\
& \leq   2\int_0^{r_0}e^{\nu+ \lambda} \frac{(h - \hb)^2}{r} \, dr +c^4\int_0^{r_0}e^{\nu+ \lambda}r^3\hb^2
\\
& \leq  \frac{1}{2\pi} \big(e^{(\nu+ \lambda)(r_0)} -e^{(\nu+ \lambda)(0)} \big)+c^4r^3_0\int_0^{r_0} \hb^2 \, dr
  \leq  \frac{1}{2\pi} +c^4r_0^3\int_0^{r_0} \hb^2 \, dr.
\endaligned
\ee
Since $e^{\nu^*+ \lambda^*} \leq 1$ and since $\int_0^{+ \infty} \hb^2 \, dr$ is bounded by a continuous function of $u$, then \eqref{Theta L2} and \eqref{delr Theta L2} tell us that, for almost all $u$, $\xi^*$ belongs to the Sobolev space $H^1(0,+ \infty)$.
Therefore, by the Sobolev embedding theorem, for almost all $u$, $\xi^*(u,.) \in C^{1/2}[0,+ \infty)$ and $\xi^*(u,s) \to  0$ as $s\to + \infty$. Consequently, for almost all $u$, $\xi(u,r)$ is continuous with respect to $r$ and $\xi(u,r) \to  0$ as $r\to0$.

Now, for any $\delta> 0$, the function $\xi$ is expressed as
\bel{formula Theta}
\xi(u,r) = \xi(u,\delta)+ \int_{\delta}^{r} \Big(\gb\frac{(h - \hb)}{s} - c^2s\hb \, e^{\nu+ \lambda} \Big) \, ds,
\ee
and taking $\delta\to0$ in this equation, the result of the lemma follows.
\end{proof}


\begin{proof}[Proof of Proposition~\ref{key of Dgamma}] 
From \eqref{Theta(0) =0}, we have that $\frac{\xi}{\sqrt{e^{\nu+ \lambda}}}(.,\delta) \in L^2(0,u_0)$. It follows, from \eqref{formula Theta}, that so does $\frac{\xi}{\sqrt{e^{\nu+ \lambda}}}(.,r)$ for all $r> 0$. Moreover, it is also true by \eqref{Theta(0) =0} that, 
for any $(u_1,r_1) \in \Ocal$, the restriction of $\xi$ to the characteristic $\chi(\cdot; u_1,r_1)$ belongs to $L^2(0,u_1)$.

Now, let $(s,\,\xi^*,\,e^{\nu^*+ \lambda^*})$ be defined as in \eqref{define Theta*}. For every $s_0\in [0,+ \infty)$, by the Sobolev inequality we get
\bel{sobolev inequality}
\sup_{s\geq s_0} \xi^{*2}(s)\leq c_*\int_{s_0}^{+ \infty} \Big(\xi^{*2} + \big(\del_s\xi^{*} \big)^2\Big) \, ds,
\ee
where $c_*$ is a constant, independent of $s_0$. Since $e^{\nu^*+ \lambda^*}$ is a non-increasing function of $s$, it follows that
$$
\Big(\frac{\xi^{*2}}{e^{\nu^*+ \lambda^*}} \Big)(s_0)\leq \frac{c_*}{e^{(\nu^*+ \lambda^*)(s_0)}} \int_{s_0}^{+ \infty} \Big(\xi^{*2} + \big(\del_s\xi^{*} \big)^2\Big) \, ds\leq c_*\int_{s_0}^{+ \infty} \frac{1}{e^{\nu^*+ \lambda^*}} \Big(\xi^{*2} + \big(\del_s\xi^{*} \big)^2\Big) \, ds.
$$ 
Therefore, we have
\bel{consequence sobolev inequality}
\sup_{s\geq 0} \Big(\frac{\xi^{*2}}{e^{\nu^*+ \lambda^*}} \Big)(s)\leq c_*\int_{0}^{+ \infty} \frac{1}{e^{\nu^*+ \lambda^*}} \Big(\xi^{*2} + \big(\del_s\xi^{*} \big)^2\Big) \, ds.
\ee
Taking into account \eqref{Theta L2} and the second inequality of \eqref{delr Theta L2},  from \eqref{consequence sobolev inequality} 
we deduce that
\bel{sup Theta}
\aligned
\sup_{r\in[0,r_0]} \Big(\frac{\xi^{2}}{e^{\nu+ \lambda}} \Big)(r)
& \leq  c_*\Big(\int_0^{r_0} \frac{\xi^2}{re^{\nu+ \lambda}} \, dr +c^4r_0^3 \int_0^{r_0} \hb^2 \, dr + \frac{1}{2\pi} \big(e^{(\nu+ \lambda)(r_0)} -e^{(\nu+ \lambda)(0)} \big)\Big)
\\
& \leq  c_*\Big(\int_0^{r_0} \frac{\xi^2}{re^{\nu+ \lambda}} \, dr +c^4r_0^3F(u)+ \frac{1}{2\pi} \big(e^{(\nu+ \lambda)(r_0)} -e^{(\nu+ \lambda)(0)} \big)\Big),
\endaligned
\ee
where $F(u)$ is a continuous function of $u$ and we applied \eqref{energy-int}. Since $e^{\nu+ \lambda}$ is continuous on $[0,u_0]\times [0,+ \infty)$, taking $r_0\to 0$ in \eqref{sup Theta}, we derive from \eqref{Theta(0) =0} that, at almost all $u$, $\frac{\xi^2}{e^{\nu+ \lambda}} \to  0$ as $r\to0$, as claimed.

Now, by integrating \eqref{sup Theta} with respect to $u$ we obtain 
\bel{sup Theta 2}
\aligned
& \int_0^{u_0} \sup_{r\in[0,r_0]} \Big(\frac{\xi^{2}}{e^{\nu+ \lambda}} \Big)(u,r) \, du
\\
& \leq c_*\Big(\iint_{\Ocal(r_0)} \frac{\xi^2}{re^{\nu+ \lambda}} \, drdu+c^4r_0^3\int_0^{u_0}F(u) \, du+ \frac{1}{2\pi} \int_0^{u_0} \big(e^{(\nu+ \lambda)(r_0)} -e^{(\nu+ \lambda)(0)} \big) \, du\Big).
\endaligned
\ee
It then follows from \eqref{Theta(0) =0} that
\bel{r_0<1}
\sup_{r\in [0,1]} \Big| \xi \, e^{- (\nu+ \lambda)/2} \Big|\in L^2(0,u_0).
\ee
On the other hand, since 
$$
\xi(u,r) = \xi(u,1)+ \int_{1}^{r} \Big(\gb\frac{(h - \hb)}{s} - c^2s\hb \, e^{\nu+ \lambda} \Big) \, ds,
$$
we have, for all $r>1$,
$$
\aligned
\Big(\frac{\xi^2}{r^2e^{\nu+ \lambda}} \Big)(u,r)& \leq  2\Big(\frac{\xi^2}{e^{\nu+ \lambda}} \Big)(u,1)+ \frac{2}{r^2e^{\nu+ \lambda}} \Big(\int_{1}^{r} \Big(\gb\frac{(h - \hb)}{s} - c^2s\hb \, e^{\nu+ \lambda} \Big) \, ds\Big)^{2} 
\\
& \leq 2\Big(\frac{\xi^2}{e^{\nu+ \lambda}} \Big)(u,1)+ \frac{2}{r} \int_1^r\Big((h - \hb)^2+c^4s^2\hb^2\Big) \, ds \leq  2\Big(\frac{\xi^2}{e^{\nu+ \lambda}} \Big)(u,1)+F_1(u),
\endaligned
$$
where $F_1(u)$ is a continuous function of $u$ and we applied \eqref{energy-int}. Combining this result with \eqref{r_0<1},  gives 
the first statement in \eqref{sup Theta 0}, as claimed.

Finally, since by dominated convergence,
$$
\int_0^{u_0} \big(e^{(\nu+ \lambda)(r_0)} -e^{(\nu+ \lambda)(0)} \big) \, du \to  0,\quad\text{as $r\to0$},
$$
taking $r_0\to0$ in \eqref{sup Theta 2}, we complete the proof with 
$
\int_0^{u_0} \Big(\frac{\xi^{2}}{e^{\nu+ \lambda}} \Big)(u,r) \, du \to  0,\quad\text{as $r\to0$}.
$ 
\end{proof}


\subsection{Compactness arguments based on the mass equation}

It is convenient to introduce the function 
$
\gamma(u,r) := -  \frac{\nu+ \lambda}{4\pi}= \int_r^{+ \infty} \frac{(h - \hb)^2}{s} \, ds.
$
Before we start studying in more detail the terms that appear in the evolution equation for the Hawking mass, 
we recall that (cf. \eqref{derivative under integral})
$$
\del_u\bigg(\int_{r}^{+ \infty} \frac{(h_\eps- \hb_\eps)^2}{s+ \eps} \, ds\bigg) = \int_{r}^{+ \infty} \del_u\bigg(\frac{(h_\eps- \hb_\eps)^2}{s+ \eps} \bigg) \, ds,
$$
so that, by straightforward calculations,
$$
\aligned
& \del_u\bigg(\int_{r}^{+ \infty} \frac{(h_\eps- \hb_\eps)^2}{s+ \eps} \, ds\bigg)
 =  \int_r^{+ \infty} \frac{2(h_\eps- \hb_\eps)}{s+ \eps} \Big(Dh_\eps+ \frac{1}{2} \gb_\eps \del_r h_\eps-D\hb_\eps- \frac{1}{2} \gb_\eps \del_r \, \hb_\eps\Big) \, ds
\\
& =  \int_r^{+ \infty} \frac{h_\eps- \hb_\eps}{s+ \eps} \Bigg(\frac{(g_\eps- \gb_\eps)(h_\eps- \hb_\eps)}{s+ \eps} - c^2(s+ \eps) \, \hb_\eps e^{\nu_\eps + \lambda_\eps}
 + \gb_\eps \del_r h_\eps- \frac{\xi_\eps}{s+ \eps} - \frac{\gb_\eps}{s+ \eps}(h_\eps- \hb_\eps)\Bigg) \, ds.
\endaligned
$$
Combining this equation with  
$$
\frac{1}{r+ \eps} \del_r\gb_\eps(h_\eps- \hb_\eps)^2+ \frac{\gb_\eps(h_\eps- \hb_\eps)}{r+ \eps} \del_r h_\eps- \frac{\gb_\eps(h_\eps- \hb_\eps)^2}{(r+ \eps)^2}= \frac{g_\eps(h_\eps- \hb_\eps)^2}{2(r+ \eps)^2} + \del_r\Big(\gb_\eps \frac{(h_\eps- \hb_\eps)^2}{2(r+ \eps)} \Big),
$$
we obtain
\bel{evolution gamma 0}
\aligned
& \del_u\bigg(\int_{r}^{+ \infty} \frac{(h_\eps- \hb_\eps)^2}{s+ \eps} \, ds\bigg) 
\\
& =  - \frac{1}{2(r+ \eps)} \gb_\eps(h_\eps- \hb_\eps)^2
 + \int_r^{+ \infty} \Big(\frac{1}{2}g_\eps(h_\eps- \hb_\eps) - \xi_\eps- c^2(s+ \eps)^2\hb_\eps e^{\nu_\eps + \lambda_\eps} \Big)\frac{h_\eps- \hb_\eps}{(s+ \eps)^2} \, ds.
\endaligned
\ee
So, for any smooth function $\varphi$ whose support is compact and contained in the interior of $\Ocal$, we have
\bel{integral term}
\aligned
& \iint_{\Ocal} \bigg(- \frac{\gb_\eps(h_\eps- \hb_\eps)^2}{2(r+ \eps)} + \int_r^{+ \infty} \Big(\frac{g_\eps(h_\eps- \hb_\eps)}{2} - \xi_\eps- c^2(s+ \eps)^2\hb_\eps e^{\nu_\eps + \lambda_\eps} \Big)\frac{h_\eps- \hb_\eps}{(s+ \eps)^2} \, ds\bigg)\varphi \, dudr
\\
& = - \iint_{\Ocal} \bigg(\int_{r}^{+ \infty} \frac{(h_\eps- \hb_\eps)^2}{s+ \eps} \, ds\bigg)(\del_u\varphi) dudr,
\endaligned
\ee
and, by \eqref{e epsilon to e},
\bel{integral term right hand}
\iint_{\Ocal} \bigg(\int_{r}^{+ \infty} \frac{(h_\eps- \hb_\eps)^2}{s+ \eps} \, ds\bigg)\del_u\varphi \, dudr \to  \iint_{\Ocal} \gamma\del_u\varphi \, dudr.
\ee

We now study the left-hand side of \eqref{integral term}. First of all, 
thanks to (2) in  \eqref{equa-1234}
and  H\"{o}lder's inequality, we have 
\bel{integral term 0}
\aligned
\bigg|\int_r^{+ \infty}g_\eps \frac{(h_\eps- \hb_\eps)^2}{(s+ \eps)^2} \, ds\bigg| & \leq \frac{1+4\pi c^2 \Acal^2_\eps(u,r)}{(r+ \eps)^2} \int_r^{+ \infty}(h_\eps- \hb_\eps)^2ds,
\\
\bigg|\int_r^{+ \infty} \hb_\eps(h_\eps- \hb_\eps) e^{\nu_\eps + \lambda_\eps} \, ds\bigg| & \leq \frac{1}{r+ \eps} \bigg|\int_r^{+ \infty}(r+ \eps) \, \hb_\eps(h_\eps- \hb_\eps) e^{\nu_\eps + \lambda_\eps} \, ds\bigg|
\\
& \leq \frac{1}{r+ \eps} \bigg(\int_r^{+ \infty} \big(r+ \eps)^2\hb_\eps^2ds\bigg)^{1/2} \bigg(\int_r^{+ \infty}(h_\eps- \hb_\eps)^2ds\bigg)^{1/2}.
\endaligned
\ee
By the fourth and fifth assertions of Theorem~\ref{lemma-global epsilon existence}, 
$
\int_r^{+ \infty}(h_\eps- \hb_\eps)^2ds$ and $\int_r^{+ \infty} \big(s+ \eps)^2\hb_\eps^2ds
$
are bounded by a continuous function of $u$ and, consequently, \eqref{integral term 0} implies  
(as $r\to + \infty$)
$$
\int_r^{+ \infty}g_\eps \frac{(h_\eps- \hb_\eps)^2}{(s+ \eps)^2} \, ds\to 0,
\quad
\qquad
 \int_r^{+ \infty} \hb_\eps(h_\eps- \hb_\eps) e^{\nu_\eps + \lambda_\eps} \, ds  \to  0. 
$$
Combining this result with the property that (for any $r_0>\delta> 0$) $h_\eps,\,\hb_\eps,\,g_\eps$ converge uniformly on $\Ocal_{\delta,r_0}$ to $h,\,\hb,\,g$, respectively, and proceeding similarly as in
 \eqref{e epsilon to e} we deduce that 
\bel{integral term 1}
\aligned
\int_r^{+ \infty}g_\eps \frac{(h_\eps- \hb_\eps)^2}{(s+ \eps)^2} \, ds & \to  \int_r^{+ \infty}g_{h} \frac{(h - \hb)^2}{s^2} \, ds
 \qquad
\text{uniformly on $\Ocal_{\delta,r_0}$,}
\\
 \int_r^{+ \infty} \hb_\eps(h_\eps- \hb_\eps) e^{\nu_\eps + \lambda_\eps} \, ds
& \to 
\int_r^{+ \infty} \hb(h - \hb) e^{\nu+ \lambda} \, ds
 \qquad
\text{uniformly on $\Ocal_{\delta,r_0}$. }
\endaligned
\ee
Next, by integration by parts we find 
\be
\label{integral term 2}
\aligned
& 
\int_r^{+ \infty} \frac{h_\eps- \hb_\eps}{(s+ \eps)^2} \xi_\eps \, ds
\\
& = \xi_\eps \int_r^{+ \infty} \frac{h_\eps- \hb_\eps}{(s+ \eps)^2} \, ds
  + \int_r^{+ \infty} \bigg(\gb_\eps \frac{(h_\eps- \hb_\eps)}{s+ \eps} - c^2(s+ \eps) \, \hb_\eps e^{\nu_\eps + \lambda_\eps} \bigg)\bigg(\int_s^{+ \infty} \frac{h_\eps- \hb_\eps}{(s_1+ \eps)^2} \, ds_1\bigg) \, ds.
\endaligned
\ee
Applying H\"{o}lder's inequality we have 
$$
\aligned
\int_r^{+ \infty} \frac{h_\eps- \hb_\eps}{(s+ \eps)^2} \, ds& \leq  \bigg(\frac{1}{3(r+ \eps)^3} \bigg)^{1/2} \bigg(\int_r^{+ \infty}(h_\eps- \hb_\eps)^2ds\bigg)^{1/2} \leq \frac{1}{(r+ \eps)^{3/2}}F_2(u),
\endaligned
$$
where $F_2(u)$ is a continuous function of $u$, by Theorem~\ref{lemma-global epsilon existence}. Similarly to \eqref{integral term 1}, 
for any $r_0\geq \delta> 0$ the two expressions 
$$
\int_r^{+ \infty} \frac{h_\eps- \hb_\eps}{(s+ \eps)^2} \, ds\quad\text{and} \quad \int_r^{+ \infty} \bigg(\gb_\eps \frac{h_\eps- \hb_\eps}{s+ \eps} - c^2(s+ \eps) \, \hb_\eps e^{\nu_\eps + \lambda_\eps} \bigg)\bigg(\int_s^{+ \infty} \frac{h_\eps- \hb_\eps}{(s_1+ \eps)^2} \, ds_1\bigg) \, ds
$$
converge uniformly on $\Ocal_{\delta,r_0}$ to, respectively, 
$$ 
\int_r^{+ \infty} \frac{h - \hb}{s^2} \, ds\quad\text{and} \quad \int_r^{+ \infty} \bigg(\gb\frac{h - \hb}{s} - c^2s\hb \, e^{\nu+ \lambda} \bigg)\bigg(\int_s^{+ \infty} \frac{h - \hb}{s_1^2} \, ds_1\bigg) \, ds.
$$
 Taking these expressions as well as \eqref{Theta integral} into account in \eqref{integral term 2}, we obtain
\bel{integral term 3}
\iint_{\Ocal} \bigg(\int_r^{+ \infty} \frac{h_\eps- \hb_\eps}{(s+ \eps)^2} \xi_\eps \, ds\bigg)\varphi \, dudr \to  \iint_{\Ocal} \bigg(\int_r^{+ \infty} \frac{h - \hb}{s^2} \xi ds\bigg)\varphi \, dudr, 
\ee
where, similarly to \eqref{integral term 2}, we used integration by parts. Therefore, we conclude from \eqref{integral term 1} and \eqref{integral term 3} that the left-hand side of \eqref{integral term} converges to 
$$
\iint_{\Ocal} \bigg(- \frac{1}{2r} \gb(h - \hb)^2+ \int_r^{+ \infty} \Big(\frac{1}{2}g(h - \hb) - \xi- c^2s^2\hb \, e^{\nu+ \lambda} \Big)\frac{(h - \hb)}{s^2} \, ds\bigg)\varphi \, dudr.
$$
In view of this result together with \eqref{integral term 0}, we infer from  \eqref{integral term} that $\gamma$ is weakly differentiable with respect to $u$ in $\Ocal$ and 
\bel{evolution gamma}
\aligned
D\gamma= \int_r^{+ \infty} \Big(\frac{1}{2}g(h - \hb) - \xi- c^2s^2\hb \, e^{\nu+ \lambda} \Big)\frac{(h - \hb)}{s^2} \, ds.
\endaligned
\ee
Consequently, $e^{\nu+ \lambda}$ is weakly differentiable with respect to $u$ in $\Ocal$ and 
$
D\big(e^{\nu+ \lambda} \big) = - 4\pi e^{\nu+ \lambda}D\gamma.
$
We next compute the evolution equation for $\frac{1}{r} \int_{\delta}^{r}g \, ds$, with $\delta> 0$. We compute 
\begin{equation*}
\aligned
\label{evolution gb delta 1}
&
D\Big(\frac{1}{r} \int_{\delta}^{r}g \, ds\Big)
 = \frac{\gb}{2r} \Big(\frac{1}{r} \int_{\delta}^{r}(1-4\pi c^2 s^2\hb^2) e^{\nu+ \lambda} \, ds- (1-4\pi c^2 r^2\hb^2) e^{\nu+ \lambda} \Big) 
\\
& \quad + \frac{1}{r} \int_0^r\Big(D\big(e^{\nu+ \lambda} \big)+ \frac{1}{2} \gb\del_r\big(e^{\nu+ \lambda} \big) -8\pi c^2 s^2 e^{\nu+ \lambda} \hb \del_u\hb-4\pi c^2 s^2\hb^2 \del_u\big(e^{\nu+ \lambda} \big)\Big) \, ds 
\\
& = \frac{\gb}{2r^2}(r\gb- \delta\gb(\delta)) - \frac{1}{2r} \gb g 
   - \frac{4\pi}{r} \int_{\delta}^{r} \Big(gD\gamma- \frac{1}{2}g\gb\frac{(h - \hb)^2}{s} +c^2s\hb(h - \hb)\gb e^{\nu+ \lambda} + 2 \, c^2s^2\hb D\hb \, e^{\nu+ \lambda} \Big) \, ds
\endaligned
\end{equation*}
and, from \eqref{evolution gamma},  
$$
\aligned
& \int_{\delta}^{r}gD\gamma ds= r\gb D\gamma- \delta\gb(\delta) D\gamma(\delta) - \int_{\delta}^{r}s\gb\del_rD\gamma ds,
 \\
& \int_{\delta}^{r} \gb\Big(s\del_rD\gamma+ \frac{g}{2} \frac{(h - \hb)^2}{s} - c^2s\hb(h - \hb) e^{\nu+ \lambda} \Big) \, ds= \int_{\delta}^{r} \frac{(h - \hb)}{s} \gb\xi ds,
\endaligned
$$
so that, in view of \eqref{delr Theta} and \eqref{evolution hb} we arrive at the identity 
\bel{evolution gb}
\aligned
D\Big(\frac{1}{r} \int_{\delta}^{r}g \, ds\Big) & =  \frac{\gb}{2r^2}(r\gb- \delta\gb(\delta)) - \frac{1}{2r} \gb g-4\pi \gb D\gamma + \frac{4\pi \, \delta}{r} \gb(\delta)D\gamma(\delta)+ \frac{2 \, c^2\pi}{r} \int_{\delta}^{r} \xi\del_r\xi ds
\\
& = f_\delta- \frac{1}{2r} \gb(g- \gb) -4\pi \gb D\gamma+ \frac{2\pi}{r} \xi^2,
\endaligned
\ee
where 
$$
f_\delta(u,\delta) := -  \frac{\gb}{2r^2} \delta \gb(u,\delta) - \frac{c^2\pi}{r} \xi^2(u,\delta)+ \frac{4\pi \, \delta}{r} \gb(u,\delta)D\gamma(u,\delta).
$$
The first term of $f_\delta$ converges to $0$, as $\delta\to0$, and by Proposition \ref{key of Dgamma}, so does the second term for almost all $u$. 
Therefore, in order to conclude that $f_\delta  \to  0$, at almost all $u$, as $\delta\to0$, it suffices to check the lemma stated next.


\subsection{Final argument}

\begin{lemma} \label{lemma-regularity of rDgamma}
For almost all $u$, one has $\delta D\gamma(u,\delta) \to 0$ as $\delta\to0$.
\end{lemma}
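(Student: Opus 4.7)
The plan is to substitute the expression~\eqref{evolution gamma} for $D\gamma$ into $\delta D\gamma(u,\delta)$ and show that each of the resulting three pieces, multiplied by $\delta$, vanishes in the limit. Restrict to the full-measure set of $u\in[0,u_0]$ on which: (a) $\gamma(u,0):=\lim_{r\to 0}\gamma(u,r)<+\infty$ (equivalently $e^{(\nu+\lambda)(u,0)}>0$), which holds a.e.\ because the integral identity~\eqref{integral identity 0} forces $e^{(\nu+\lambda)(u,0)}\in L^1(0,u_0)$; (b) $\xi(u,\cdot)$ is continuous and vanishes at the center (Lemma~\ref{lemma-lost}); (c) $\xi/\sqrt{e^{\nu+\lambda}}$ is bounded on $[0,+\infty)$ with $\lim_{r\to 0}(\xi/\sqrt{e^{\nu+\lambda}})(u,r)=0$, which follows from Proposition~\ref{key of Dgamma} combined with (a) and (b). Decompose
\[
D\gamma(u,\delta)=T_1+T_2+T_3,
\]
where $T_1=\tfrac12\int_\delta^{+\infty} g(h-\hb)^2 s^{-2}\,ds$, $T_2=-\int_\delta^{+\infty}\xi(h-\hb) s^{-2}\,ds$, and $T_3=-c^2\int_\delta^{+\infty}\hb(h-\hb)e^{\nu+\lambda}\,ds$.

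The massive term $T_3$ is handled by Cauchy--Schwarz together with the $L^2$-bounds on $\hb$ and $h-\hb$ derived from Theorem~\ref{lemma-global epsilon existence}(4)--(5) (plus $e^{\nu+\lambda}\le 1$), so $|T_3|$ is controlled by a continuous function of $u$ and $\delta T_3\to 0$. For $T_1$, the pointwise bound $|g|\le 1+4\pi c^2(r\hb)^2\le 1+4\pi c^2\Acal(u,0)^2$ reduces matters to showing $\delta\int_\delta^{+\infty}(h-\hb)^2 s^{-2}\,ds\to 0$. Setting $\psi(r):=\gamma(u,r)$, so that $\psi'(s)=-(h-\hb)^2/s$, integration by parts followed by the change of variables $s=\delta t$ gives
\[
\delta\int_\delta^{+\infty}\frac{(h-\hb)^2}{s^2}\,ds
\;=\;\psi(\delta)-\int_1^{+\infty}\frac{\psi(\delta t)}{t^2}\,dt,
\]
and since $\psi$ is continuous on $[0,+\infty)$ with $\psi(\delta t)\le \psi(0)<+\infty$, dominated convergence yields $\int_1^{+\infty}\psi(\delta t)t^{-2}\,dt\to\psi(0)\int_1^{+\infty}t^{-2}\,dt=\psi(0)$, while $\psi(\delta)\to\psi(0)$; hence $\delta T_1\to 0$.

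The delicate piece is $T_2$, and here I would exploit Proposition~\ref{key of Dgamma} via a three-zone split $T_2=T_2^{[\delta,\delta_*]}+T_2^{[\delta_*,1]}+T_2^{[1,+\infty)}$, with an auxiliary parameter $\delta_*\in(\delta,1)$ to be sent to zero \emph{after} $\delta$. On $[1,+\infty)$, the bound $|\xi|\le A(u)\,s\sqrt{e^{\nu+\lambda}}$ from the first statement of~\eqref{sup Theta 0} together with the $L^1$-bound on $|h-\hb|/s$ on $[1,+\infty)$ (by Cauchy--Schwarz and the energy estimates) yields a bound on $|T_2^{[1,+\infty)}|$ independent of $\delta$, so $\delta T_2^{[1,+\infty)}\to 0$. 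On $[\delta_*,1]$, a direct Cauchy--Schwarz estimate gives a bound depending only on $\delta_*$ and $u$, so $\delta T_2^{[\delta_*,1]}\to 0$ for fixed $\delta_*$. Finally, set $\eta(\delta_*):=\sup_{s\in[0,\delta_*]}\big|\xi(u,s)\big|/\sqrt{e^{\nu+\lambda}(u,s)}$; using $e^{\nu+\lambda}\le 1$ and Cauchy--Schwarz,
\[
|T_2^{[\delta,\delta_*]}|
\;\le\;
\eta(\delta_*)\left(\int_\delta^{\delta_*}\frac{(h-\hb)^2}{s}\,ds\right)^{\!1/2}
\left(\int_\delta^{\delta_*}\frac{ds}{s^3}\right)^{\!1/2}
\;\le\;\frac{\eta(\delta_*)\sqrt{\gamma(u,0)}}{\sqrt 2\,\delta},
\]
so $\delta|T_2^{[\delta,\delta_*]}|\le \eta(\delta_*)\sqrt{\gamma(u,0)/2}$. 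Because $\xi/\sqrt{e^{\nu+\lambda}}$ is continuous on $[0,+\infty)$ and vanishes at $r=0$, we have $\eta(\delta_*)\to 0$ as $\delta_*\to 0$; passing to $\limsup_{\delta\to 0}$ and then sending $\delta_*\to 0$ gives $\delta T_2\to 0$.

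The principal obstacle is the estimate of $T_2$: the naive Cauchy--Schwarz bound near the center is sharp of size $1/\delta$, so the claim can only close if one exploits the pointwise vanishing of $\xi/\sqrt{e^{\nu+\lambda}}$ at the center, which is not obvious and relies critically on the joint output of Lemma~\ref{lemma-lost}, Proposition~\ref{key of Dgamma}, and the a.e.\ positivity of $e^{(\nu+\lambda)(u,0)}$ drawn from the integral identity~\eqref{integral identity 0}. The secondary technical point is to ensure that the intersection of these three full-measure subsets is still of full measure, which is automatic, and that $\psi(0)=\gamma(u,0)$ is finite on this intersection so that the dominated-convergence step used for $T_1$ is legitimate.
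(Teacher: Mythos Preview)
Your decomposition into $T_1,T_2,T_3$ mirrors the paper's, and your treatment of $T_3$ and of the two ``outer'' pieces of $T_2$ is essentially correct. The issue is your assumption~(a). You write that $\gamma(u,0)<+\infty$ (equivalently $e^{(\nu+\lambda)(u,0)}>0$) holds a.e.\ ``because the integral identity~\eqref{integral identity 0} forces $e^{(\nu+\lambda)(u,0)}\in L^1(0,u_0)$''. This inference is a non sequitur: membership in $L^1$ in no way prevents a nonnegative function from vanishing on a set of positive measure (and here $e^{(\nu+\lambda)(u,0)}\in[0,1]$ is automatically in $L^1$ on bounded time intervals anyway). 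The paper nowhere proves $e^{(\nu+\lambda)(u,0)}>0$ a.e., and on physical grounds one should not expect it once collapse occurs. Both your $T_1$ argument (the dominated-convergence step needs $\psi(0)<\infty$) and your near-center $T_2$ estimate (you bound $\int_\delta^{\delta_*}(h-\hb)^2/s\,ds$ by $\gamma(u,0)$) rest on~(a), so the proof as written has a genuine gap.

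The paper avoids~(a) entirely by retaining the weight $e^{\nu+\lambda}$. For $T_1$ it uses $|g|\le (1+4\pi c^2 r^2\hb^2)e^{\nu+\lambda}$ rather than $|g|\le C$, then splits at $\sqrt{\delta}$ and exploits the exact identity $\int_a^b e^{\nu+\lambda}(h-\hb)^2 r^{-1}\,dr=\tfrac{1}{4\pi}\big(e^{(\nu+\lambda)(b)}-e^{(\nu+\lambda)(a)}\big)$, which is bounded regardless of whether $e^{(\nu+\lambda)(0)}$ vanishes. For the inner part of $T_2$ it pairs $\xi$ with $\sqrt{r e^{\nu+\lambda}}$ in Cauchy--Schwarz, invoking $\xi/\sqrt{r e^{\nu+\lambda}}\in L^2(0,r_0)$ for a.e.\ $u$ (from \eqref{Theta(0) =0}) rather than a pointwise $\sup$ on $\xi/\sqrt{e^{\nu+\lambda}}$. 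Your argument would in fact go through if you made the same adjustment: in the $[\delta,\delta_*]$ piece keep $|\xi|\le\eta(\delta_*)\sqrt{e^{\nu+\lambda}}$ inside the integral and bound $\int_\delta^{\delta_*}e^{\nu+\lambda}(h-\hb)^2 s^{-1}\,ds\le(4\pi)^{-1}$, and for $T_1$ use the paper's weighted $\sqrt{\delta}$-split instead of your integration-by-parts identity.
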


\begin{proof}
In view of \eqref{evolution gamma} we have 
\bel{estimate Dgamma}
\aligned
|D\gamma| & = \Big|\int_\delta^{+ \infty} \Big(\frac{1}{2}g(h - \hb) - \xi- c^2r^2\hb \, e^{\nu+ \lambda} \Big)\frac{(h - \hb)}{r^2} \, dr\Big|
\\
& \leq  \frac{1}{2} \int_{\delta}^{+ \infty}e^{\nu+ \lambda} \frac{(h - \hb)^2}{r^2} \, dr +c^4\pi \, \int_\delta^{+ \infty} \hb^2(h - \hb)^2e^{\nu+ \lambda} \, dr
\\
&
\quad
+ c^2\int_\delta^{+ \infty}|\hb(h - \hb)|e^{\nu+ \lambda} \, dr + \int_{\delta}^{+ \infty}|\xi|\frac{|h - \hb|}{r^2} \, dr.
\endaligned
\ee
By a direct computation, we have
$$
\aligned
&
\delta\int_{\delta}^{+ \infty}e^{\nu+ \lambda} \frac{(h - \hb)^2}{r^2} \, dr =  \delta\int_{\delta}^{\sqrt{\delta}}e^{\nu+ \lambda} \frac{(h - \hb)^2}{r^2} \, dr + \delta\int_{\sqrt{\delta}}^{+ \infty}e^{\nu+ \lambda} \frac{(h - \hb)^2}{r^2} \, dr
\\
& \leq  \int_{\delta}^{\sqrt{\delta}}e^{\nu+ \lambda} \frac{(h - \hb)^2}{r} \, dr + \sqrt{\delta} \int_{\sqrt{\delta}}^{+ \infty}e^{\nu+ \lambda} \frac{(h - \hb)^2}{r} \, dr
  \leq \frac{1}{4\pi} \Big(e^{(\nu+ \lambda)(\sqrt{\delta})} -e^{(\nu+ \lambda)(\delta)} + \sqrt{\delta}(1-e^{(\nu+ \lambda)(\sqrt{\delta})})\Big)
\endaligned
$$
and, since the right-hand side of the previous inequality converges to $0$, as $\delta\to 0$, we obtain
$$
\delta\int_{\delta}^{+ \infty}e^{\nu+ \lambda} \frac{(h - \hb)^2}{r^2} \, dr \to  0\quad\text{as $\delta\to 0$}.
$$
Combining this result with the fact that $r \, \hb$ is bounded by a continuous function of $u$ (cf. \eqref{equa-1234}), we get
$$
\delta\int_{\delta}^{+ \infty}e^{\nu+ \lambda} \hb^2(h - \hb)^2 \, dr= \delta\int_{\delta}^{+ \infty} \big(r \, \hb\big)^2\Big(e^{\nu+ \lambda} \frac{(h - \hb)^2}{r^2} \Big) \, dr \to  0\quad\text{as $\delta\to 0$}.
$$
Next, by H\"{o}lder's inequality and (iii) in Section \ref{subsec-comp-conv} we deduce that 
$$
\delta\int_\delta^{+ \infty}|\hb(h - \hb)|e^{\nu+ \lambda} \, dr\leq \delta\Big(\int_{\delta}^{+ \infty}|\hb|^2 \, dr\Big)^{1/2} \Big(\int_{\delta}^{+ \infty}(h - \hb)^2 \, dr\Big)^{1/2} \to  0\quad\text{as $\delta\to0$,}
$$
and the last integral in \eqref{estimate Dgamma} can be rewritten as
$$
\int_{\delta}^{+ \infty}|\xi|\frac{|h - \hb|}{r^2} \, dr= \int_{\delta}^{\sqrt{\delta}}|\xi|\frac{|h - \hb|}{r^2} \, dr + \int_{\sqrt{\delta}}^{1}|\xi|\frac{|h - \hb|}{r^2} \, dr + \int_{1}^{+ \infty}|\xi|\frac{|h - \hb|}{r^2} \, dr.
$$
By H\"{o}lder's inequality, we also obtain
$$
\aligned
\delta\int_{\delta}^{\sqrt{\delta}}|\xi|\frac{|h - \hb|}{r^2} \, dr
 \leq  \int_{\delta}^{\sqrt{\delta}}|\xi|\frac{|h - \hb|}{r} \, dr
& \leq  \Big(\int_{\delta}^{\sqrt{\delta}} \frac{\xi^2}{re^{\nu+ \lambda}} \, dr\Big)^{1/2} \Big(\int_{\delta}^{\sqrt{\delta}}e^{\nu+ \lambda} \frac{(h - \hb)^2}{r} \, dr\Big)^{1/2} 
\\
& = \Big(\int_{\delta}^{\sqrt{\delta}} \frac{\xi^2}{re^{\nu+ \lambda}} \, dr\Big)^{1/2} \Big(\frac{1}{4\pi}(e^{(\nu+ \lambda)(\sqrt{\delta})} -e^{(\nu+ \lambda)(\delta)})\Big)^{1/2}.
\endaligned
$$
Since, at almost all $u$, $\frac{\xi}{\sqrt{re^{\nu+ \lambda}}} \in L^2(0,r_0)$ for all $r_0> 0$, then $\int_{\delta}^{\sqrt{\delta}} \frac{\xi^2}{re^{\nu+ \lambda}} \, dr\to 0$ at almost all $u$ as $\delta\to 0$. Hence, by the previous inequality
$$
\delta\int_{\delta}^{\sqrt{\delta}}|\xi|\frac{|h - \hb|}{r^2} \, dr \to 0\quad \text{as $\delta\to0$}.
$$
Similarly, we obtain
$$
\aligned
\delta\int_{\sqrt{\delta}}^{1}|\xi|\frac{|h - \hb|}{r^2} \, dr
 \leq  \sqrt{\delta} \int_0^1|\xi|\frac{|h - \hb|}{r} \, dr
& \leq  \sqrt{\delta} \Big(\int_{0}^{1} \frac{\xi^2}{re^{\nu+ \lambda}} \, dr\Big)^{1/2} \Big(\int_{0}^{1}e^{\nu+ \lambda} \frac{(h - \hb)^2}{r} \, dr\Big)^{1/2} 
\\
& \leq  \Big(\frac{\delta}{4\pi} \Big)^{1/2} \Big(\int_{0}^{1} \frac{\xi^2}{re^{\nu+ \lambda}} \, dr\Big)^{1/2} \to  0\quad\text{as $\delta\to0 $}.
\endaligned
$$
To complete the proof of the lemma, it only remains to check that at almost all $u$
\bel{6th Dgamma}
\delta\int_{1}^{+ \infty}|\xi|\frac{|h - \hb|}{r^2} \, dr \to  0\quad\text{as $\delta\to0$}.
\ee
In fact, by  H\"{o}lder's inequality
$$
\aligned
\int_{1}^{+ \infty}|\xi|\frac{|h - \hb|}{r^2} \, dr& \leq  \Big(\int_{1}^{+ \infty} \frac{\xi^2}{r^4} \, dr\Big)^{1/2} \Big(\int_{1}^{+ \infty}(h - \hb)^2 \, dr\Big)^{1/2} 
\\
& \leq  \sup_{r\geq1} \Big(\frac{|\xi|}{r} \Big)\Big(\int_{1}^{+ \infty} \frac{1}{r^2} \, dr\Big)^{1/2} \Big(\int_{1}^{+ \infty}(h - \hb)^2 \, dr\Big)^{1/2}.
\endaligned
$$
Observe that, thanks to \eqref{energy-int}, 
$
\int_{1}^{+ \infty}(h - \hb)^2 \, dr
$
is bounded by a continuous function of $u$. Furthermore, by Proposition \ref{key of Dgamma}, $\sup_{r\geq 1} \frac{|\xi|}{r}$ is finite for almost all $u$. Therefore, $\int_{1}^{+ \infty}|\xi|\frac{|h - \hb|}{r^2} \, dr$ is  finite for almost all $u$, hence \eqref{6th Dgamma} holds for almost all $u$, as desired.
\end{proof}


We have established that, for almost all $u$ and all $r> 0$, $f_\delta \to  0$ as $\delta\to0$. It then follows from \eqref{evolution gb} that
\bel{evolution gb delta=0 1}
\lim_{\delta\to0}D\Big(\frac{1}{r} \int_{\delta}^{r}g \, ds\Big) = -  \frac{1}{2r} \gb(g- \gb) -4\pi \gb D\gamma+ \frac{2\pi}{r} \xi^2.
\ee
Furthermore, for any smooth function $\varphi$ whose support is a compact set contained in the interior of $\Ocal$, we have
$$
- \iint_{\Ocal} \Big(\frac{1}{r} \int_{\delta}^rg \, ds\Big)D\varphi \, dudr= \iint_{\Ocal}D\Big(\frac{1}{r} \int_{\delta}^{r}g \, ds\Big)\varphi \, dudr.
$$
Since $\frac{1}{r} \int_{\delta}^{r}g \, ds \to  \gb$ for $\delta\to0$ uniformly in the support of $\varphi$, it follows that
\bel{evolution gb delta 2}
- \iint_{\Ocal} \gb D\varphi \, dudr= \lim_{\delta\to0} \iint_{\Ocal}D\Big(\frac{1}{r} \int_{\delta}^{r}g \, ds\Big)\varphi \, dudr.
\ee
On the other hand,
 from Proposition \ref{key of Dgamma} and the proof of Lemma \ref{lemma-regularity of rDgamma} 
 we deduce 
 that $\varphi f_\delta$ is dominated by an integrable function and converges to $0$ almost everywhere as $\delta\to0$. Therefore, by the dominated convergence theorem we have
\bel{evolution gb delta 3}
\lim_{\delta\to0} \iint_{\Ocal}D\Big(\frac{1}{r} \int_{\delta}^{r}g \, ds\Big)\varphi \, dudr= \iint_{\Ocal} \lim_{\delta\to0}D\Big(\frac{1}{r} \int_{\delta}^{r}g \, ds\Big)\varphi \, dudr.
\ee
Hence, we conclude from \eqref{evolution gb}, Lemma \ref{lemma-regularity of rDgamma} and \eqref{evolution gb delta=0 1} --\eqref{evolution gb delta 3} that $\gb$ is weakly differentiable in $\Ocal$ and
$$
D\gb= \lim_{\delta\to0}D\Big(\frac{1}{r} \int_{\delta}^{r}g \, ds\Big) = -  \frac{1}{2r} \gb(g- \gb) -4\pi \gb D\gamma+ \frac{2\pi}{r} \xi^2.
$$
This means that the Hawking mass $m$ is weakly differentiable in $\Ocal$ and 
$$
\aligned
Dm & =  - \frac{r}{2e^{\nu+ \lambda}} \Big(D\gb- \frac{\gb}{e^{\nu+ \lambda}}D\big(e^{\nu+ \lambda} \big)\Big) - \frac{\gb(g- \gb)}{4e^{\nu+ \lambda}}
- c^2\pi r^2\hb^2\gb
   =  - \frac{\pi}{r e^{\nu+ \lambda}} \xi^2- c^2\pi r^2\hb^2\gb.
\endaligned
$$
Finally, we also check similarly 
 that  $\int_0^r\Big(\frac{e^{\nu+ \lambda}}{\gb} \, ds\Big)$ is weakly differentiable in $\Ocal$ and 
$$
D\Big(\int_{0}^{r} \frac{e^{\nu+ \lambda}}{\gb} \, dr\Big) = - 2\pi \, \int_{0}^{r} \Big(\frac{e^{\nu+ \lambda}}{\gb^2} \frac{\xi^2}{r} \Big) \, dr - \frac{1}{2}e^{(\nu+ \lambda)(u,0)}.
$$
Integrating this equation along the characteristics $\chi (u; u_1, r_1)$, we obtain
$$
\aligned
& \int_0^{r_1} \Big(\frac{e^{\nu+ \lambda}}{\gb} \Big)(u_1,r) \, dr + 2\pi \, \iint_{\Ocal(u_1,r_1)} \Big(\frac{e^{\nu+ \lambda}}{\gb^2} \frac{\xi^2}{(r+ \eps)} \Big) \, drdu+ \frac{1}{2} \int_0^{u_1}e^{(\nu+ \lambda)(u,0)} \, du
  = \int_{0}^{r_{0}} \Big(\frac{e^{\nu+ \lambda}}{\gb} \Big)(0,r) \, dr,
\endaligned
$$
with $
\Ocal(u_1,r_1) = \big\{ (u,r) \, / \, 0<u<u_1,\,0<r<\chi (u; u_1,r_1)\big\}.
$
Consequently, the generalized solution exists globally and this concludes the proof of Theorem \ref{thrgensol}.


\small

\paragraph*{Acknowledgments.}

This work was done in part when PLF was a visiting research fellow at the Courant Institute for Mathematical Sciences, New York University, and a visiting professor at the School of Mathematical Sciences, Fudan University, Shanghai. 
The second author was supported by FCT sabbatical grant SFRH/BSAB/130242/2017, CMAT, Univ. Minho, through FCT/Portugal projects UIDB/00013/2020 and FEDER Funds COMPETE as well as CAMGSD, IST-ID, through projects UIDB/04459/2020 and UIDP/04459/2020. 
The third author (TCN) was a postdoc fellow of the Fondation des Sciences Math\'ematiques de Paris (FSMP). 



\appendix  

\section{Continuity and compactness of the solution mapping}
\label{appendix-one} 

\paragraph{Continuity of the mapping $\Psi_b$.} 

Assume that $(t_i, h_{i}) \to  (t,h)$ in $[0,1]\times X_b$ (that is, in the $C^1$ norm). Let $k$ and $k_i$ be solutions of the transport equation \eqref{main equation} associated with $(t, h)$ and $(t_i, h_i)$, respectively. Let us prove that $k_i$ converges to $k$ in the $C^1$ norm.
Since the set of all smooth functions in $X_b$ is dense, we can pick up a sequence of smooth functions $\{z_i\} \subset X_b$ such that $\|h_i-z_i\|_{C^1} \to  0$. We then define $w_i$ to be the unique solution of \eqref{main equation} associated with $(t_i,\,z_i)$. Obviously, we have 
\bel{triangle}
\|k_i-k\|_{C^1} \leq \|k_i-w_i\|_{C^1} + \|w_i-k\|_{C^1}
\ee
and we analyze first the term $\|w_i-k\|_{C^1}$. By definition, we have 
\bel{main equation i}
\aligned
D_{a,\,t_i,z_i}w_i- \frac{t_i}{2(r+ \eps)}(g_{a,\,t_i,z_i} - \gb_{a,\,t_i,z_i})w_i
 & =  - \frac{1}{2(r+ \eps)}(g_{a,\,t_i,z_i} - \gb_{a,\,t_i,z_i})\wb_i
     - \frac{c^2(r+ \eps)}{2}e^{\nu_{z_i} + \lambda_{z_i}} \wb_i,
\\ 
 D_{a,t,h}k- \frac{t}{2(r+ \eps)}(g_{a,t,h} - \gb_{a,t,h})k 
& =  - \frac{1}{2(r+ \eps)}(g_{a,t,h} - \gb_{a,t,h}) \, \hb 
          - \frac{c^2(r+ \eps)}{2}e^{\nu_{h} + \lambda_{h}} \hb
\endaligned
\ee
and, therefore, 
\bel{subtract C0}
D_{a,t,h}(w_i-k) - \frac{t}{2r}(g_{a,t,h} - \gb_{a,t,h})(w_i-k) =P(t, h,\,t_i,\,z_i),
\ee
with
\bel{def P(.,.,.,.)}
\aligned
P(t, h,\,t_i,\,z_i)&:= \frac{1}{2(r+ \eps)}(g_{a,t,h} - \gb_{a,t,h}) \, \hb- \frac{1}{2(r+ \eps)}(g_{a,\,t_i,z_i} - \gb_{a,\,t_i,z_i})\wb_i
\\
& \quad + \frac{c^2(r+ \eps)}{2}e^{\nu_{h} + \lambda_{h}} \hb- \frac{c^2(r+ \eps)}{2}e^{\nu_{z_i} + \lambda_{z_i}} \wb_i
+ \frac{1}{2} \big(\rho(\gb_{a,\,t_i,z_i}) - \rho(\gb_{a,t,h}) \big)\del_rw_i
\\
& \quad + \frac{1}{2(r+ \eps)} \big(t_i(g_{a,\,t_i,z_i} - \gb_{a,\,t_i,z_i}) -t(g_{a,t,h} - \gb_{a,t,h})\big)w_i.
\endaligned
\ee
Since $\|h-z_i\|_{C^1}$ and $|t_i-t|$ tend to $0$, it follows from the continuity of $\Psi_b$ and the $C^2$ bound on $k$ that
\bel{right hand C0}
P(t, h,\,t_i,\,z_i)  \to  0 \qquad \mbox{in $C^0$}.
\ee
On the other hand, since $w_i(0,r) =k(0,r) =h_0(r)$, along the characteristics $\chi_{h,u_1} =\chi_{a,t,h}(\cdot; u_1, r_1)$ we can write the transport equation \eqref{subtract C0} in the form 
\bel{eq C0}
(w_i-k)(u_1,r(u_1)) = \int_0^{u_1}[P(t, h,\,t_i,\,z_i)]_{\chi_{h,u_1}}e^{\int_{u_2}^{u_1} 
 \frac{t}{2(r_3+ \eps)}(g_{a,t,h} - \gb_{a,t,h}) \big|_{\chi_{h,u_1}} \, du_3} \, du_2.
\ee
Hence, from \eqref{right hand C0} and \eqref{eq C0} we deduce that 
\bel{converge C0}
\mbox{$w_i \to  k$ in $C^0$.}
\ee
Differentiating now the equations \eqref{main equation i} with respect to $r$, we obtain
\begin{equation*}
\label{second der}
\aligned
& D_{a,t,h}(\del_rk) - \frac{1}{2(r+ \eps)} \Big( (g_{a,t,h} - \gb_{a,t,h})\rho^\prime(\gb_{a,t,h})+t(g_{a,t,h} - \gb_{a,t,h}) \Big)\del_rk=  f_1(h,k,t)  
\\
& D_{a,\,t_i,z_i}(\del_rw_i) - \frac{1}{2(r+ \eps)} \Big( (g_{a,\,t_i,z_i} - \gb_{a,\,t_i,z_i})\rho^\prime(\gb_{a,\,t_i,z_i})+t_i(g_{a,\,t_i,z_i} - \gb_{a,\,t_i,z_i}) \Big)\del_rw_i= f_1(z_i,w_i,t_i), 
\endaligned
\end{equation*}
where $f_1$ is given as in \eqref{def of f_1}. Comparing the previous two equations we find
\bel{subtract C1}
\aligned
D_{a,t,h}(\del_rw_i - \del_rk) -& \frac{1}{2(r+ \eps)} \Big(
(g_{a,t,h} - \gb_{a,t,h})\rho^\prime(\gb_{a,t,h})
+ t(g_{a,t,h} - \gb_{a,t,h})\Big)
\, (\del_rw_i- \del_rk) = \widetilde{P}_i(t, h,\,t_i,\,z_i),
\endaligned
\ee
with
 \begin{equation*}
 \aligned
 \label{def wP(.,.,.,.)}
\widetilde{P}(t, h,\,t_i,\,z_i)&:= f_1(h,k,t) -f_1(z_i,w_i,t_i)+ \frac{1}{2}(\rho(\gb_{a,\,t_i,z_i}) - \rho(\gb_{a,t,h}))\del_{rr}w_i
\\
& \quad + \frac{1}{2(r+ \eps)}  \Big( (g_{a,\,t_i,z_i} - \gb_{a,\,t_i,z_i})\rho^\prime(\gb_{a,\,t_i,z_i}) -(g_{a,t,h} - \gb_{a,t,h})\rho^\prime(\gb_{a,t,h}) \Big)\del_{r}w_i 
\\
& \quad + \frac{1}{2(r+ \eps)}  \Big( t_i(g_{a,\,t_i,z_i} - \gb_{a,\,t_i,z_i}) -t(g_{a,t,h} - \gb_{a,t,h}) \Big)\del_{r}w_i.
\endaligned
\end{equation*}
Since $\del_rw_i(0,r) = \del_rk(0,r) = \del_r h_0$, then solving \eqref{subtract C1} along the characteristic $\chi_{h,u_1}$ yields us 
\be
\aligned
\label{formula integration C1}
& \del_r (w_i-k)(u_1,r(u_1))
  = \int_0^{u_1} \widetilde{P}(t, h,\,t_i,\,z_i) \big|_{\chi_{h,u_1}} 
  e^{\int_{u_2}^{u_1}
   \frac{1}{2(r_3+ \eps)} \big((g_{a,t,h} - \gb_{a,t,h})\rho^\prime(\gb_{a,t,h})+t(g_{a,t,h} - \gb_{a,t,h})\big) \big|_{\chi_{h,u_1}} \, du_3} \, du_2.
\endaligned
\end{equation}
From the basic continuity properties in Section \ref{Lipschitz}, we see that $\|\del_{rr}w_i\|_{C^0}$ and $\|\del_{r}w_i\|_{C^0}$ are uniformly bounded. We deduce from \eqref{converge C0} and \eqref{formula integration C1} that $\widetilde{P}(t,h,t_i,z_i) \to  0$, hence, by \eqref{formula integration C1}, $\del_rw_i$ converges to $\del_rk$ in $C^{0}$. Combining this result with \eqref{converge C0}, we find that 
$\|w_i-k\|_{C^1} \to  0$ and, similarly, $\|w_i-k_i\|_{C^1} \to  0$. Therefore, returning to \eqref{triangle} we conclude that $k_i$ converges to $k$ in the $C^1$ norm, as claimed.


\paragraph{Compactness of the mapping $\Psi_b$.}

Given any bounded sequence $(t_i, h_i)$ in $[0,1]\times X_b$, let us denote by $k_i$ the unique solution of the transport equation \eqref{main equation} associated with $(t_i, h_i)$. We need to prove that there exists a subsequence (still denoted by $\{k_i\}$) such that $\{k_i\}$ converges in the space $X_b$.
Without loss of generality we assume that $h_i$ is smooth. Since $\{h_i\}$ is bounded in $C^1$, a  standard compact embedding theorem tells us that there exists a subsequence (still denoted by $\{h_i\}$) such that $\{h_i\}$ converges in the $C^0$ norm. In particular, given any $\alpha> 0$  we can guarantee that 
\bel{Cauchy C0}
|h_i- h_{j}| \leq \alpha \quad \text{ for all sufficiently large } i,j. 
\ee   
After passing to a subsequence, we may assume that $|t_i-t_j| \leq \alpha$. 
An analysis similar to the one used to prove the continuity of $\Psi_b$ shows that, along the characteristics $\chi_{h_j,u_1}=\chi_{a,t_j,h_j}(\cdot; u_1,r_1)$,
$$  
\aligned
 (k_i-k_j)(u_1,r(u_1))
& = \int_0^{u_1}
 P(t_i, h_i,\,t_j, h_j)\big|_{\chi_{h_j,u_1}}e^{\int_{u_2}^{u_1}  
  \frac{t_j}{2(r_3+ \eps)}(g_{a,\,t_j,h_j} - \rho(\gb_{a,\,t_j,h_j})) \big|_{\chi_{h_j,u_1}} \, du_3} \, du_2, 
\\ 
 \del_r(k_i-k_j)(u_1,r(u_1))
& = \int_0^{u_1} \hspace{-0.2cm}
\widetilde{P}(t_i, h_i,\,t_j, h_j) \big|_{\chi_{h_j,u_1}}
e^{\int_{u_2}^{u_1} 
\frac{1}{2(r_3+ \eps)} \big((g_{a,\,t_j,h_j} - \gb_{a,\,t_j,h_j})\rho^\prime(\gb_{a,\,t_j,h_j})+t_j(g_{a,\,t_j,h_j} - \gb_{a,\,t_j,h_j})\big) \big|_{\chi_{h_j,u_1}} du_3} 
 \, du_2, 
\endaligned
$$
where $P$ and $\widetilde{P}$ are defined as before. 
Thanks to the continuity of $\Psi_b$ and the $C^2$ bound available on $k$, this implies 
that there exists a constant $a_7> 0$ depending on $\|h_i\|_{C^1}$ and $(h_0,a,b,\eps)$ such that 
$\|k_i-k_j\|_{C^{1}} \leq a_7\alpha$. 
We conclude that $\{k_i\}$ converges in $C^1$, and this completes the proof of Proposition~\ref{lemma Psi cc}.


\section{Derivation of energy estimates} 
\label{Appendix-third} 

We give here a proof of Proposition~\ref{lemma-energy-difference}. 
We start from the identity
\begin{equation*}
\del_r (D \hb_{\eps,b}) + \frac{1}{r} D \hb_{\eps,b} = \frac{1}{2r^2} e^{\nu_{\eps,b} - \lambda_{\eps,b}}(h_{\eps,b} - \hb_{\eps,b}) - \frac{c^2}{2} e^{\nu_{\eps,b} + \lambda_{\eps,b}} \hb_{\eps,b}, 
\end{equation*}
which by integration gives us 
\bel{Dhbar}
D_\eps \hb_{\eps,b} =  \frac{1}{2(r+ \eps)} \int_0^r \left(\gb_{\eps,b} \frac{h_{\eps,b} - \hb_{\eps,b}}{r+ \eps} - c^2(r+ \eps) \, \hb_{\eps,b} e^{\nu_{\eps,b} + \lambda_{\eps,b}}  \right ) dr.
\ee
Thanks to Lemma \ref{Fzeta epsilon=0}, on the spacetime domain $[0,c_1]\times [0,b]$ we have 
$$
\del_r\gb_{\eps,b} +(r+ \eps)\del_r\big(\gb_{\eps,b} \del_r\nu_{\eps,b} \big) -4\pi \, \del_{u} \big((h_{\eps,b} - \hb_{\eps,b})^2\big) -8\pi(r+ \eps)\big(\del_r \, \hb_{\eps,b}D_\eps \hb_{\eps,b} - \frac{c^2}{2}e^{\nu_{\eps,b} + \lambda_{\eps,b}} \hb_{\eps,b}^2\big) =0.
$$
Integrating this identity and using \eqref{Dhbar} we obtain
\be
\aligned
\label{formula energy hb}
 4 \pi \, \del_u\bigg(\int_0^{b}(h_{\eps,b} - \hb_{\eps,b})^2 \, dr\bigg)
& = \int_0^{b} \bigg(\del_r\gb_{\eps,b} +(r+ \eps)\del_r\big(\gb_{\eps,b} \del_r\nu_{\eps,b} \big)\bigg) dr
 -8\pi \, \int_0^b(r+ \eps)\big(\del_r  \hb_{\eps,b}D_\eps \hb_{\eps,b} - \frac{c^2}{2}e^{\nu_{\eps,b} + \lambda_{\eps,b}} \hb_{\eps,b}^2\big) dr 
\\
& \leq  1+ \bigg((r+ \eps)\gb_{\eps,b} \del_r\nu_{\eps,b} - \int_0^{r} \gb_{\eps,b} \del_r\nu_{\eps,b} \, ds\bigg)(b)+4\pi c^2\int_0^{b}(r+ \eps) \, \hb_{\eps,b}^2e^{\nu_{\eps,b} + \lambda_{\eps,b}} \, dr 
\\
& \quad -4\pi \, \int_0^{b} \del_r \, \hb_{\eps,b} \bigg(\int_0^{r} \gb_{\eps,b} \del_r \, \hb_{\eps,b} \, ds- c^2\int_0^{r}(s+ \eps) \, \hb_{\eps,b}e^{\nu_{\eps,b} + \lambda_{\eps,b}} \, ds\bigg) dr.
\endaligned
\ee
We now need to estimate each of the terms above. Observe that by the first equation in \eqref{formaula derivative of nu}
$$
\aligned
(b+ \eps)\big(\gb_{\eps,b} \del_r\nu_{\eps,b} \big)(b) & =  \bigg(\frac{g_{\eps,b} - \gb_{\eps,b}}{2} + 2\pi \, \gb_{\eps,b}(h_{\eps,b} - \hb_{\eps,b})^2\bigg)(b)
\\
\int_0^b\gb_{\eps,b} \del_r\nu_{\eps,b} \, ds& =  \int_0^b\bigg(\frac{1}{2} \del_r\gb_{\eps,b} + 2\pi \, \gb_{\eps,b} \frac{(h_{\eps,b} - \hb_{\eps,b})^2}{s+ \eps} \bigg) \, ds\geq 0,
\endaligned
$$
hence 
\bel{estimate nu}
\aligned
\bigg((r+ \eps)\gb_{\eps,b} \del_r\nu_{\eps,b} - \int_0^{r} \gb_{\eps,b} \del_r\nu_{\eps,b} \, ds\bigg)(b)& \leq  \bigg(\frac{g_{\eps,b} - \gb_{\eps,b}}{2} + 2\pi \, \gb_{\eps,b}(h_{\eps,b} - \hb_{\eps,b})^2\bigg)(b). 
\endaligned
\ee
Next, we will use \eqref{equa----415} for estimating $\int_0^{+ \infty}(r+ \eps) \, \hb_{\eps,b}^2e^{\nu_{\eps,b} + \lambda_{\eps,b}} \, dr$. Indeed, we note first that if there exists $(u_\eps,r_\eps) \in [0,c_1]\times [0,3]$ such that
$$
(r_\eps+ \eps) \big(|\hb_{\eps,b}|e^{\frac{\nu_{\eps,b} + \lambda_{\eps,b}}{2}} \big)(u_\eps,r_\eps)\geq \frac{2}{c} \sqrt{\frac{m_{\eps,b}(u,b)}{2\pi}},
$$
thanks to \eqref{equa----415}, we may take $r_\eps^\prime\in (r_\eps,r_\eps+1]$ such that
$$
(r_\eps^\prime+ \eps) \big(|\hb_{\eps,b}|e^{\frac{\nu_{\eps,b} + \lambda_{\eps,b}}{2}} \big)(u_\eps,r_\eps^\prime)\leq \frac{1}{c} \sqrt{\frac{m_{\eps,b}(u,b)}{2\pi}}.
$$
Consequently, we have 
$
|\hb_{\eps,b}|(u_\eps,r_\eps)\geq 2|\hb_{\eps,b}|(u_\eps,r_\eps^\prime),
$
and thus
\bel{control rhb 1}
|\hb_{\eps,b}(u_\eps,r_\eps) - \hb_{\eps,b}(u_\eps,r_\eps^\prime)|\geq \frac{1}{2}|\hb_{\eps,b}(u_\eps,r_\eps)|.
\ee
Now, assume that there exists $r< 1- \eps$ satisfying
\bel{treat with r<1}
\sqrt{r+ \eps} \big(\hb_{\eps,b} e^{\frac{\nu_{\eps,b} + \lambda_{\eps,b}}{2}} \big)(u,r)\geq \frac{4}{c^2} \sqrt{\frac{m_{\eps,b}(u,b)}{2\pi}}.
\ee
If 
$
|\hb_{\eps,b}(u,r) - \hb_{\eps,b}(u,1- \eps)| \leq \frac{1}{2}|\hb_{\eps,b}(u,r)|,
$
we obtain
$$
\big(|\hb_{\eps,b}|e^{\frac{\nu_{\eps,b} + \lambda_{\eps,b}}{2}} \big)(u,1- \eps)\geq \frac{\sqrt{r+ \eps}}{2} \big(|\hb_{\eps,b}|e^{\frac{\nu_{\eps,b} + \lambda_{\eps,b}}{2}} \big)(u,r)\geq \frac{2}{c^2} \sqrt{\frac{m_{\eps,b}(u,b)}{2\pi}},
$$
hence, by \eqref{control rhb 1}, there exists $r^\prime\in (1- \eps,2- \eps]$ satisfying 
$$
|\hb_{\eps,b}(u,1- \eps) - \hb_{\eps,b}(u,r^\prime)|\geq \frac{1}{2}|\hb_{\eps,b}(u,1- \eps)|.
$$
This tells us that if such $r$ exists (i.e. $r<1- \eps$ and satisfies \eqref{treat with r<1}),  then there is $r_*\in (r,2- \eps]$ such that
\bel{for r<1}
|\hb_{\eps,b}(u,r) - \hb_{\eps,b}(u,r_*)|\geq \frac{1}{4}|\hb_{\eps,b}(u,r)|.
\ee
Combined with the fact that (cf. also \eqref{estimate nu+lambda})
$$
4\pi \int_{r}^{b}(s+ \eps)(\del_r \, \hb_{\eps,b})^2ds\geq 4\sqrt{\pi} \bigg|\int_{r}^{r_*} \del_r \, \hb_{\eps,b} \, ds\bigg|- \int_{r}^{r_*} \frac{1}{s+ \eps} \, ds+4\pi \int_{r_*}^{b}(s+ \eps)(\del_r \, \hb_{\eps,b})^2ds,
$$
it then follows that 
\bel{estimate hb e}
(r+ \eps) \big(|\hb_{\eps,b}|^2e^{\nu_{\eps,b} + \lambda_{\eps,b}} \big)(u,r)
\leq 2 \, |\hb_{\eps,b}(u,r)|^2e^{- \sqrt{\pi} \, |\hb_{\eps,b}(u,r) |}
\leq \frac{8}{e^2\pi}.
\ee
Therefore, from \eqref{equa----415} and \eqref{estimate hb e} we find 
\bel{rhb2 e}
\aligned
c^2\int_0^{b}(r+ \eps) \, \hb_{\eps,b}^2e^{\nu_{\eps,b} + \lambda_{\eps,b}} \, dr
& =  c^2\int_0^{1}(r+ \eps) \, \hb_{\eps,b}^2e^{\nu_{\eps,b} + \lambda_{\eps,b}} \, dr +c^2\int_1^{b}(r+ \eps) \, \hb_{\eps,b}^2e^{\nu_{\eps,b} + \lambda_{\eps,b}} \, dr,
\\
& \leq  8\max\Big( \frac{c^2}{e^2\pi},\,\frac{m_{\eps,b}(u,b)}{\pi} \Big) + \frac{m_{\eps,b}(u,b)}{2\pi}.
\endaligned
\ee
We now consider the last term of \eqref{formula energy hb}. Setting
$$
J_\eps:= - 4\pi \, \int_0^{b} \del_r \, \hb_{\eps,b} \bigg(\int_0^{r} \gb_{\eps,b} \del_r \, \hb_{\eps,b} \, ds+c^2\int_0^{r}(s+ \eps) \, \hb_{\eps,b}e^{\nu_{\eps,b} + \lambda_{\eps,b}} \, ds\bigg) \, dr,
$$
we have
$$
\aligned 
J_\eps&= -4\pi \, \hb_{\eps,b}(b) \int_0^b\gb_{\eps,b} \frac{h_{\eps,b} - \hb_{\eps,b}}{s+ \eps} \, ds+4\pi c^2\hb_{\eps,b}(b) \int_0^b(s+ \eps) \, \hb_{\eps,b}e^{\nu_{\eps,b} + \lambda_{\eps,b}} \, ds
\\
 & \quad + 2\pi \, \int_0^{b} \del_r(\hb_{\eps,b}^2)\gb_{\eps,b}  dr +4\pi c^2\int_0^{b}(r+ \eps) \, \hb_{\eps,b}^2e^{\nu_{\eps,b} + \lambda_{\eps,b}}  dr
 \\
 &\leq   -2\pi \, \bigg(\hb_{\eps,b}^2(b)\gb_{\eps,b}(b) -2\hb_{\eps,b}(b) \int_0^b\hb_{\eps,b} \frac{g_{\eps,b} - \gb_{\eps,b}}{s+ \eps}  ds\bigg)
 +4\pi c^2 |\hb_{\eps,b}(b)|\sqrt{b} \bigg(\int_0^b(s+ \eps)^2\hb_{\eps,b}^2e^{2(\nu_{\eps,b} + \lambda_{\eps,b})} \, ds\bigg)^{1/2} 
\\
& \quad -2\pi \, \int_0^{b} \hb_{\eps,b}^2\del_r\gb_{\eps,b}  dr +4\pi c^2\int_0^{b}(r+ \eps) \, \hb_{\eps,b}^2e^{\nu_{\eps,b} + \lambda_{\eps,b}} \, dr, 
\endaligned
$$
hence 
\be
\aligned
\label{estimate Dhb}
J_\eps
& \leq  2m_{\eps,b}(u,b)+32\max\big\{c^2e^{-2},m_{\eps,b}(u,b)\big\} + 33\pi c^2\bigg(\max\big\{ 4\pi A^4(u,0),\,e^{-2\sqrt{2} \pi^{1/4}} \big\} \bigg)
\\
& \quad +  4\pi \bigg(\frac{A^2(u,0)}{b} \int_0^b\frac{g_{\eps,b} - \gb_{\eps,b}}{(r+ \eps)^2} \, dr + \frac{|c|A(u,0)}{\sqrt{b}} \sqrt{\frac{m_{\eps,b}(u,b)}{2\pi}} \bigg), 
\endaligned
\end{equation}
where we the first inequality follows from H\"{o}lder's inequality and the second inequality follows from Lemma \ref{lemma-rhb}, \eqref{estimate (g-gb)hb2}, \eqref{equa----415} and \eqref{rhb2 e}.

Now, taking \eqref{estimate nu}, \eqref{rhb2 e} and \eqref{estimate Dhb} into \eqref{formula energy hb}, we obtain
$$
\aligned 
&\del_u\bigg(\int_0^{b}(h_{\eps,b} - \hb_{\eps,b})^2 \, dr\bigg)
  \leq  \frac{1}{4\pi} + \frac{17}{\pi} \max\Big(
  c^2e^{-2},\,m_{\eps,b}(u,b)\Big) 
   + \frac{33 c^2}{4} \bigg(\max\big\{ 4\pi A^4(u,0),\,e^{-2\sqrt{2} \pi^{1/4}} \big\} \bigg)
\\
&  + \frac{1}{4\pi} \bigg(\frac{g_{\eps,b} - \gb_{\eps,b}}{2} + 2\pi \, \gb_{\eps,b}(h_{\eps,b} - \hb_{\eps,b})^2\bigg)(u,b)
 + \bigg(\frac{A^2(u,0)}{b} \int_0^b\frac{g_{\eps,b} - \gb_{\eps,b}}{(r+ \eps)^2} \, dr + \frac{|c|A(u,0)}{\sqrt{b}} \sqrt{\frac{m_{\eps,b}(u,b)}{2\pi}} \bigg),
\endaligned
$$
hence
$$
\aligned
\int_0^{b}(h_{\eps,b} - \hb_{\eps,b})^2 \, dr& \leq  \int_0^{b}(h_0- \hb_0)^2 \, dr + \frac{u}{4\pi}
+ \frac{17}{\pi} \int_0^u\bigg(\max\big\{c^2e^{-2},\,m_{\eps,b}(u_1,b)\big\} \bigg) \, du_1
\\
& \quad + \frac{33 c^2}{4}  \int_0^u\bigg(\max\big\{ 4\pi A(u_1,0)^4,\,e^{-2\sqrt{2} \pi^{1/4}} \big\} \bigg) \, du_1
 + \frac{1}{4\pi} \int_0^u\bigg(\frac{g_{\eps,b} - \gb_{\eps,b}}{2} + 2\pi \, \gb_{\eps,b}(h_{\eps,b} - \hb_{\eps,b})^2\bigg)(u_1,b) \, du_1
\\
& \quad + \int_0^u\bigg(\frac{A^2(u_1,0)}{b} \int_0^b\frac{g_{\eps,b} - \gb_{\eps,b}}{(r+ \eps)^2} \, dr + \frac{A(u_1,0)}{\sqrt{b}} \sqrt{\frac{m_{\eps,b}(u_1,b)}{\pi}} \bigg) \, du_1.
\endaligned
$$
Therefore, by Lemma \ref{lemma-rhb} we have  
\be
\label{(h- hb)2}
\aligned
 \int_0^{+ \infty}(h_{\eps,b} - \hb_{\eps,b})^2 \, dr 
& \leq   	\int_0^{b}(h_{\eps,b} - \hb_{\eps,b})^2 \, dr +	\int_b^{b+ 2}(h_{\eps,b} - \hb_{\eps,b})^2 \, dr
 +	(b+ 2+ \eps)^2|\hb_{\eps,b}|^2(u,b+ 2+ \eps) \int_{b+ 2}^{+ \infty} \frac{1}{(r+ \eps)^2} \, dr
\\
& \leq \int_0^{b}(h_0- \hb_0)^2 \, dr + \frac{u}{4\pi}
+ \frac{17}{\pi} \int_0^u\bigg(\max\big\{c^2e^{-2},\,m_{\eps,b}(u_1,b)\big\} \bigg) \, du_1
\\
& \quad  + \frac{33 c^2}{4} \int_0^u\bigg(\max\Big( 
 4\pi A(u_1,0)^4,\,e^{-2\sqrt{2} \pi^{1/4}} \Big)
  \bigg) \, du_1+B_2(u,b,c_1) .
\endaligned
\ee
This completes the proof of Proposition~\ref{lemma-energy-difference}. 


\section{Proof of technical lemmas} 
\label{section-technical-lemmas}

\begin{proof}[Proof of Lemma~\ref{lemma-rhb small at 0}] 
By the H\"{o}lder inequality we have 
$
|\hb_\eps|\leq  \frac{1}{r} \int_0^r| h_\eps (u,s)| \, ds
\leq  \frac{1}{\sqrt r} \left(\int_0^r h^2_\eps(u,s) \, ds\right)^{1/2},
$
hence
$$
\sup_{r\ge \delta} |\hb_\eps (u,r)| \leq \frac{1}{\sqrt \delta} \left(\int_0^{+ \infty} h^2_\eps(u,s) \, ds\right)^{1/2},
$$
and the result then follows from the inequality (3) in \eqref{energy-int}.
\end{proof}

\begin{proof}[Proof of Lemma~\ref{uniform bound w.r.t. u}]
From the proof of Theorem~\ref{lemma-global epsilon existence} it follows that 
$$
\lim_{r\to + \infty} \big(\sup_{u\in[0,u_0]}|h_\eps|\big) = \lim_{r\to + \infty} \big(\sup_{u\in[0,u_0]}|\del_r h_\eps|\big) =0,
$$
so it only remains to check that
\be
\lim_{r\to + \infty} \big(\sup_{u\in[0,u_0]}|(r+ \eps) \, \hb_\eps (u,r) |\big) =0.
\ee
We argue by contradiction. Suppose that there exists a sequence $(u_i,r_i)$ in $[0,u_0]\times [0,+ \infty)$ converging to $(u_\infty,+ \infty)$ such that
$$
\alpha:= \lim\big(|(r_i+ \eps) \, \hb_\eps|(u_i,r_i)\big) > 0.
$$
For any $\zeta> 0$, let $u_\zeta\in [0,u_0]$ be such that $|u_\zeta-u_\infty| \leq \zeta$. Observe that the inequalities (3) and (5) in \eqref{energy-int}
tell us that for, each $u$,  $(r+ \eps)h_\eps(u_\zeta,r) \in W^{1,2}$, hence by the Sobolev embedding theorem
\bel{decay rhb}
\int_{0}^{+ \infty}h_\eps(u_\zeta,r) \, dr=0.
\ee
Next, similarly to \eqref{control rhb in C0}, we have
$$
(r_i+ \eps)|\hb_\eps|(u_i,r_i)\leq  \Big\|\int_0^{r_i}h_\eps(u_\zeta,r) \, dr\Big\|_{C^0([r_i,+ \infty))} 
+ \sqrt{\frac{\zeta M_\eps(0)}{2\pi}},
$$
which, in combination with \eqref{decay rhb}, gives us 
$$
\alpha= \lim\big(|(r_i+ \eps) \, \hb_\eps|(u_i,r_i)\big)\leq \sqrt{\frac{\zeta M_\eps(0)}{2\pi}},
$$
which is a contradiction if $\zeta\leq \frac{\alpha^2\pi}{M_\eps(0)}$.
\end{proof}


\begin{proof}[Proof of Lemma~\ref{lemma equi-continuity of delr h}]
In view of the inequality (2) in \eqref{equa-1234}, 
\eqref{g-gb}, \eqref{control h in C0} and \eqref{control del_rg}, we obtain from \eqref{evolution of del_rh} on $\Ocal_\delta$ that
\bel{D(delr h)}
\big|D_\eps(\del_r h_\eps)\big| \leq K_2(u,\delta),
\ee
where
$$
\aligned
K_2(u,\delta)
& := 
\frac{2M_\eps(0)}{\delta}B_8(u,\delta)
+ \frac{c^2}{2}B_5(u,\delta)
  +
 \left(B_5+ \frac{\Acal_\eps}{\delta} \right)\left(\frac{2\pi c^2 \Acal_\eps}{\delta} \left(B_5+ \frac{\Acal_\eps}{\delta} \right)+ \frac{B_7}{2\delta} + \frac{3M_\eps(0)}{\delta^3} \right)(u,\delta).
\endaligned
$$
Thanks to \eqref{D(delr h)}, we only need to prove that $\{\del_r h_\eps\}$ is equicontinuous in $\Ocal_\delta$  with respect to $r$. In fact, uniform continuity with respect to $u$ would then follow by the following arguments. Given $u_1,r_1$ and $\Delta u_1> 0$, let $r_1+ \Delta r_{1,\eps}$ be the value of $r$ at which the characteristic $\chi_\eps(u; u_1, r_1)$ intersects the line $u=u_1- \Delta u_1$. Observe that $\Delta r_{1,\eps} \leq 1/2\Delta u_1$ and then
$$
\aligned
\big|\del_r h_\eps(u_1,r_1) - \del_r h_\eps(u_1- \Delta u_1,r_1)\big|
& \leq  \int_{u_1- \Delta u_1}^{u_1}|D_\eps(\del_r h_\eps)|du+ \big|\del_r h_\eps(u_1- \Delta u_1,r_1+ \Delta r_{1,\eps}) - \del_r h_\eps(u_1- \Delta u_1,r_1)\big|
\\
& \leq  K_2(u_0,\delta)\Delta u_1 + \eta\Delta r_{1,\eps}
 \leq   K_2(u_0,\delta)\Delta u_1 + \frac{1}{2} \eta \Delta u_1,
\endaligned
$$ 
where $\eta$ denotes the common modulus of continuity of the family $\{\del_r h_\eps\}$ with respect to $r$. 

Let now $\delta\leq r_1\leq r_1^\prime$. Given $u_1\leq u_0$, we see that the characteristics $\chi_\eps(u; u_1,r_1)$ and $\chi_\eps(u; u_1,r_1')$ are contained in $\Ocal_\delta$ for $0\leq u\leq u_1$. Define
$$
\psi_\eps(u) := \del_r h_\eps(u,\chi_\eps(u; u_1, r_1)) - \del_r h_\eps(u,\chi_\eps(u; u_1, r_1')).
$$
It then follows from \eqref{derivative eq} that
\bel{del_u psi}
\aligned
\del_u\psi_\eps
& = \big(\del_r\gb_\eps \big)_{(u,\chi_\eps(u; u_1,r_1))} \psi_\eps(u)
+ \Big(\big(\del_r\gb_\eps \big)_{(u,\chi_\eps(u; u_1,r_1))} - \big(\del_r\gb_\eps \big)_{(u,\chi_\eps(u; u_1, r_1^\prime))} \Big)\big(\del_r h_\eps \big)_{(u,\chi_\eps(u; u_1,r_1^\prime))} 
\\
& \quad 
+ f_\eps(u,\chi_\eps(u; u_1, r_1)) -f_\eps(u,\chi_\eps(u; u_1, r_1')),
\endaligned
\ee
where
$$
f_\eps:= \frac{h_\eps- \hb_\eps}{2(r+ \eps)} \del_rg_\eps- \frac{3}{2(r+ \eps)^2}(g_\eps- \gb_\eps)(h_\eps- \hb_\eps) - \frac{c^2}{2}e^{\nu_\eps + \lambda_\eps} \big(h_\eps+4\pi \hb_\eps(h_\eps- \hb_\eps)^2\big).
$$
By straightforward calculations, we find  
$$
\aligned
\del_{rr} \gb_\eps& = \frac{\del_rg_\eps}{r+ \eps} -2\frac{g_\eps- \gb_\eps}{(r+ \eps)^2},
\\
\del_{rr}g_\eps& =  4 \pi \, \del_rg_\eps \frac{(h_\eps- \hb_\eps)^2}{r+ \eps} -12\pi \frac{(h_\eps- \hb_\eps)^2}{(r+ \eps)^2}g_\eps +8\pi \frac{h_\eps- \hb_\eps}{r+ \eps}g_\eps \del_r h_\eps
\\
& \quad-8\pi c^2 e^{\nu_\eps + \lambda_\eps} \big( h_\eps^2+(r+ \eps) \, \hb_\eps \del_r h_\eps +4\hb_\eps h_\eps(h_\eps- \hb_\eps)^2\big),
\\
\del_rf_\eps& = - \frac{3}{2(r+ \eps)^2}(g_\eps- \gb_\eps)\del_r h_\eps+ \Big(\frac{5}{2(r+ \eps)^2}(h_\eps- \hb_\eps)+ \frac{\del_r h_\eps}{2(r+ \eps)} \Big)\del_rg_\eps 
\\
& \quad + \frac{h_\eps- \hb_\eps}{2(r+ \eps)} \del_{rr}g_\eps + \frac{6}{(r+ \eps)^3}(g_\eps- \gb_\eps)(h_\eps- \hb_\eps)
\\
& \quad- \frac{c^2}{2}e^{\nu_\eps + \lambda_\eps} \Big(\del_r h_\eps+4\pi \, \frac{(h_\eps- \hb_\eps)^3}{r+ \eps} \Big) -2\pi c^2 e^{\nu_\eps + \lambda_\eps} \Big(h_\eps+4\pi \, \hb_\eps(h_\eps- \hb_\eps)^2\Big)\frac{(h_\eps- \hb_\eps)^2}{r+ \eps}.
\endaligned
$$
Therefore, by Lemma \ref{lemma-uniform bound of h and delr h with e} we have 
$
|\del_{rr} \gb_\eps|,|\del_rf_\eps| \leq K_3(u_0,\delta)
$
for some constant $K_3(u_0,\delta) > 0$ independent of $\eps$, hence
\bel{equicontinuous 1}
\aligned
\Big|\big(\del_r\gb_\eps \big)_{(u,\chi_\eps(u; u_1, r_1))} - \big(\del_r\gb_\eps \big)_{(u,\chi_\eps(u; u_1, r_1^\prime))} \Big|& \leq K_3(u_0,\delta)\big(\chi_\eps(u; u_1,r_1) - \chi_\eps(u; u_1,r_1^\prime)\big),
\\
\Big|f_\eps(u,\chi_\eps(u; u_1, r_1)) -f_\eps(u,\chi_\eps(u; u_1, r_1'))\Big|& \leq  K_3(u_0,\delta)\big(\chi_\eps(u; u_1, r_1) - \chi_\eps(u; u_1, r_1^\prime)\big).
\endaligned
\ee
Combining this result with \eqref{g-gb}, in view of \eqref{del_u psi} we obtain 
\bel{estimate del_u psi}
|\del_u\psi_\eps| \leq \frac{K_4(u_0,\delta)}{\delta}|\psi_\eps|+K_5(u_0,\delta)\big(\chi_\eps(u; u_1,r_1) - \chi_\eps(u; u_1,r_1^\prime)\big),
\ee
where 
$$
\aligned
K_4(u_0,\delta):& =  \max\Big(1,2(1+ 2\pi c^2\Acal^2_\eps(u_0,\delta))\Big),
\qquad
K_5(u_0,\delta): =  K_3(u_0,\delta)\big(1+B_5(u_0,\delta)\big).
\endaligned
$$
Therefore, integrating \eqref{estimate del_u psi} yields
\bel{equi.continuity w.r.t. r}
|\psi_\eps(u_1)| \leq e^{\frac{K_4(u_0,\delta)u_1}{\delta}} \Big(\psi_\eps(0)+K_5(u_0,\delta) \int_0^{u_1} \big(\chi_\eps(u; u_1, r_1) - \chi_\eps(u; u_1, r_1^\prime)\big)  du\Big).
\ee
Now, since both characteristics $\chi_\eps(u; u_1, r_1)$ and $\chi_\eps(u; u_1, r_1^\prime)$ are contained in $\Ocal_\delta$, the following inequality holds (thanks to \eqref{g-gb} and the mean value theorem): 
$$
\aligned
\chi_\eps(u; u_1,r_1) - \chi_\eps(u; u_1, r^\prime_1) 
& = (r_1-r_1^\prime) e^{\frac{1}{2} \int_u^{u_1} \big(\frac{g_\eps- \gb_\eps}{r+ \eps} \big)_{(u',\chi_\eps(u^\prime; u_1,s))} \, du^\prime} 
  \leq (r_1-r^\prime_1) e^{\frac{(u_1-u) M_\eps(0)}{\delta^2}}
\endaligned
$$
for some $s\in[r_1^\prime,r_1]$. Hence, from \eqref{equi.continuity w.r.t. r} we have 
$$
|\psi_\eps(u_1)| \leq e^{\frac{K_4(u_0,\delta)u_0}{\delta}} \Big(\eta_0+K_5(u_0,\delta)u_0e^{\frac{u_0  M_\eps(0)}{\delta^2}} \Big)(r_1-r_1^\prime),
$$
where $\eta_0$ is a modulus of continuity of $\del_r h_0$.
\end{proof}


\begin{proof}[Proof of Lemma~\ref{equi-continuity of hb}]
We have 
$
|\del_r \, \hb_\eps| = \bigg|\frac{h_\eps- \hb_\eps}{r+ \eps} \bigg| \leq K_6(u_0,\delta),
$
 for all $(u,r) \in \Ocal_\delta$, in which 
$K_6(u_0,\delta) := \frac{B_5\delta+A(u_0,\delta)}{\delta^2}$. Consequently,  
 $\{\hb_\eps\}$ is equicontinuous with respect to $r$ in $\Ocal_\delta$. 
We now show equi-continuity with respect to $u$. In fact, given $u_1,r_1$ and $\Delta u_1> 0$, let $r_1+ \Delta r_{1,\eps}$ be the value of $r$ at which the characteristic $\chi_\eps(u; u_1,r_1)$ intersects the line $u=u_1- \Delta u_1$. Observe that $\Delta r_{1,\eps} \leq 1/2\Delta u_1$. Then, we have
$$
\aligned
&
\big|\hb_\eps(u_1,r_1) - \hb_\eps(u_1- \Delta u_1,r_1)\big|
\\
& \leq  \Big|\int_{u_1- \Delta u_1}^{u_1} \Big( D_\eps \hb_\eps \Big)_{\chi_{\eps,u_1}} \, du\Big|+ \big|\hb_\eps(u_1- \Delta u_1,r_1+ \Delta r_{1,\eps}) - \hb_\eps(u_1- \Delta u_1,r_1)\big|
\\
& \leq  \frac{1}{\delta} \Big|\int_{u_1- \Delta u_1}^{u_1} \Big( (r+ \eps)D_\eps \hb_\eps \Big)_{\chi_{\eps,u_1}} \, du\Big|+K_6(u_0,\delta)\Delta r_{1,\eps} 
\\
& \leq  \frac{\sqrt{\Delta u_1}}{\delta} \Big(\int_{u_1- \Delta u_1}^{u_1} \Big( (r+ \eps)^2(D_\eps \hb_\eps)^2 \Big)_{\chi_{\eps,u_1}} \Big)^{1/2} + \frac{K_6(u_0,\delta)}{2} \Delta u_1
\leq \frac{1}{2\delta} \sqrt{\frac{M_\eps(0)\Delta u_1}{\pi}} + \frac{K_6(u_0,\delta)}{2} \Delta u_1,
\endaligned
$$
which implies the equi-continuity with respect to $u$ as claimed.
\end{proof}
 

\begin{thebibliography}{99}

\bibitem{Arbey}  {\sc  A. Arbey, J.-F. Coupechoux}, Cosmological scalar fields and Big-Bang nucleosynthesis, 
J. Cosmol. Astropart. Phys.11 (2019), 038. 

\bibitem{Bahamonde} 
{\sc   S. Bahamonde, C. G. B\"ohmer, S. Carloni, E. J. Copeland, W. Fang, N. Tamanini}, 
Dynamical systems applied to cosmology: Dark energy and modified gravity, 
Phys. Rep. 775 (2018), 1--122.

\bibitem {Brady-etal} {\sc P.R. Brady, C. M. Chambers, and S.M.C.V. Goncalves}, 
Phases of massive scalar field collapse, 
Phys. Rev. D 56 (1997), R6057--R6061. 

\bibitem{Brady} {\sc P.R. Brady}, 
Self-similar scalar field collapse: naked singularities and critical behavior, 
Phys. Rev. D 51 (1995), 4168.

\bibitem{Chae} {\sc D. Chae}, 
Global existence of spherically symmetric solutions to the coupled Einstein and nonlinear Klein-Gordon system, 
Class. Quant. Grav. 18 (2001) 4589--4605.

\bibitem{Chae2} {\sc D. Chae}, Global existence of solutions to the coupled Einstein and Maxwell-Higgs system in the spherical symmetry,
Ann. Henri Poincar\'e 4 (2003), 35--62.

\bibitem{Choptuik} {\sc M.W. Choptuik}, 
Universality and scaling in gravitational collapse of a massless scalar field, 
Phys. Rev. Lett. 70 (1993), 9.

\bibitem{Chr}{\sc D. Christodoulou},
The problem of a self-gravitating scalar field,
Commun. Math. Phys. 105 (1986), 337--361.

\bibitem{Chr2}{\sc D. Christodoulou},
Global existence of generalized solutions of the spherically symmetric Einstein-scalar equations in the large, 
Commun. Math. Phys. 106 (1986), 587--621. 

\bibitem{Chr-naked} {\sc D. Christodolou}, 
Examples of naked singularity formation in the gravitational collapse of a scalar field, 
Ann. Math. 140 (1994), 607--653. 

\bibitem{Costa-etal} {\sc J.L. Costa, A. Alho, and J. Natario}, 
The problem of a self-gravitating scalar field with positive cosmological constant, 
Ann. Henri Poincar\'e 14 (2013), 1077--1107.

\bibitem{Costa-Mena} {\sc J.L. Costa and F. C. Mena}, 
The spherically symmetric Einstein-scalar field system with a positive cosmological constant, 
J. Hyperbolic Diff. Equa. 18 (2021), 311--341.

\bibitem{Goldwirth-Piran} {\sc D.S. Goldwirth and T. Piran}, 
Gravitational collapse of massless scalar field and cosmic censorship, 
Phys. Rev. D 36 (1987), 3575 .

\bibitem{Honda}{\sc  E.P. Honda and M.W. Choptuik},
Fine structure of oscillons in the spherically symmetric $\varphi^4$ Klein-Gordon model, 
Phys. Rev. D 65 (2002), 084037.

\bibitem{IonescuPausader}{\sc A.D. Ionescu and B. Pausader,} 
The Einstein-Klein-Gordon coupled system: global stability of the Minkowski solution,  
Princeton University Press, Princeton, NJ, 2021.  

\bibitem{LeFloch-Ma-1} {\sc P.G. LeFloch and Y. Ma},
The global nonlinear stability of Minkowski space for self-gravitating massive fields. The wave-Klein-Gordon model, 
Comm. Math. Phys. 346 (2016) 603--665.  

\bibitem{LeFloch-Ma-2} {\sc P.G. LeFloch and Y. Ma}, 
{\sl The global nonlinear stability of Minkowski space for self-gravitating massive fields,}
World Scientific Press, Singapore, 2017. 

\bibitem{LeFloch-Ma-3} {\sc P.G. LeFloch and Y. Ma}, 
{\it Nonlinear stability of self-gravitating massive fields,} 
ArXiv:171210045. 

\bibitem{Boson}  {\sc S. L. Liebling and C. Palenzuela}, 
Dynamical boson stars, 
Liv. Rev. Relat. 20 (2017), 5.  

\bibitem{Michel-Moss} {\sc F. Michel and I G. Moss},  
Relativistic collapse of axion stars, 
Phys. Lett. B 785 (2018), 9--13. 

\bibitem{Okawa} 
{\sc H. Okawa, V. Cardoso and P. Pani}, 
Collapse of self-interacting fields in asymptotically flat spacetimes: Do self-interactions render Minkowski spacetime unstable?,  
Phys. Rev. D 89 (2014), 041502.

\bibitem{Urena} {\sc  A. Ure\~na-L\'opez and M. J. Reyes-Ibarra}, 
On the dynamics of a quadratic scalar field potential,  
Int. Jour. Mod. Phys. D 18 (2009), 621--634.

\bibitem{Zhang-Lu} {\sc X. Zhang and H. Lu}, 
Critical behavior in a massless scalar field collapse with self-interaction potential, 
Phys. Rev. D 91 (2015), 044046.

\end{thebibliography}
\end{document}